\newcommand{\singlespacing}{\let\CS=
\@currsize\renewcommand{\baselinestretch}{1}\tiny\CS}
\newcommand{\singlespacingplus}{\let\CS=
\@currsize\renewcommand{\baselinestretch}{1.25}\tiny\CS}
\newcommand{\doublespacing}{\let\CS=
\@currsize\renewcommand{\baselinestretch}{1.75}\tiny\CS}
\newcommand{\extradoublespacing}{\let\CS=
\@currsize\renewcommand{\baselinestretch}{1.9}\tiny\CS}
\newcommand{\nicenicespacing}{\let\CS=
\@currsize\renewcommand{\baselinestretch}{1.9}\tiny\CS}
\newcommand{\draftspacing}{\let\CS=
\@currsize\renewcommand{\baselinestretch}{2.0}\tiny\CS}
\newcommand{\hugedraftspacing}{\let\CS=
\@currsize\renewcommand{\baselinestretch}{2.4}\tiny\CS}
\newcommand{\niceonespacing}{\let\CS=\@currsize\renewcommand{\baselinestretch}{1.1}\tiny\CS}
\newcommand{\nicetwospacing}{\let\CS=\@currsize\renewcommand{\baselinestretch}{1.2}\tiny\CS}
\newcommand{\nicethreespacing}{\let\CS=\@currsize\renewcommand{\baselinestretch}{1.3}\tiny\CS}
\newcommand{\singlespacingplusplus}{\let\CS=\@currsize\renewcommand{\baselinestretch}{1.35}\tiny\CS}
\newcommand{\nicefourspacing}{\let\CS=\@currsize\renewcommand{\baselinestretch}{1.4}\tiny\CS}
\newcommand{\nicefivespacing}{\let\CS=\@currsize\renewcommand{\baselinestretch}{1.5}\tiny\CS}
\newcommand{\nicesixspacing}{\let\CS=\@currsize\renewcommand{\baselinestretch}{1.6}\tiny\CS}
\newcommand{\nicesevenspacing}{\let\CS=\@currsize\renewcommand{\baselinestretch}{1.7}\tiny\CS}
\newcommand{\niceeightspacing}{\let\CS=\@currsize\renewcommand{\baselinestretch}{1.8}\tiny\CS}
\newcommand{\niceninespacing}{\let\CS=\@currsize\renewcommand{\baselinestretch}{1.9}\tiny\CS}
\def\@cite#1#2{[#1\if@tempswa , #2\fi]}
\def\@citex[#1]#2{\if@filesw\immediate\write\@auxout{\string\citation{#2}}\fi
  \def\@citea{}\@cite{\@for\@citeb:=#2\do
    {\@citea\def\@citea{,\linebreak[0]}\@ifundefined
       {b@\@citeb}{{\bf ?}\@warning
       {Citation `\@citeb' on page \thepage \space undefined}}%
\hbox{\csname b@\@citeb\endcsname}}}{#1}}
\def\@cite#1#2{[#1\if@tempswa , #2\fi]}
\def\@citex[#1]#2{\if@filesw\immediate\write\@auxout{\string\citation{#2}}\fi
  \def\@citea{}\@cite{\@for\@citeb:=#2\do
    {\@citea\def\@citea{,\kern1pt\linebreak[0]}\@ifundefined
       {b@\@citeb}{{\bf ?}\@warning
       {Citation `\@citeb' on page \thepage \space undefined}}%
\hbox{\csname b@\@citeb\endcsname}}}{#1}}
\def\mmmddyyyy{\ifcase\month\or Jan\or Feb\or Mar\or Apr\or May\or Jun\or Jul\or
  Aug\or Sep\or Oct\or Nov\or Dec\fi \space\number\day, \number\year}
\def\hhmm{\ifnum\hour<10 0\fi\number\hour :%
  \ifnum\minutes<10 0\fi\number\minutes}
\newtheorem{prop}{Proposition}[section]
\newtheorem{theorem}[prop]{Theorem}
\newtheorem{cor}[prop]{Corollary}
\newtheorem{lemma}[prop]{Lemma}
\newtheorem*{conventionA}{Convention A}
\newtheorem{claim}[prop]{Claim}
\newtheorem{definition}[prop]{Definition}
\def\calC{\mathcal{C}}
\def\calS{\mathcal{S}}
\newcommand{\revnot}[1]{\overleftarrow{#1}}
\newcommand{\comment}[1]{}
\newcommand{\electionsystem}{{\cal{E}}}
\newcommand{\naturals}{\mathbb{N}}
\newcommand{\win}{{\it win}}
\newcommand{\ac}{{\rm AC}}
\newcommand{\dc}{{\rm DC}}
\newcommand{\av}{{\rm AV}}
\newcommand{\dv}{{\rm DV}}
\newcommand{\bv}{{\rm BV}}
\newcommand{\p}{{\rm P}}
\newcommand{\np}{{\rm NP}}
\newcommand{\xthreec}{{\rm X3C}}
\newcommand{\calR}{{\cal R}}
\newcommand{\score}{{\ensuremath{\mathrm{score}}}}
\newcommand{\names}{{\ensuremath{\mathrm{names}}}}
\newcommand{\scd}{{\ensuremath{\mathrm{sc}}}}
\newcommand{\df}{{\ensuremath{\mathrm{df}}}}
\newcommand{\copeland}{{\ensuremath{\mathrm{Copeland}}}}
\newcommand{\origllull}{{\ensuremath{\mathrm{OriginalLlull}}}}
\newcommand{\thetatwo}{{{\Theta_2^\mathrm{p}}}}
\begin{document}
\title{Multimode Control Attacks on Elections\thanks{Also appears as URCS-TR-2010-960.}}
\author{Piotr Faliszewski \\
Dept.~of Computer Science\\
        AGH University of Science \\
        and Technology,  Krak\'ow \\
        Poland
\and
        Edith Hemaspaandra \\
Dept.~of Computer Science\\
        Rochester Institute of Technology \\
        Rochester, NY 14623\\ USA
\and
        Lane A. Hemaspaandra \\
Dept.~of Computer Science\\
        University of Rochester \\
        Rochester, NY 14627 \\ USA
}
\date{July 11, 2010}

\maketitle

\begin{abstract}
  In 1992, Bartholdi, Tovey, and Trick
  opened the study of control attacks on elections---attempts to
  improve the election outcome by such actions as adding/deleting
  candidates or voters.  That work has led to many results on how
  algorithms can be used to find attacks on elections and how
  complexity-theoretic hardness results can be used as shields against
  attacks. However, all the work in this line has assumed that the
  attacker employs just a single type of attack.  In this paper, we
  model and study the case in which the attacker launches a
  multipronged (i.e., multimode) attack.  We do so to more
  realistically capture the richness of real-life settings. For
  example, an attacker might simultaneously try to suppress some
  voters, attract new voters into the election, and introduce a
  spoiler candidate. Our model provides a unified framework for such
  varied attacks, and by constructing polynomial-time multiprong
  attack algorithms we prove that for various election systems even
  such concerted, flexible attacks can be perfectly planned in
  deterministic polynomial time.
\end{abstract}

\section{Introduction}\label{sec:intro}

Elections are a central model for collective decision-making: Actors'
(voters') preferences among alternatives (candidates) are input to the
election rule and a winner (or winners in the case of ties) is
declared by the rule.  Bartholdi, 
Orlin, Tovey, and
Trick
initiated a line of research whose goal is to protect elections from
various attacking actions intended to skew the election's 
results.  Bartholdi, Orlin,
Tovey, and Trick's strategy for achieving this goal was to show that
for various election systems and attacking actions, even seeing
whether for a given set of votes such an attack is possible is
$\np$-complete.  Their papers~\cite{bar-tov-tri:j:manipulating,bar-oli:j:polsci:strategic-voting,bar-tov-tri:j:control}
consider actions such as voter manipulation (i.e., situations where a
voter misrepresents his or her vote to obtain some goal) and various
types of election control (i.e., situations where the attacker is
capable of modifying the structure of an election, e.g., by adding or
deleting either voters or candidates). Since then, many researchers
have extended Bartholdi, Orlin, Tovey, and Trick's work by providing new
models, new results, and new perspectives.  But to the best 
of our knowledge, until now no one 
has considered the situation in which an attacker combines multiple
standard attack types into a single attack---let us call that a
\emph{multipronged} (or \emph{multimode}) \emph{attack}.

Studying multipronged control is a step in the direction of more
realistically modeling real-life scenarios. Certainly, in real-life
settings an attacker would not voluntarily limit himself or herself to
a single type of attack but rather would use all available means of
reaching his or her goal. For example, an attacker interested in some
candidate $p$ winning might, at the same time, intimidate $p$'s most
dangerous competitors so that they would withdraw from the election,
and encourage voters who support $p$ to show up to vote. In this paper
we study the complexity of such multipronged control
attacks.\footnote{In fact, our framework of multiprong control
  includes the 
unpriced bribery 
of 
\cite{fal-hem-hem:j:bribery}
  and can be extended to include manipulation.}

Given a type of multiprong control, we seek to analyze its complexity.
In particular, we try to show either that one can 
compute in polynomial time an optimal attack of that control type,
or that even recognizing the existence of an attack is $\np$-hard. 
It
is particularly interesting to ask about the complexity of a
multipronged attack whose components each have efficient algorithms.
We are interested in whether such a combined attack (a)~becomes
computationally hard, or (b)~still has a polynomial-time algorithm.
Regarding the (a) case, we give an example of a natural election
system that displays this behavior.
Our paper's core work studies the (b) case and shows that
even attacks having multiple 
prongs can in many cases be planned with
perfect efficiency.  Such results yield as immediate consequences all
the individual efficient attack algorithms for each prong, and as such
allow a more compact presentation of results and more compact proofs.
But they go beyond that: They show that the interactions between the
prongs can be managed without such cost as to move beyond polynomial
time.

\smallskip\noindent\textbf{Related work.}  Since the seminal 
paper of
Bartholdi, Tovey, and Trick~\cite{bar-tov-tri:j:control}, 
much research
has been dedicated to studying the complexity of control in elections.
Bartholdi, Tovey, and Trick~\cite{bar-tov-tri:j:control}
considered
constructive control only, i.e., scenarios where the goal of the
attacker is to ensure some candidate's victory.  Hemaspaandra,
Hemaspaandra, and Rothe~\cite{hem-hem-rot:j:destructive-control}
extended their work to the destructive case, i.e., scenarios in which
the goal is to prevent someone from winning. 

A central but elusive goal of
control research is finding a natural election system (with a
polynomial-time winner algorithm) that is resistant to all the
standard types of control, i.e., for which all the types of control
are $\np$-hard.  Hemaspaandra, Hemaspaandra, and
Rothe~\cite{hem-hem-rot:j:hybrid} showed that there exist highly
resistant artificial election systems. Faliszewski et
al.~\cite{fal-hem-hem-rot:j:llull} then showed that 
the natural system known as Copeland voting is not
too far from the goal mentioned above.  And Erd\'{e}lyi, Rothe, and
Nowak~\cite{erd-now-rot:j:sp-av} then showed a system with even more
resistances than Copeland, but in a slightly nonstandard voter model
(see~\cite{bau-erd-hem-hem-rot:btoappear:computational-apects-of-approval-voting}
for discussion 
and~\cite{erd-pir-rot:t:fallback,erd-pir-rot:t:bucklin,men:t:range-voting} 
for some related follow-up work).

Recently, researchers also started focusing on the parameterized
complexity of control in elections. Faliszewski et
al.~\cite{fal-hem-hem-rot:j:llull} provided several fixed-parameter
tractability results.  Betzler and
Uhlmann~\cite{bet-uhl:j:parameterized-complecity-candidate-control} and
Liu et al.~\cite{fen-liu-lua-zhu:j:parameterized-control} showed
so-called 
W[1]- and W[2]-hardness results for control under various
voting rules. In response to the conference version of
the present paper~\cite{fal-hem-hem:c:multimode}, 
Liu and Zhu conducted a parameterized-complexity study
of control in maximin elections~\cite{liu-zhu:j:maximin}.

Going in a somewhat different direction, Meir et
al.~\cite{mei-pro-ros-zoh:j:multiwinner} bridged the notions of
constructive and destructive control by considering utility
functions, and in this model obtained control results for multiwinner
elections. In multiwinner elections the goal is to elect a whole group
of people (consider, e.g., parliamentary elections) rather than just a
single person. 
Elkind, Faliszewski, and Slinko~\cite{elk-fal-sli:c:cloning} and
Chevaleyre et al.~\cite{che-mau-mon-lan:c:possible-winners-adding} 
considered two types of problems related to control by adding candidates for 
the case where it is not known how the voters would rank
the added candidates.

Faliszewski et al.~\cite{fal-hem-hem-rot:c:single-peaked-preferences}
 and Brandt et al.~\cite{bra-bri-hem-hem:c:sp2} 
have studied control (and manipulation and bribery) in 
so-called single-peaked domains, a model of overall electorate 
behavior from political science.

There is a growing body of work on manipulation that regards frequency
of (non)hardness of election problems (see,
e.g.,~\cite{con-san:c:nonexistence,fri-kal-nis:c:quantiative-gib-sat,dob-pro:c:two-voters,xia-con:c:frequently-manipulable,xia-con:c:generalized-scoring,wal:c:where-hard-veto}).
This work studies whether a given $\np$-hard election problem (to date
only manipulation/winner problems have been studied, not control 
problems) can be often
solved in practice (assuming some distribution of votes). Such results
are of course very relevant when one's goal is to protect elections
from manipulative actions.  However, in this paper we typically take
the role of an attacker and design control algorithms that are fast on
\emph{all} instances. 

Faliszewski et al.~\cite{fal-hem-hem-rot:b:richer,fal-hem-hem:jtoappear:cacm-survey}
provide an overview of some complexity-of-election issues.

\smallskip\noindent\textbf{Organization.}  In
Section~\ref{sec:elections} we present the standard model of elections
and describe relevant voting systems. In Section~\ref{sec:control} we
introduce multiprong control, provide initial results, and show how
existing immunity, vulnerability, and resistance results interact with
this model.  In Section~\ref{sec:maximin} we provide a complexity
analysis of voter and 
candidate control in maximin elections, showing how
multiprong control is useful in doing so.  
We also show
that maximin has an interesting relation to Dodgson elections:
No candidate whose Dodgson score is more than $m^2$ times
that of the Dodgson winner(s) can be a maximin winner.
In Section~\ref{sec:fpt} we
consider fixed-parameter complexity of multiprong control, using
as our parameter the number of candidates.  
Section~\ref{sec:conclusion} provides
conclusions and open problems.

\section{Preliminaries}
\label{sec:elections}

\textbf{Elections.}  An election is a pair $(C,V)$, where $C = \{c_1,
\ldots, c_m\}$ is the set of candidates and $V = (v_1, \ldots, v_n)$
is a collection of voters. Each voter $v_i$ is represented by his or
her preference list.\footnote{We also assume that each voter has a
  unique name.  However, all the election systems we consider
  here---except for the election system of
  Theorem~\ref{thm:origllull}---are anonymous and thus disregard
  voter names and the order of the votes.}
For example, if we have three candidates, $c_1$,
$c_2$, and $c_3$, a voter who likes $c_1$ most, then $c_2$, and then
$c_3$ would have preference list $c_1 > c_2 >
c_3$.\footnote{Preference lists are also called preference orders, and
  in this paper we will use these two terms interchangeably.}  Given
an election $E = (C,V)$, by $N_E(c_i,c_j)$, where $c_i,c_j \in C$ and
$i \neq j$, we denote the number of voters in $V$ who prefer $c_i$ to
$c_j$.  We adopt the following convention for specifying preference
lists.
\begin{conventionA}
\label{conv:A}
Listing some set $D$ of candidates as an item in a preference list
means listing all the members of this set in some fixed, arbitrary
order, and listing $\revnot{D}$ means listing all the members of $D$,
but in the reverse order. 
\end{conventionA}

An election system is 
a mapping that given an election $(C,V)$
outputs a set $W$, satisfying $W \subseteq C$, called
the winners of the election.

We focus on the following five voting systems: plurality, Copeland,
maximin, approval, and Condorcet. (However, in
Sections~\ref{sec:dodgson} and~\ref{sec:fpt} we take a detour through
some other systems.)  Each of plurality, Copeland, maximin, and
approval assigns points to candidates and elects those that receive
the most points. Let $E = (C,V)$ be an election, where $C = \{c_1,
\ldots,c_m\}$ and $V = (v_1, \ldots, v_n)$.  In plurality, each
candidate receives a single point for each voter who ranks him or her
first. In maximin, the score of a candidate $c_i$ in $E$ is defined as
$\min_{c_j \in C - \{c_i\}}N_E(c_i,c_j)$. For each rational $\alpha$,
$0 \leq \alpha \leq 1$, in Copeland$^\alpha$ candidate $c_i$ receives
$1$ point for each candidate $c_j$, $j \neq i$, such that
$N_E(c_i,c_j) > N_E(c_j,c_i)$ and $\alpha$ points for each candidate
$c_j$, $j\neq i$, such that $N_E(c_i,c_j) = N_E(c_j,c_i)$. That is,
the parameter $\alpha$ describes the value of ties in head-to-head
majority contests. In approval, instead of preference lists each voter's
ballot is 
a 0-1 vector, where each entry denotes whether the voter approves
of the corresponding candidate (gives the corresponding candidate a
point). For example, vector $(1,0,0,1)$ means that the voter approves
of the first and fourth candidates, but not the second and 
third.
We use $\score_E(c_i)$ to denote the score of candidate $c_i$ in
election $E$ (the particular election system used will always be clear
from context).

A candidate $c$ is a Condorcet winner of an election
$E = (C,V)$ if for each other candidate $c' \in C$ it holds that
$N_E(c,c') > N_E(c',c)$.  Clearly, each election has at most one
Condorcet winner.  (Not every election has a Condorcet winner.
However, as our notion of an election allows outcomes in which no one
wins, electing the Condorcet winner when there is one and otherwise
having no winner is a quite legal election system.)

\smallskip
\noindent
\textbf{Computational complexity.}  We use standard notions of
complexity theory, as presented, e.g., in the textbook of
Papadimitriou~\cite{pap:b:complexity}. We assume that the reader is
familiar with the complexity classes $\p$ and $\np$, polynomial-time many-one
reductions, and the notions of $\np$-hardness and $\np$-completeness.
$\naturals$ will denote
$\{0,1,2,\ldots\}$.

Most of the $\np$-hardness proofs in this paper follow by a reduction
from the well-known $\np$-complete problem \emph{exact cover by
  3-sets}, known for short
as $\xthreec$ (see, e.g.,~\cite{gar-joh:b:int}).  In
$\xthreec$ we are given a pair $(B,\calS)$, where $B = \{b_1, \ldots,
b_{3k}\}$ is a set of $3k$ elements and $\calS = \{S_1, \ldots, S_n\}$
is a set of $3$-subsets of $B$, and we ask whether there is a subset $S'$
of exactly $k$ elements of $\calS$ such that their union is exactly
$B$.  We call such a set $S'$ an exact cover of~$B$.

In Section~\ref{sec:fpt}, we consider the fixed-parameter complexity of
multiprong control. The idea of fixed-parameter complexity is to
measure the complexity of a given decision problem with respect to
both the instance size (as in the standard complexity theory) and some
parameter of the input (in our case, the number of candidates
involved).  For a problem to be said to 
be fixed-parameter tractable, i.e., to belong to the 
complexity class FPT, we as is standard require that the problem
can be solved by an algorithm running in 
time $f(j)n^{O(1)}$, where $n$ is the size of
the encoding of the given instance, $j$ is the value of the parameter
for this instance, and $f$ is some function. Note that $f$
does not have to be polynomially bounded or even 
computable.  However, in all FPT claims in this paper, $f$ is 
a computable function.  That is, our algorithms actually achieve 
so-called strongly uniform fixed-parameter
tractability.
We point readers interested in parameterized
complexity to, for example, 
Niedermeier's book~\cite{nie:b:invitation-fpt}.

\section{Control and Multiprong Control}
\label{sec:control}

In this section we introduce multiprong control, that is, control
types that combine several standard types of control. We first provide
the definition, then proceed to analyzing general properties of
multiprong control, then consider multiprong control for
election systems for which the complexity of single-prong control has
already been established, and finally give an example of an
election system for which multiprong control becomes harder than
any of its constituent prongs (assuming $\p\neq\np$).

\subsection{The Definition}

We consider combinations of control by
adding/deleting candidates/voters\footnote{Other control types,
  defined by Bartholdi, Tovey, and Trick~\cite{bar-tov-tri:j:control}
  and refined by Hemaspaandra, Hemaspaandra, and
  Rothe~\cite{hem-hem-rot:j:destructive-control}, regard various types
  of partitioning candidates and voters.}  and by bribing
voters. Traditionally, bribery has not been considered a type of
control but it fits the model very naturally and strengthens our
results.

In discussing control problems, we must be very clear about
whether the goal of the attacker is to make his or her preferred
candidate \emph{the only winner}, or is to make his or her
preferred candidate \emph{a winner}.  To be clear on this,
we as is standard will use the term ``unique-winner model'' for the
model in which the goal is to make one's preferred candidate the one
and only winner, and we will use the term ``nonunique-winner model''
for the approach in which the goal is to make one's preferred
candidate be a winner.  (Note that 
if exactly one person wins, he or she most certainly is considered 
to have satisfied the control action in the nonunique-winner 
model.  The ``nonunique'' in the model name merely means 
we are not \emph{requiring} that winners be unique.)

The destructive cases of each of these are, 
in the nonunique-winner model, blocking one's despised
candidate from being a unique winner,\footnote{We will often use the
  phrase ``a unique winner,'' as we just did.
  The reason we write ``a unique winner'' rather than ``the unique
  winner'' is to avoid the impression that the election necessarily
  has some (unique) winner.}  and in the unique-winner model, blocking
one's despised candidate from being a winner.  We take the
unique-winner model as the default in this paper, as is the most
common model in studies of control.

\begin{definition}\label{def:multiprong}
  Let $\electionsystem$ be an election system. In the
  unique-winner,\footnote{One can easily adapt the definition to the
    nonunique-winner model.} constructive  $\electionsystem$-AC+DC+AV+DV+BV
  control problem we are given:
  \begin{enumerate}
  \item[(a)] two disjoint sets of candidates, $C$ and $A$,
  \item[(b)] two disjoint collections of voters, $V$ and $W$,
    containing voters with preference lists over $C \cup A$,
  \item[(c)] a preferred candidate $p \in C$, and
  \item[(d)] five nonnegative integers, $k_{\ac}$, $k_{\dc}$,
    $k_{\av}$, $k_{\dv}$, and $k_\bv$.
  \end{enumerate}
  We ask whether it is possible to find two sets, $A' \subseteq A$ and
  $C' \subset C$, and two subcollections of voters, $V' \subseteq V$
  and $W' \subseteq W$, such that:
  \begin{enumerate}
  \item[(e)] it is possible to ensure that $p$ is a unique winner
    of $\electionsystem$ election $((C-C')\cup A', (V-V') \cup W')$ via
    changing preference orders of (i.e., bribing) at most $k_\bv$ voters in $(V-V')
    \cup W'$,
  \item[(f)] $p \notin C'$, and
  \item[(g)] $\|A'\| \leq k_\ac$, $\|C'\| \leq k_\dc$, $\|W'\| \leq
    k_\av$, and $\|V'\| \leq k_\dv$.
  \end{enumerate}
  In the unique-winner, destructive variant of the problem, we replace
  item (e) above with: ``it is possible to ensure that $p$ is
  \emph{not} a unique winner of $\electionsystem$ election
  $((C-C')\cup A', (V-V') \cup W')$ via changing preference orders of
  at most $k_\bv$ voters in $(V-V') \cup W'$.''  (In addition, in the
  destructive variant we refer to $p$ as ``the despised candidate''
  rather than as ``the preferred candidate,'' and we often denote him
  or her by $d$.)
\end{definition}

The phrase AC+DC+AV+DV+BV in the problem name corresponds to four of
the standard types of control: adding candidates (AC), deleting
candidates (DC), adding voters (AV), deleting voters (DV), and to
(unpriced) bribery (BV); we will refer to these five types of control
as the \emph{basic} types of control. We again remind the reader
that traditionally bribery is not a type of control but we will call
it a basic type of control for the sake of uniformity and throughout
the rest of the paper we will consider it as such.

Instead of considering all of AC, DC, AV, DV, and BV, we often 
are interested in some subset of them and so we consider special
cases of the AC+DC+AV+DV+BV problem. For example, we write DC+AV 
to refer to a variant of the AC+DC+AV+DV+BV problem
where only deleting candidates and adding voters is allowed.
As part of our model we assume that in each such variant, only the 
parameters relevant to the prongs are part of the input.
So, for example, DC+AV would have $k_\dc$, $k_\av$, $C$,
$V$, $W$, and $p$ as the (only) parts of its input.  And the 
``missing'' parts (e.g., for DC+AV, the missing 
parts are $A$, $k_\ac$, $k_\dv$, and $k_\bv$)
are treated in the obvious way 
in evaluating the formulas 
in Definition~\ref{def:multiprong}, namely,
missing sets are treated as $\emptyset$ and missing constants 
are treated as~$0$.  
If we name only a single type of
control, we in effect degenerate to 
one of the standard control problems. We
for historical reasons consider also a special case of the AC control
type, denoted AC$_{\rm u}$ (and called control by adding an unlimited
number of candidates), where there is no limit on the number of
candidates to add, i.e., $k_\ac = \|A\|$.

There is at least one more way in which we could define multiprong
control. The model in the above definition can be called the
\emph{separate-resource model}, as the extent to which we can use each
basic type of control is bounded separately. In the
\emph{shared-resource model} one pool of action allowances must be
allocated among the allowed control types (so in the definition above
we would replace $k_\ac, k_\dc, k_\av$, and $k_\dv$ with a single
value, $k$, and require that $\|C'\|+\|D'\|+\|V'\|+\|W'\| + 
\mathit{the\hbox{-}number\hbox{-}of\hbox{-}bribed\hbox{-}voters} \leq k$).
Although one could make various
arguments about which model is more appropriate,
their computational complexity is related.

\begin{theorem}
  If there is a polynomial-time algorithm for a given variant of
  multiprong control in the separate-resource model then there is one
  for the shared-resource model as well.
\end{theorem}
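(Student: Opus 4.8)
The plan is to reduce the shared-resource problem to polynomially many instances of the separate-resource problem, using the assumed polynomial-time separate-resource algorithm as a black box.

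First I would observe that, although the shared budget $k$ may be encoded in binary and hence be exponentially large relative to the input, the \emph{useful} budget for each individual prong is bounded by the input size. One can never add more than $\|A\|$ candidates, never delete more than $\|C\|-1$ of them (since $p \notin C'$), never add more than $\|W\|$ voters, never delete more than $\|V\|$ voters, and never bribe more voters than are present in the resulting election, i.e., at most $\|V\|+\|W\|$. Writing $M_\ac, M_\dc, M_\av, M_\dv, M_\bv$ for these natural caps, each such $M$ is at most the size of the input, so replacing any per-prong allocation exceeding its cap by the cap itself changes nothing.

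Next, given a shared-resource instance with budget $k$, I would enumerate every tuple $(k_\ac, k_\dc, k_\av, k_\dv, k_\bv)$ in which each $k$-value lies between $0$ and the minimum of $k$ and its corresponding cap, subject to $k_\ac + k_\dc + k_\av + k_\dv + k_\bv \leq k$, and run the separate-resource algorithm on the instance carrying these five values as its per-prong budgets, with $C, A, V, W, p$ unchanged. The shared-resource algorithm answers ``yes'' exactly when at least one of these calls answers ``yes.'' For variants naming only a subset of the prongs, only the corresponding budgets are split and the remaining ones are treated as $0$, exactly as in the conventions following Definition~\ref{def:multiprong}.

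Correctness follows in both directions. If the shared-resource instance is a yes-instance, then fixing a successful attack and letting $(\|A'\|, \|C'\|, \|W'\|, \|V'\|, b)$ be the amounts it actually uses (with $b$ the number of bribed voters), these values are nonnegative, are each at most the corresponding cap, and sum to at most $k$; hence they appear among the enumerated tuples, and the same $A', C', V', W'$ witness a yes-answer for that separate-resource instance. Conversely, if some enumerated tuple yields a separate-resource yes-instance, then the witnessing attack uses at most $k_\ac, k_\dc, k_\av, k_\dv, k_\bv$ of the respective prongs, so its total usage is at most $k_\ac + k_\dc + k_\av + k_\dv + k_\bv \leq k$, certifying the shared-resource instance. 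The destructive variant is handled identically, since the only change is the internal winner test, which is performed inside the black-box algorithm. Finally, because each cap is polynomially bounded, the number of enumerated tuples is at most $(M_\ac+1)(M_\dc+1)(M_\av+1)(M_\dv+1)(M_\bv+1)$, which is polynomial in the input size, so the procedure makes polynomially many calls to a polynomial-time algorithm and runs in polynomial time. The one point requiring care---and the only place the argument could go wrong---is the binary encoding of $k$: naively ranging over all compositions of $k$ would be exponential, which is precisely why capping each part at its structural maximum is essential.
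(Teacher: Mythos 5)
Your proposal is correct and follows essentially the same approach as the paper: a disjunctive truth-table reduction that enumerates polynomially many ways to split the shared budget into per-prong budgets and calls the separate-resource algorithm on each, with the budget first capped by input-size-bounded maxima (the paper does this via ``if $k > \|A\|+\|W\|$ then set $k = \|A\|+\|W\|$'' in its two-prong AC+AV illustration, which it notes generalizes). Your version merely spells out the generalization to all five prongs with explicit per-prong caps, and makes explicit the binary-encoding concern that the paper's capping step handles implicitly.
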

\begin{proof}
  Let $\electionsystem$ be an election system.  We will describe the
  idea of our proof on the example of the constructive
  $\electionsystem$-AC+AV problem. The idea easily generalizes to any
  other set of allowed control actions (complexity-theory savvy
  readers will quickly see that we, in essence, give a disjunctive
  truth-table reduction).

  We are given an instance $I$ of the constructive
  $\electionsystem$-AC+AV problem in the shared-resource model, where
  $k$ is the limit on the sum of the number of candidates and voters
  that we may add. Given a polynomial-time algorithm for the
  separate-resource variant of the problem, we solve $I$ using the
  following method.  (If $k > \|A\| + \|W\|$ then set $k = \|A\| +
  \|W\|$.)  We form a sequence $I_0, \ldots, I_{k}$ of instances of
  the separate-resource variant of the problem, where each $I_\ell$,
  $0 \leq \ell \leq k$, is identical to $I$, except that we are allowed
  to add at most $\ell$ candidates and at most $k-\ell$ voters. We
  accept if at least one of $I_\ell$ is a ``yes''-instance of the
  separate-resource, constructive $\electionsystem$-AC+AV
  problem. Clearly, this algorithm is correct and runs in polynomial
  time.
\end{proof}

It would be interesting to consider a variant of the shared-resource
model where various actions come at different costs (e.g., adding some
candidate $c'$ might be much more expensive---or difficult---than
adding some other candidate $c''$). This approach would be close in
spirit to priced bribery of~\cite{fal-hem-hem:j:bribery}. Analysis of
such priced control is beyond the scope of the current paper.

\subsection{Susceptibility, Immunity, Vulnerability,
  and Resistance}
As is standard in the election-control
(and election-bribery) literature, we consider vulnerability, immunity,
susceptibility, and resistance to control. Let $\electionsystem$ be an
election system and let $\calC$ be a type of control. We say that
$\electionsystem$ is susceptible to constructive (destructive) $\calC$
control if there is a scenario in which effectuating $\calC$ makes
someone become (stop being) a unique winner of some
$\electionsystem$ election $E$. $\electionsystem$ 
is immune to
constructive (destructive) $\calC$ control if $\electionsystem$ 
is not susceptible
to 
constructive (destructive) $\calC$ control.
We say that $\electionsystem$ is vulnerable to 
constructive
(destructive)
$\calC$ control if 
$\electionsystem$ 
is susceptible to 
constructive
(destructive)
$\calC$ and
there is a polynomial-time algorithm that decides the constructive
(destructive) $\electionsystem$-$\calC$ problem.  
Actually, this paper's
vulnerability
algorithms/proofs 
will each go further and will in polynomial time
produce, or will make implicitly clear
how to produce, the successful control
action.  So we in each case are even achieving 
the so-called certifiable vulnerability of 
Hemaspaandra, Hemaspaandra, 
and Rothe~\cite{hem-hem-rot:j:destructive-control}.
  $\electionsystem$ is resistant to constructive
(destructive) $\calC$ control if 
$\electionsystem$ 
is susceptible to 
(destructive) $\calC$ control 
and the
constructive (destructive) $\electionsystem$-$\calC$ problem is
$\np$-hard.

The next three theorems describe how multiprong control problems can
inherit susceptibility, immunity, vulnerability, and resistance from the basic
control types that they are built from.

\begin{theorem}\label{thm:propagate}
  Let $\electionsystem$ be an election system and let $C_1+\cdots+C_k$
  be a variant of multiprong control (so $1 \leq k \leq 5$ and each
  $C_i$ is a basic type of control).  $\electionsystem$ is susceptible
  to constructive (destructive) $C_1+\cdots+C_k$ control if and only
  if $\electionsystem$ is susceptible to at least one of constructive
  (destructive) $C_1, \ldots, C_k$ control.
\end{theorem}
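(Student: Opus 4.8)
The plan is to prove the two directions of the biconditional separately; the reverse direction is essentially immediate, and the forward direction rests on a decomposition together with a discrete intermediate-value argument.

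For the reverse direction, suppose $\electionsystem$ is susceptible to, say, constructive $C_i$ control for some $i$. Then there is an election $E$ in which $p$ is not a unique winner together with a $C_i$ action producing an election in which $p$ is the unique winner. I would simply observe that this same scenario is a $C_1+\cdots+C_k$ attack that happens to exercise only the $C_i$ prong (leaving the action sets of the other prongs empty), so it witnesses susceptibility to $C_1+\cdots+C_k$ control. The destructive case is identical.

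For the forward direction, suppose $\electionsystem$ is susceptible to constructive $C_1+\cdots+C_k$ control, witnessed by an election $E_0$ in which $p$ is not a unique winner and a multiprong action producing an election $E^\ast$ in which $p$ is the unique winner. The key step is to realize this one multiprong transformation as a sequence of single basic controls applied one at a time. Concretely, I would use the fixed order $\ac, \dc, \av, \dv, \bv$: add the candidates of $A'$ one by one, then delete those of $C'$ one by one, then add the voters of $W'$, then delete those of $V'$, and finally bribe the chosen voters one at a time. This produces a chain $E_0, E_1, \ldots, E_t = E^\ast$ in which each $E_{\ell+1}$ arises from $E_\ell$ by a single action of exactly one basic type $C_{j(\ell)} \in \{C_1, \ldots, C_k\}$. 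Each $E_\ell$ is a genuine election: all voters in $V \cup W$ carry preferences over $C \cup A$, so their restrictions to the current candidate set are well-defined, and since the witness satisfies $p \notin C'$ the candidate $p$ survives in every $E_\ell$. Now set $f(\ell) = 1$ if $p$ is the unique winner of $E_\ell$ and $f(\ell) = 0$ otherwise; since $f(0) = 0$ and $f(t) = 1$, there is an index $\ell$ with $f(\ell) = 0$ and $f(\ell+1) = 1$, and the single action taking $E_\ell$ to $E_{\ell+1}$ thus makes $p$ become a unique winner, exhibiting susceptibility to the single control type $C_{j(\ell)}$. The destructive case is symmetric, with $f(0) = 1$ and $f(t) = 0$; here the flipping step makes $d$ stop being a unique winner, and because the witness respects $p \notin C'$, that step, if it is a candidate deletion at all, never deletes $d$ itself and hence is a legal destructive single-prong action.

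The only real obstacle is the bookkeeping of the decomposition: choosing an order in which every intermediate $E_\ell$ is well-formed (so that ``unique winner'' is meaningful at each stage) and checking that each one-step transition is itself a legal instance of its single basic control type. Once the chain is set up correctly, the discrete intermediate-value observation does the remaining work uniformly across the constructive and destructive variants.
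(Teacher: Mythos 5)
Your proposal is correct and takes essentially the same approach as the paper: the ``if'' direction exercises just the one prong, and the ``only if'' direction decomposes the multiprong action into a sequence of single basic-type operations and locates the step at which unique-winnerhood flips, which witnesses susceptibility to that single type. Your version merely makes explicit some bookkeeping (the fixed ordering of operations, well-formedness of intermediate elections, and the legality of the flipping step) that the paper leaves implicit.
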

\begin{proof}
  The ``if'' direction is trivial: The attacker can always choose to
  use only the type of control to which $\electionsystem$ is
  susceptible.  As to the ``only if'' direction, it is not hard to see
  that if there is some input election for which by a $C_1+\cdots+C_k$
  action we can achieve our desired change (of creating or removing
  unique-winnerhood for $p$, depending on the case), then there is
  \emph{some} election (not necessarily our input election) for which
  one of those actions alone achieves our desired change. In essence,
  we can view a control action $A$ of type $C_1+\cdots+C_k$ as a
  sequence of operations, each one of one of the $C_1,\ldots, C_k$
  types, that---when executed in order---transform our input election
  into an election where our goal is satisfied. Thus there is a single
  operation within $A$---and this operation is of one of the types
  $C_1, \ldots, C_k$---that transforms some election $E'$ where our
  goal is not satisfied to some election $E''$ where the goal is
  satisfied.
\end{proof}

In the next theorem we show that if a given election system is
vulnerable to some basic type of control and is immune to another basic type of control, then it
is vulnerable to these two types of control combined. The proof of
this theorem is easy, but we need to be particularly careful as
vulnerabilities and immunities can behave quite unexpectedly. For
example, it seems that we can assume that if an election system is
vulnerable to AV and DV then it should also be vulnerable to BV,
because bribing a particular voter can be viewed as first deleting
this voter and then adding---in his or her place---a voter with the
preference order as required by the briber. (This assumes we
have such a voter among the voters we can add, but when arguing
susceptibility/immunity we can make this assumption.) However, there
is an easy election system that is vulnerable to both AV and DV control, but
that is immune to BV control. This system simply says that in an election
$E=(C,V)$, where $C = \{c_1, \ldots, c_m\}$ and $V = (v_1, \ldots ,
v_n)$, the winner is the candidate $c_i$ such that $n \equiv i-1
\pmod{m}$.\footnote{Of course, this election system is not neutral;
permuting the names of the candidates can change the outcome of an election.}

\begin{theorem}\label{thm:propagate-vulnerability}
  Let $\electionsystem$ be an election system and let
  $C_1+\cdots+C_k+D_1+\cdots+D_\ell$ be a variant of multiprong
  control (so $1 \leq k,\ell \leq 5$ and each $C_i$ and each $D_i$ is
  a basic control type) such that $\electionsystem$ is vulnerable to
  constructive (destructive) $C_1+\cdots+C_k$ control but is immune to
  constructive (destructive) $D_1+\cdots+D_\ell$
  control. $\electionsystem$ is vulnerable to
  $C_1+\cdots+C_k+D_1+\cdots+D_\ell$ control.
\end{theorem}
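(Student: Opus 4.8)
The plan is to show that, under the immunity hypothesis, the $D_1+\cdots+D_\ell$ prongs are useless, so that the combined problem collapses to the already-vulnerable $C_1+\cdots+C_k$ problem. Susceptibility comes for free: since $\electionsystem$ is vulnerable to (hence susceptible to) $C_1+\cdots+C_k$ control, Theorem~\ref{thm:propagate} gives susceptibility to the combined $C_1+\cdots+C_k+D_1+\cdots+D_\ell$ control. The algorithm I would use is simply to strip from a given combined instance the parts of the input relevant only to the $D$-prongs and hand the remaining $C_1+\cdots+C_k$ instance to the assumed polynomial-time algorithm, accepting exactly when it does.

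For correctness, the easy direction is immediate: any successful $C_1+\cdots+C_k$ action is a fortiori a successful combined action (leave the $D$-resources unused). The heart of the matter is the converse, that a successful combined action forces the $C_1+\cdots+C_k$ instance to be a ``yes''-instance. Here I would use that the final election $((C-C')\cup A',(V-V')\cup W')$, together with whatever bribes are applied, depends only on the chosen sets and bribes and not on the order in which the basic operations are performed---exactly the order-independence already exploited in the proof of Theorem~\ref{thm:propagate}. Thus, starting from the input election $E_0$, I can carry out all operations belonging to $C_1,\ldots,C_k$ first, reaching an intermediate election $E_s$, and only then carry out all operations belonging to $D_1,\ldots,D_\ell$, reaching the final election $E_t$ in which (constructive case) $p$ is a unique winner. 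The passage from $E_s$ to $E_t$ is itself a legal $D_1+\cdots+D_\ell$ control action, now applied to the election $E_s$.

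Immunity to constructive $D_1+\cdots+D_\ell$ control says that no such action can make any candidate \emph{become} a unique winner; since constraint~(f) of Definition~\ref{def:multiprong} guarantees $p$ is never deleted, $p$ is present in both $E_s$ and $E_t$, so immunity forces $p$ to have \emph{already} been a unique winner of $E_s$. But $E_s$ is reachable from $E_0$ by a $C_1+\cdots+C_k$ action alone (within the $C$-prong budgets, and with $p$ undeleted), so the $C_1+\cdots+C_k$ instance is a ``yes''-instance. The destructive case is symmetric: if $d$ were still a unique winner of $E_s$, destructive immunity (no $D$-action makes anyone \emph{stop} being a unique winner) would force $d$ to remain a unique winner of $E_t$, contradicting that the combined action dethroned $d$; hence $d$ is already not a unique winner of $E_s$.

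The step demanding the most care---and the one the preceding discussion's warning flags---is applying immunity at the intermediate election $E_s$ rather than at the given input. This is legitimate precisely because immunity is quantified over \emph{all} elections and \emph{all} candidates, so it applies to the $D$-action carrying $E_s$ to $E_t$; and the constraint $p\notin C'$ (respectively $d\notin C'$) is exactly what keeps the relevant candidate in the election so that immunity's ``someone''-quantifier can be instantiated at $p$ (respectively $d$). Since the resulting algorithm invokes the polynomial-time $C_1+\cdots+C_k$ algorithm only once, it runs in polynomial time, establishing vulnerability---and since that algorithm produces the witnessing $C_1+\cdots+C_k$ action, which is simultaneously a witnessing combined action, we even obtain certifiable vulnerability.
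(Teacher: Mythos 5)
Your overall strategy---reduce the combined problem to the $C_1+\cdots+C_k$ problem by arguing that under immunity the $D$-prongs are useless---is the same as the paper's, and your treatment of susceptibility and of the easy direction is fine. But the pivotal step, the claim that the basic operations can always be reordered so that all $C$-prong operations are carried out first on the input election $E_0$, reaching a well-defined intermediate election $E_s$, fails in one case, and it is precisely the case the paper singles out: when BV is among the $C_i$'s and AV is among the $D_i$'s. A successful combined action may bribe voters that are only brought into the election by a $D$-prong addition; such a bribe cannot be executed on $E_0$, so the intermediate election $E_s$ your argument needs may not exist. Note also that this is genuinely more than the ``order-independence'' used in the proof of Theorem~\ref{thm:propagate}: that proof only needs to scan the operations in \emph{some} executable order and locate the single operation that crosses the winner/non-winner threshold; it never needs to execute one whole group as a block before the other, which is what your argument requires. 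In every other placement of BV and AV your reordering is fine (by Definition~\ref{def:multiprong} bribes target only voters in $(V-V')\cup W'$, i.e., voters present in the final election, so deletions cause no trouble, and additions and bribes within the same group can be internally ordered with additions first), but as written your proof silently assumes the problematic case away; the destructive half inherits the same hole.

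The paper repairs this case by normalizing the action---it shows that any successful combined action can be converted into one of type $C_1+\cdots+C_k+\mathrm{AV}$ in which every added voter is also bribed---and then its algorithm differs from yours: it guesses the number $i$ of voters to add, adds $i$ arbitrarily chosen voters from $W$, and runs the $C_1+\cdots+C_k$ algorithm (whose BV prong can then rewrite those padded votes) on each padded instance. Your plain strip-and-run algorithm is, as it happens, still correct, but establishing that requires an argument you do not give: in the final election the added-and-bribed voters appear simply as some collection of additional votes, so immunity to constructive AV control (which follows from immunity to $D_1+\cdots+D_\ell$ control via Theorem~\ref{thm:propagate}) lets one remove \emph{all} added voters, in their final bribed form, from the final election without destroying $p$'s unique winnerhood; after that removal the remaining bribes target only original voters and your reordering argument goes through. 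Without either the paper's padding construction or this extra application of immunity, the correctness of the converse direction is not established.
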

\begin{proof}
  We will give a proof for the constructive case only. The proof for
  the destructive case is analogous.  Let $\electionsystem$ be an
  election system as in the statement of the theorem and let $I$ be an
  instance of constructive
  $\electionsystem$-$C_1+\cdots+C_k+D_1+\cdots+D_\ell$ control, which
  contains election $E = (C,V)$, information about the specifics of
  control actions we can implement, and where the goal is to ensure
  that candidate $p$ is a unique winner.  Let us first consider the
  case where BV is not among $C_1, \ldots, C_k, D_1, \ldots D_\ell$.

  Let us assume that there is a sequence $A$ of control actions of
  types $C_1, \ldots C_k, D_1, \ldots, D_\ell$, such that (a) applying
  $A$ to $E$ is a legal control action within $I$, and (b) applying
  $A$ to $E$ results in an election $E_{C+D}$ where $p$ is the unique
  winner.  (We take $A$ to be an empty sequence if $p$ is a unique
  winner of $E$.) We split the sequence $A$ into a subsequence $A_C$
  that contains exactly the actions of types $C_1, \ldots, C_k$, and a
  subsequence $A_D$ that contains exactly the actions of types $D_1,
  \ldots, D_\ell$. Since BV is not among our control actions, it is
  easy to see that it is possible to apply actions $A_C$ to election
  $E$ to obtain some election $E_C$.  (To see why it is important that
  we do not consider BV, assume that BV is among control types $C_1,
  \ldots, C_k$ and AV is among control types $D_1, \ldots, D_\ell$.
  In this case, $A_C$ might include an action
  that bribes a voter that is added by an action from $A_D$.)

  We claim that $p$ is a unique winner of $E_C$. For the sake of
  contradiction, let us assume that this is not the case (note that
  this implies that $p$ is not a unique winner of $E$). If we apply
  control actions $A_D$ to $E_C$, we reach exactly election $E_{C+D}$,
  where $p$ is the unique winner. Yet, this is a contradiction,
  because we assumed that $\electionsystem$ is immune to
  $D_1+\cdots+D_\ell$, i.e., that there is no scenario where control
  action of type $D_1+\cdots+D_\ell$ makes some candidate a unique
  winner if he or she was not a unique winner before.

  Thus, it is possible to ensure that $p$ is a unique winner by
  actions of type $C_1+\cdots+C_k$ alone. We chose $I$ arbitrarily,
  and thus any instance of
  $\electionsystem$-$C_1+\cdots+C_k+D_1+\cdots+D_\ell$ control can be
  solved by an algorithm that considers control actions of type
  $C_1+\cdots+C_k$ only. This proves that $\electionsystem$ is
  vulnerable to $C_1+\cdots+C_k+D_1+\cdots+D_\ell$ control because, as
  we have assumed, it is vulnerable to $C_1+\cdots+C_k$ control.

  It remains to prove the theorem for the case where BV \emph{is}
  among our control actions. In the case where BV is among the control
  actions but AV is not, or if AV and BV are in the same group of
  actions (i.e., either both are among the $C_i$'s or both are among the
  $D_i$'s), it is easy to see that the above proof still works.
  Similarly, if BV is among the $D_i$'s and AV is among the $C_i$'s, the above
  proof works as well.  The only remaining case is if our allowed
  control types include both BV and AV, where BV is among the $C_i$'s and
  AV is among the $D_i$'s.

  In this last case, the proof also follows the general structure of
  the previous construction, except that we have to take care of one
  issue: It is possible that sequence $A_C$ includes bribery of voters
  that are to be added by actions from $A_D$.  (We use the same
  notation as in the main construction.)  Let $V_{\bv}$ be the collection of
  voters that $A_C$ requires to bribe, but that are added in $A_D$.
  We form a sequence $A'_C$ that is identical to $A_C$, except that it
  starts by adding the voters from $V_{\bv}$, and we let $A'_D$ be
  identical to $A_D$, except that it no longer includes adding the
  voters from $V_{\bv}$. Using sequences $A'_C$ and $A'_D$ instead of
  $A_C$ and $A_D$, it is easy to show the following: If it is possible
  to ensure that $p$ is a unique winner in instance $I$ by a legal
  action of type $C_1+\cdots+C_k+D_1+\cdots+D_\ell$, then it is also
  possible to do so by a legal action of type $C_1+\cdots+C_k+AV$,
  where each added voter is also bribed.  Thus, given an instance $I$
  of $\electionsystem$-$C_1+\cdots+C_k+D_1+\cdots+D_\ell$ we can solve
  it using the following algorithm. Let $W$ be the collection of
  voters that can be added within $I$ and let $k_{\av}$ be the limit
  on the number of voters that we can add.
  \begin{enumerate}
  \item Let $t$ be $\max(k_\av, \|W\|)$.
  \item For each $i$ in $\{0, 1, \ldots, t\}$ execute the next two substeps.
    \begin{enumerate}
    \item Form instance $I'$ that is identical to $I$, except $i$
      (arbitrarily chosen) voters from $W$ are added to the election.
    \item Run the $\electionsystem$-$C_1+\cdots+C_k$ algorithm on
      instance $I'$ and accept if it does.
    \end{enumerate}
  \item If the algorithm has not accepted yet, reject.
  \end{enumerate}
  It is easy to see that this algorithm is correct and, since
  $\electionsystem$ is vulnerable to $C_1+\cdots+C_k$, works in
  polynomial time.  This completes the proof of the theorem.
\end{proof}

\begin{theorem}\label{thm:propagate-resistance}
  Let $\electionsystem$ be an election system and let
  $C_1+\cdots+C_k$, $1 \leq k \leq 5$, be a variant of multiprong
  control. If for some $i$, $1 \leq i \leq k$, $\electionsystem$ is
  resistant to constructive (destructive) $C_i$ control, then
  $\electionsystem$ is resistant to constructive (destructive)
  $C_1+\cdots+C_k$ control.
\end{theorem}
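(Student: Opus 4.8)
The plan is to verify the two conditions that resistance demands: that $\electionsystem$ is susceptible to $C_1+\cdots+C_k$ control, and that the $\electionsystem$-$C_1+\cdots+C_k$ problem is $\np$-hard. The susceptibility half is immediate. Because $\electionsystem$ is resistant to (constructive, resp.\ destructive) $C_i$ control, it is in particular susceptible to $C_i$ control; Theorem~\ref{thm:propagate} then converts susceptibility to one of $C_1,\ldots,C_k$ into susceptibility to the combined type $C_1+\cdots+C_k$, so the first condition holds for free.

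For the hardness half, I would exhibit a polynomial-time many-one reduction from the (constructive, resp.\ destructive) single-prong $\electionsystem$-$C_i$ problem, known to be $\np$-hard, to the multiprong $\electionsystem$-$C_1+\cdots+C_k$ problem. The reduction simply pads: given a $C_i$ instance, it keeps $C$, $V$, the preferred (resp.\ despised) candidate $p$, the $C_i$-budget, and whatever addable sets $C_i$ needs, and for every other prong $C_j$ ($j\neq i$) it sets that prong's budget to $0$ and its addable set (such as $A$ or $W$) to $\emptyset$. By the input conventions of Definition~\ref{def:multiprong}, this is a well-formed multiprong instance, and it is clearly produced in polynomial time.

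Correctness rests on the observation that a zero budget inertly disables a prong in all five cases: $k_\ac=0$ forces $A'=\emptyset$, $k_\dc=0$ forces $C'=\emptyset$, $k_\av=0$ forces $W'=\emptyset$, $k_\dv=0$ forces $V'=\emptyset$, and $k_\bv=0$ allows no bribes. Thus in the padded instance the only action that can change the election is $C_i$ control, so a solution exists for the multiprong instance if and only if one exists for the original $C_i$ instance: a successful $C_i$ action carries over verbatim, and any successful multiprong action, the non-$C_i$ prongs being forced to be trivial, is already a pure $C_i$ action. This yields the $\np$-hardness, and together with susceptibility gives resistance. The destructive case is handled in exactly the same way, reading ``keep $p$ from being a unique winner'' in place of ``make $p$ a unique winner.''

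I do not anticipate a genuine difficulty here; this is a short padding argument. The only points that require care---and which I would treat as the ``main obstacle'' in the sense of the step most easily gotten wrong---are invoking Theorem~\ref{thm:propagate} for the susceptibility half rather than reproving it, and checking prong-by-prong that zeroing a budget truly shuts the prong down, so that the padded multiprong instance and the original single-prong instance are logically equivalent.
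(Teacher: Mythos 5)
Your proof is correct and follows essentially the same route as the paper: the paper likewise derives susceptibility from Theorem~\ref{thm:propagate} and then observes that the single-prong $\electionsystem$-$C_i$ problem is ``essentially (give or take syntax) an embedded subproblem'' of the multiprong one. Your explicit zero-budget padding reduction is just the spelled-out version of that embedding remark, so the two arguments coincide in substance.
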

\begin{proof}
Let $C_i$ be the control type to which $\electionsystem$ is
resistant.  Since $\electionsystem$ is susceptible
to constructive (destructive) $C_i$ control,
it follows by Theorem~\ref{thm:propagate}
that $\electionsystem$ is
  susceptible to constructive (destructive) $C_1+\cdots+C_k$ control.
And since the $\electionsystem$-$C_i$ constructive (destructive)
  control problem is essentially (give or take syntax) an embedded
  subproblem of the $\electionsystem$-$C_1+\cdots+C_k$ control problem,
  clearly $\electionsystem$ is resistant to $C_1+\cdots+C_k$ control.
\end{proof}

By combining the above three theorems, we obtain a simple tool
that allows us to classify a large number of multiprong control
problems based on the properties of their prongs.

\begin{cor}\label{combining-prongs}
  Let $\electionsystem$ be an election system and let $C_1+\cdots+
  C_k$, $1 \leq k \leq 5$, be a variant of multiprong control, such
  that for each $C_i$, $1 \leq i \leq k$, $\electionsystem$ is 
  resistant, vulnerable, or immune to constructive (destructive) $C_i$
  control.  If there is an $i$, $1 \leq i \leq k$, such that
  $\electionsystem$ is resistant to constructive (destructive) $C_i$ control
  then $\electionsystem$ is resistant to constructive (destructive)
  $C_1+\cdots+C_k$ control. Otherwise, if there is an $i$, $1 \leq i
  \leq k$, such that $\electionsystem$ is vulnerable to constructive
  (destructive) $C_i$ control then $\electionsystem$ is vulnerable to
  constructive (destructive) $C_1+\cdots+C_k$ control. Otherwise,
  $\electionsystem$ is immune to constructive (destructive)
  $C_1+\cdots+C_k$ control.
\end{cor}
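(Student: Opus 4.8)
The plan is to establish all three clauses by a single case analysis driven by the strongest classification present among the prongs $C_1,\ldots,C_k$, feeding each case into exactly one of the three preceding theorems.

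First I would dispose of the resistant clause: if some $C_i$ is resistant to constructive (destructive) control, then Theorem~\ref{thm:propagate-resistance} gives at once that $\electionsystem$ is resistant to constructive (destructive) $C_1+\cdots+C_k$ control, so that clause is a direct citation. Next I would treat the final (all-immune) clause, which is the regime in which no prong is resistant and none is vulnerable; then every $C_i$ is immune, i.e., $\electionsystem$ is not susceptible to any single prong. Since by Theorem~\ref{thm:propagate} susceptibility of the combination is equivalent to susceptibility of at least one prong, non-susceptibility to every prong yields non-susceptibility to $C_1+\cdots+C_k$, which is precisely immunity of the combination.

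The middle (vulnerable) clause carries the real content. Here no $C_i$ is resistant, so each prong is either vulnerable or immune and at least one is vulnerable. I would split the prongs into the group $G_V$ of those to which $\electionsystem$ is vulnerable (nonempty by hypothesis) and the group $G_I$ of those to which it is immune. Applying Theorem~\ref{thm:propagate} to the sub-collection $G_I$ shows that $\electionsystem$ is immune to the combination of the prongs in $G_I$. If I can also show that $\electionsystem$ is vulnerable to the combination of the prongs in $G_V$, the proof finishes in two sub-cases: if $G_I$ is empty, then $C_1+\cdots+C_k$ is just the $G_V$-combination and we are done; and if $G_I$ is nonempty, then Theorem~\ref{thm:propagate-vulnerability}, with the block $G_V$ playing the role of the vulnerable group and $G_I$ the role of the immune group, delivers vulnerability of $C_1+\cdots+C_k$.

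I expect the main obstacle to be exactly this remaining sub-claim, that $\electionsystem$ is vulnerable to the combination of the prongs in $G_V$. When $G_V$ is a single prong it is immediate, and the argument closes cleanly through Theorem~\ref{thm:propagate-vulnerability}; this is the single-vulnerable-prong situation in which I expect the corollary to be applied. I would phrase this step carefully rather than treat it as automatic, since vulnerability---unlike immunity, which propagates freely across prongs by Theorem~\ref{thm:propagate}---need not survive the combination of two vulnerable prongs. Hence the middle clause is best read as resting on Theorem~\ref{thm:propagate-vulnerability} with all vulnerable prongs gathered into a single block whose vulnerability is supplied trivially when it is a lone prong, and otherwise by separate analysis of that block.
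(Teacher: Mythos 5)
Your decomposition is exactly the paper's intended argument: the paper offers no separate proof of this corollary, presenting it as an immediate combination of Theorems~\ref{thm:propagate}, \ref{thm:propagate-vulnerability}, and~\ref{thm:propagate-resistance}, and your three clauses invoke precisely those results in precisely the intended way (resistance via Theorem~\ref{thm:propagate-resistance}; the all-immune case via the susceptibility equivalence of Theorem~\ref{thm:propagate}; the mixed case via Theorem~\ref{thm:propagate-vulnerability}, with the immune prongs gathered into the $D$-block, whose joint immunity does indeed follow from Theorem~\ref{thm:propagate}).

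Your caution about the middle clause is not excess prudence---it is exactly right, and it identifies a genuine defect in the statement as written rather than in your argument. Read literally (``there is an $i$ such that $\electionsystem$ is vulnerable to $C_i$,'' i.e., \emph{at least} one vulnerable prong), the middle clause is false assuming $\p \neq \np$: the paper's own Theorem~\ref{thm:origllull} exhibits $\origllull$, which is vulnerable to both constructive AC control and constructive AV control (so no prong is resistant and at least one is vulnerable), yet is \emph{resistant} to constructive AC+AV control; since a problem cannot be both polynomial-time decidable and $\np$-hard unless $\p = \np$, the middle clause's conclusion fails on that instance. The only tenable reading---and the one consistent with how the paper actually uses the corollary (the caption of Table~\ref{tab:results} handles the combining of multiple vulnerabilities by separate, system-specific arguments such as Theorem~\ref{thm:combine}, not by this corollary)---is that the vulnerable prongs form a block whose joint vulnerability must be supplied independently, which is automatic exactly when there is a single vulnerable prong. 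That is the case your proof covers via Theorem~\ref{thm:propagate-vulnerability}. So your proof establishes everything that is establishable, and the obstacle you flagged marks precisely the point at which the corollary's literal wording outruns what the three theorems---or, in light of Theorem~\ref{thm:origllull}, anything else---can deliver.
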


Theorem~\ref{thm:propagate-resistance} immediately yields many
``free'' resistance results based on the previous work on control.
However, we will focus on the more interesting issue of proving that even
multiprong control is easy for some election systems whose control has
already been studied (in Section~\ref{sec:combining})
and for candidate control in maximin (Section~\ref{sec:maximin}).

In general, we do not consider partition cases of control in this
paper.  However, we make an exception for the next example, which shows
how even types of control to which a given election system is immune
may prove useful in multiprong control. In constructive control by
partition of candidates (reminder: this is not a basic control
type) in the ties-eliminate model (PC-TE control type), we are given
an election $E = (C,V)$ and a preferred candidate $p \in C$, and we
ask whether it is possible to 
find a partition $(C_1,C_2)$ of $C$ (i.e.,
$C_1 \cup C_2 = C$ and $C_1 \cap C_2 = \emptyset$)
such that $p$ is a winner of the following two-round
election: We first find the winner sets, $W_1$ and $W_2$, of elections
$(C_1,V)$ and $(C_2,V)$. If $W_1$ ($W_2$) contains more than
one candidate, we set $W_1 = \emptyset$ ($W_2 = \emptyset$), since
we are in the ``ties eliminate'' model.  The
candidates who win election $(W_1 \cup W_2,V)$ are the winners of the
overall two-stage election.

Now, let us look at constructive approval-AC+PC-TE control, where (by
definition, let us say) we first add new candidates and then perform
the partition action. We consider an approval election with two
candidates, $p$ and $c$, where $p$ has 50 approvals and $c$ has
100. We are also allowed to add candidate $c'$, who has 100
approvals. Clearly, it is impossible to make $p$ a unique winner by adding
$c'$. Exercising the partition action alone does not ensure $p$'s
victory either. However, combining both AC and PC-TE does the job.  If
we first add $c'$ to the election and then partition candidates into
$\{p\}$ and $\{c,c'\}$ then, due to the ties-eliminate rule, $p$
becomes the unique winner. It is rather interesting that even though approval
is immune to constructive AC control, there are cases where one has to
apply AC control to open the possibility of effectively using other
types of control. 

The above example is perhaps surprising in light of 
Theorem~\ref{thm:propagate-vulnerability}. In essence, in the proof of
that theorem we argue that if an election system is vulnerable to some
basic control type $C$ but is immune to some other basic control type $D$, then it is
also vulnerable to control type $C+D$. We proved the theorem by
showing that we can safely disregard the actions of type $D$ (assuming
$C$ does not include BV control type). The above example shows that
this proof approach would not work if we considered PC-TE in addition
to the basic control types.

\subsection{Combining Vulnerabilities}
\label{sec:combining}
In the previous section we considered the case where separate
prongs of a multiprong control problem have different computational
properties, e.g., some are resistant, some are vulnerable, and some
are immune. In this section we consider the case where an election
system is vulnerable to each prong separately, and we show how such
vulnerabilities combine within election systems for which control
results were obtained in previous papers (see Table~\ref{tab:results}).
In particular, in the next
theorem we show that for all the election systems considered
in~\cite{bar-tov-tri:j:control},
\cite{hem-hem-rot:j:destructive-control},
and~\cite{fal-hem-hem-rot:j:llull}, all constructive vulnerabilities
to AC, DC, AV, DV, and BV combine to vulnerabilities, and all
destructive vulnerabilities to AC, DC, AV, DV, BV combine to
vulnerabilities.\footnote{Constructive bribery for plurality and
  constructive bribery for approval have been considered
  in~\cite{fal-hem-hem:j:bribery} and constructive and destructive
  bribery for Copeland has been studied
  in~\cite{fal-hem-hem-rot:j:llull}. In Theorem~\ref{thm:combine}
  we---in effect---give polynomial-time algorithms for destructive
  bribery in plurality, approval, and Condorcet. Constructive
  Condorcet-BV is $\np$-complete and this is implicitly shown
  in~\cite[Theorem~3.2]{fal-hem-hem-rot:j:llull}.}  That is, for each
election system studied in these three papers, if it is separately
vulnerable to some basic control types $C_1, \ldots, C_k$, where each
$C_i \in \{$AC,\,DC,\,AV,\,DV,\,BV$\}$, it is also vulnerable to
$C_1+\cdots+C_k$.

\begin{theorem}\label{thm:combine}
  (a) Plurality is vulnerable to both constructive AV+DV+BV control and
  destructive AV+DV+BV control.  (b) Both Condorcet and approval are
  vulnerable to AC+AV+DV+BV destructive control.  (c) For each rational
  $\alpha$, $0 \leq \alpha \leq 1$, Copeland$^\alpha$ is vulnerable to
  destructive AC+DC control.
\end{theorem}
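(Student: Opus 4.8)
The plan is to treat all three parts with a common template. Vulnerability requires only an action-producing polynomial-time algorithm, since susceptibility is inherited from the single-prong results via Theorem~\ref{thm:propagate}. In every case the winner condition reduces to comparing a single \emph{champion} candidate against the preferred/despised candidate through an additive ``margin,'' so I would fix the champion, greedily spend the control budgets to maximize that margin, and accept iff the maximum reaches the required threshold; the greedy solution exhibited along the way is the certificate. The one delicate feature shared by the voter-side cases is that a bribe does double duty---it simultaneously raises the champion and lowers the opponent---so the greedy must give such double-duty moves top priority, and I must argue that this priority, combined with a single scan over one integer parameter, is optimal.

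For part (a), plurality scores depend only on first-place counts, so each operation moves exactly one point: adding a top-$p$ voter from $W$ gives $p$ a point, deleting a voter removes a point from that voter's top choice, and bribing a voter to top-rank $p$ adds a point to $p$ while removing one from the voter's former favorite. In the constructive case I would first note that it is always safe to add as many top-$p$ voters as $k_\av$ permits (call $p$'s resulting score $s_p$) and never useful to delete $p$-voters or to bribe toward anyone but $p$. I would then scan over the number $b\le k_\bv$ of bribes aimed at $p$---this fixes $p$'s final score at $s_p+b$ and yields $b$ opponent-point removals for free---and accept iff the total shortfall $\sum_{c\ne p}\max(0,\score(c)-s_p-b+1)$ can be covered by those $b$ removals together with at most $k_\dv$ deletions. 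The destructive case is symmetric: iterate over the champion $c\ne d$, spend bribes on $d$'s voters first (each shifting $\score(c)-\score(d)$ by $2$), then delete $d$'s remaining voters and add $c$-voters from $W$ (each shifting by $1$), and accept iff some champion reaches $\score(c)\ge\score(d)$.

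For part (b) I would use that $d$ fails to be a unique winner exactly when some candidate $c\ne d$ ties-or-beats $d$: for Condorcet this is $N_E(c,d)\ge N_E(d,c)$, and for approval it is $\score(c)\ge\score(d)$. So I would iterate over every candidate $c\in C\cup A$ as the champion, spending one unit of $k_\ac$ to add $c$ when $c\in A$, and then maximize the $c$-versus-$d$ margin using AV, DV, and BV. This is again a greedy, since bribing a voter who currently favors $d$ over $c$ (for approval: approves $d$ but not $c$) shifts the margin by $2$ while the best deletions and additions shift it by $1$. The only genuinely new ingredient relative to (a) is the outer choice of whether the champion is an existing candidate or a fresh one supplied by the AC prong.

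Part (c) is the most interesting and is where the real idea lies. Since AC and DC leave the voters untouched, every pairwise outcome is fixed, and candidate control merely selects which contests count toward each Copeland$^\alpha$ score. Writing $\gamma_c(x)\in\{0,\alpha,1\}$ for the points $c$ earns against $x$, the quantity to drive nonnegative, namely $\score(c)-\score(d)$ over the surviving candidate set $C^\ast$, decomposes additively as $(\gamma_c(d)-\gamma_d(c))+\sum_{x\in C^\ast\setminus\{c,d\}}(\gamma_c(x)-\gamma_d(x))$. Because each other candidate $x$ contributes its term $\delta(x)=\gamma_c(x)-\gamma_d(x)$ independently, for a fixed champion $c$ the optimal $C^\ast$ is obtained greedily: delete the existing candidates with the most negative $\delta(x)$ (up to $k_\dc$) and add the candidates from $A$ with the most positive $\delta(x)$ (up to $k_\ac$, reserving one unit to add $c$ itself if $c\in A$); then iterate over all champions $c\ne d$. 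The main obstacle across the whole theorem is not any single algorithm but getting these greedy/priority arguments exactly right---in particular the bribery-double-duty bookkeeping of (a) and (b), and the recognition of the additive decomposition in (c), after which the Copeland$^\alpha$ case becomes immediate.
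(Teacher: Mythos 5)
Your proposal is correct and follows essentially the same route as the paper: iterate over a champion candidate, greedily spend the budgets to maximize the relevant score margin (prioritizing the double-duty bribes), and for Copeland$^\alpha$ exploit the additive pairwise decomposition of scores---exactly the paper's $a(c')$/$r(c')$ add/delete greedy. The only cosmetic differences are that the paper's constructive-plurality algorithm is a direct greedy rather than your scan-over-$b$ shortfall check, and that its Condorcet/approval algorithm deletes before bribing (which remains correct only because exhausting the whole pool of voters ranking $d$ over $c$ already forces a tie), whereas your bribe-first ordering maximizes the margin outright.
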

\begin{proof}
  (a) Let us consider an instance $I$ of constructive plurality-AV+DV+BV
  control where we want to ensure candidate $p$'s victory: It is
  enough to add all the voters who vote for $p$ (or as many as we are
  allowed) and then, in a loop, keep deleting voters who vote
  for a candidate other than $p$ with the highest score,
  until $p$ is the only candidate with the highest score
  or we have exceeded our limit of
  voters to delete. Finally, in a loop, keep bribing voters who
  vote for a candidate other than $p$ with the highest score
  to vote for $p$, 
  until $p$ is the only candidate with the highest score
  or we have exceeded our limit of voters to bribe.
  If $p$ becomes a unique winner via this procedure, then
  accept. Otherwise reject. We omit the easy proof for the destructive
  case.

  (b) Let $I$ be an instance of destructive Condorcet-AC+AV+DV+BV, where
  our goal is to prevent candidate $p$ from being a Condorcet winner
  (we assume that $p$ is a Condorcet winner before any control action
  is performed). It is enough to ensure that some candidate $c$ wins a
  head-to-head contest with $p$. Our algorithm works as follows.

  Let $C$ be the set of candidates originally in the election and let
  $A$ be the set of candidates that we can add (we take $A =
  \emptyset$ if we are not allowed to add any candidates). For each $c
  \in (C \cup A) - \{p\}$ we do the following:
  \begin{enumerate}
  \item Add as many voters who prefer $c$ to $p$ as possible.
  \item Delete as many voters who prefer $p$ to $c$ as possible.
  \item Among the remaining voters who prefer $p$ to $c$, bribe
        as many as possible to rank $c$ first.
  \end{enumerate}
  If after these actions $c$ wins his or her head-to-head contest with
  $p$ then we accept.  If no $c \in (C \cup A) - \{p\}$ leads to
  acceptance, then we reject. It is easy to see that this algorithm is
  correct and runs in polynomial time. (We point out that it is enough
  to add only a single candidate, the candidate $c$ that prevents $p$
  from winning, if he or she happens to be a member of $A$).

  For the case of approval, our algorithm works similarly, except the
  following differences: We add voters who approve of $c$ but not of
  $p$. We delete voters who approve of $p$ but not of $c$.  For each
  remaining voter $v_i$, if we still have not exceeded our bribing
  limit, if $v_i$ approves of $p$ but not of $c$, we bribe $v_i$ to
  reverse approvals on $p$ and $c$. (Note that if we do not exceed our
  bribing limit by this procedure, this means that each voter that
  approves of $p$ also approves of $c$ and thus $p$ is not a unique
  winner.)  If these actions lead to $p$ not being a unique winner, we
  accept.  If we do not accept for any $c \in (C \cup A) - \{p\}$, we reject.

  (c) The idea is to combine Copeland$^\alpha$ destructive-AC and
  destructive-DC algorithms~\cite{fal-hem-hem-rot:j:llull}.  We give
  the full proof for the sake of completeness.

  Let us fix a rational value $\alpha$, $0 \leq \alpha \leq 1$. Given
  an election $E$ and a candidate $c$ in this election, we write
  $\score_E^\alpha(c)$ to denote $\copeland^\alpha$ score of $c$.  Let
  $I$ be an instance of destructive $\copeland^\alpha$-AC+DC control,
  with an election $E = (C,V)$, where we can add at most $k_\ac$
  spoiler candidates from the set $A$, and where we can delete at most
  $k_\dc$ candidates. Our goal is to ensure that some despised
  candidate $d \in C$ is not a unique winner. Our algorithm is based
  on the following simple observation of Faliszewski et
  al.~\cite{fal-hem-hem-rot:j:llull}. For each candidate $c \in C$:
  \[ 
     \score_{(C,V)}^\alpha(c) = \sum_{c' \in C - \{c\}} \score^\alpha_{(\{c,c'\},V)}(c).
  \]
  Our goal is to prevent candidate $d$ from being a unique winner. If
  $d$ is not a unique winner, we immediately accept. Otherwise, we
  seek a candidate $c \in C \cup A$ such that we can ensure that $c$'s
  score is at least as high as that of $d$. Thus, for each $c \in C \cup A$
  we do the following.
  \begin{enumerate}
    \item If $c \in A$, and $k_\ac > 0$, we add $c$ to the election (and
          if $c \in A$ but $k_\ac = 0$, we proceed to the next $c$).

    \item As long as we can still add more candidates, we keep
      executing the following operation: If there is a candidate $c'
      \in A$ such that value $a(c') = \score^\alpha_{(\{c,c'\},V)}(c) -
      \score^\alpha_{(\{d,c'\},V)}(d)$ is positive, we add a candidate
      $c'' \in A$, for whom $a(c'')$ is highest.

    \item As long as we can still delete candidates, we keep executing
      the following operation: If there is a candidate $c' \in C$ such
      that value $r(c') = \score^\alpha_{(\{d,c'\},V)}(d) -
      \score^\alpha_{(\{c,c'\},V)}(c)$ is positive, we delete a
      candidate $c'' \in C$, for whom $r(c'')$ is highest.

    \item If after these steps $d$ is not a unique winner, we accept.
  \end{enumerate}
  If we do not accept for any $c \in C \cup A$, we reject. 

  It is easy to see that we never delete a candidate that we have
  added. Also, it is easy to see that the algorithm works in
  polynomial time, and that it is correct. Correctness follows from
  the fact that (a) in the main loop of the algorithm, when dealing
  with candidate $c \in C \cup A$, each addition of a candidate and
  each deletion of a candidate increases the difference between the
  score of $c$ and the score of $d$ as much as is possible, and (b)
  the order of adding/deleting candidates is irrelevant.
\end{proof}

As witnessed
by Theorem~\ref{thm:combine} and the results of
Section~\ref{sec:maximin}, for all natural election systems that we
have considered, all constructive vulnerabilities combine and so do all
destructive ones. It is natural to wonder whether this is a necessary
consequence of our model of multiprong control or whether in fact there is
an election system for which combining two
control types to which the system is vulnerable yields a multipronged
control problem to which the system is resistant.
Theorem~\ref{thm:origllull} shows that the latter is the case,
even for a natural (though rather unusual) election system.

In the thirteenth century, Ramon Llull proposed an election system
that could be used to choose 
popes and leaders of monastic orders
(see~\cite{hae-puk:j:electoral-writings-ramon-llull,mcl-lor:t:papacy}).
In his system, voters choose the winner from among themselves
(so, the candidates are the same as the voters).
Apart from that, Llull's voting system is basically $\copeland^1$,
the version of Copeland that most richly rewards ties.
Formally, we define the voting system $\origllull$ as follows: For 
an election $E = (C,V)$, if the set of names of $V$, which we will
denote by $\names(V)$, is not
equal to $C$, then there are no winners.  Otherwise, a candidate
$c \in C$ is a winner if and only if it is a $\copeland^1$ winner.
Note that single-prong AC and AV control for 
$\origllull$ don't make all that much sense, and so it should
come as no surprise that 
$\origllull$ is vulnerable to both constructive AC control 
and constructive AV control.  In addition, we will show 
(by renaming and padding) that $\copeland^1$-AV can be 
be reduced to 
$\origllull^1$-AC+AV.
Since $\copeland^1$ is resistant to constructive
control by adding voters~\cite{fal-hem-hem-rot:j:llull},
this then leads to the following  theorem.

\begin{theorem}\label{thm:origllull}
$\origllull$ is vulnerable to both constructive AC control 
and constructive AV control but is resistant to
constructive AC+AV control.
\end{theorem}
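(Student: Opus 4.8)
The plan is to dispatch the two vulnerabilities by exploiting the rigid coupling between candidates and voters that $\origllull$ enforces, and to obtain the resistance by a renaming-and-padding reduction from constructive $\copeland^1$-AV, which is $\np$-hard~\cite{fal-hem-hem-rot:j:llull}.

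For the two vulnerabilities, susceptibility is immediate: starting from an election in which some voter's name is not a candidate (so $\names(V)\neq C$ and there are no winners), adding exactly that missing name as a candidate—respectively, adding exactly the voter whose name completes $C$—can create a $\copeland^1$ winner. For the polynomial-time algorithms the key observation is that the controlled election can have any winner only if its set of voter names equals its candidate set, and this pins the allowed action down almost uniquely. In the AC case, for $(C\cup A',V)$ to have a winner we need $\names(V)=C\cup A'$; since $A'\subseteq A$ is disjoint from $C$, this forces $A'=\names(V)\setminus C$ (in particular requiring $C\subseteq\names(V)$), so the algorithm just checks that this forced $A'$ lies in $A$, has size at most $k_{\ac}$, and makes $p$ a unique $\copeland^1$ winner. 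The AV case is symmetric: $\names(V\cup W')=C$ forces $W'$ to be exactly those addable voters whose names lie in $C\setminus\names(V)$, after which we verify feasibility and that $p$ wins. Both checks run in polynomial time, giving vulnerability.

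The heart of the theorem is the resistance of AC+AV. Given a $\copeland^1$-AV instance with candidates $C$, registered voters $V$, addable voters $W$, preferred candidate $p$, and bound $k$, I would build an $\origllull$ instance whose candidate set is $C$ together with one fresh ``dummy'' name per constructed voter, and in which \emph{every} constructed voter ranks all of $C$ above all dummies. The renamed copies of $V$ are registered voters, and each $w\in W$ becomes an addable \emph{voter} whose fresh name is simultaneously an addable \emph{candidate}; since $\origllull$ produces a winner only when voter names equal candidates, adding $w$ forces adding its name as a candidate, so taking $k_{\av}=k_{\ac}=k$ makes each added voter cost exactly one AV and one AC prong and mirrors adding one voter in the $\copeland^1$ instance. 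The crucial point is that the dummies do not disturb the outcome among the candidates of $C$: because every voter ranks all of $C$ above every dummy, each real candidate beats every dummy head-to-head, so each real candidate's score is shifted by the same constant (the number of dummies) while no dummy can overtake $p$ (a dummy's score is at most the number of other dummies, strictly less than the points $p$ already earns from beating all dummies). Hence $p$ is a unique $\origllull$ winner of the controlled election if and only if $p$ is a unique $\copeland^1$ winner of $(C,V\cup W')$ for the corresponding added set $W'$.

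The technical nuisance I expect to be the main obstacle is that the elements of $C$ must themselves be voter names, so I must add ``filler'' registered voters named by the elements of $C$; these would perturb the head-to-head counts among real candidates unless their net contribution is zero. I would neutralize them by pairing each filler with a mate carrying the reverse preference (Convention~A's $\revnot{D}$ device), leaving every $N_E(c,c')$ balanced; when $\|C\|$ is odd and an exact pairing is impossible, I would pad with one extra dummy candidate and a single balancing voter so that the fillers still cancel. With the fillers contributing nothing to the real head-to-head, the real-candidate contest is exactly the $\copeland^1$ contest on $V\cup W'$, the reduction is clearly polynomial time, and susceptibility of AC+AV follows from susceptibility of AC via Theorem~\ref{thm:propagate}; together these give resistance.
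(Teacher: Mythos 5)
Your proposal is correct and takes essentially the same route as the paper's proof: the vulnerabilities follow because $\origllull$'s name-matching requirement forces the control action (essentially) uniquely, followed by a $\copeland^1$ winner check, and the resistance follows by a renaming-and-padding reduction from constructive $\copeland^1$-AV in which dummy candidates are ranked last by everyone (so they never win and shift every real candidate's score by the same amount) and extra voters are neutralized via canceling reversed pairs. The only difference is bookkeeping: the paper names the original voters by elements of $C$ (padding with canceling voters when $\|V\| < \|C\|$ and making the AC budget unlimited), whereas you keep fresh dummy names for the original voters and instead add canceling filler voters named by $C$ with $k_{\ac}=k_{\av}=k$; both implement the same voter--candidate coupling and the same score-shift lemma.
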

\begin{proof}

It is immediate that $\origllull$ is susceptible to constructive
AC, AV, and (by Theorem~\ref{thm:propagate}) AC+AV control.
  It is also easy to see that constructive
  $\origllull$-AC (AV) control is in $\p$: If possible add
  candidates (voters) such that the set of voter names is equal to
  the set of candidates, and then check if the preferred candidate
  is a unique $\copeland^1$ winner.  If this is not possible,
  reject.

  We will now show, via a reduction from constructive
  $\copeland^1$-AV control (which is NP-hard~\cite{fal-hem-hem-rot:j:llull})
 that constructive $\origllull$-AC+AV control is $\np$-hard.
Let $C$ be a set of candidates, $V$ and $W$ be two disjoint collections
of voters with preference lists over $C$, $p \in C$ the preferred candidate,
and $k \in \naturals$.  The question is whether there exists a 
subcollection $W' \subseteq W$ of size at most $k$ such that
$p$ is a unique $\copeland^1$ winner of $(C,V \cup W')$.
Without loss of generality, we assume that $V$ is not empty.

We will now show how to pad this election. For an $\origllull$ election
to be non-trivial, we certainly need to have the same number of candidates
as voters (later, we will also rename the voters so that they are the same
as the candidates).
If $\|V\| < \|C\|$, we want
to add a collection of new dummy voters $V'$ such that
$\|V\| + \|V'\| = \|C\|$ and such that adding $V'$ to an election
does not change the relative $\copeland^1$ scores of the candidates.
This can be accomplished by letting half of the voters in $V'$
vote $C$ (recall Convention~A)
and half of the voters in $V'$ vote $\revnot{C}$.
Of course, this can only be done if $\|V'\|$ is even.  

So, we will do the following.
If $\|V\| < \|C\|$, we add a collection of new voters $V'$ such that
$\|V'\| = \|C\| - \|V\|$ if $\|C\| - \|V\|$ is even, and
$\|V'\| = \|C\| - \|V\| + 1$ if $\|C\| - \|V\|$ is odd.
If $\|V\| \geq \|C\|$, we let $V' = \emptyset$.
Half of the voters in $V'$
vote $C$ and half of the voters in $V'$ vote $\revnot{C}$.
In addition, we introduce a set $A$ of new candidates such that
$\|C\| + \|A\| = \|V\| + \|V'\| + \|W\|$.  Note that this is always
possible, since $\|V\| + \|V'\| \geq \|C\|$. 
We extend the votes of the voters
(in $V$, $V'$, and $W$) to $C \cup A$
by taking their preference order on $C$ and following
this by the candidates in $A$ in some fixed, arbitrary order.  Note that
this will have the effect that candidates in $A$ will never be winners.

Let $W' \subseteq W$, $A' \subseteq A$,
$E = (C,V \cup W')$, 
$E' = (C \cup A', V \cup V' \cup W')$.
It is easy to see that the following hold (recall that $V$ is not empty).
\begin{enumerate}
\item For all  $d \in A'$, 
$\score_{{E'}}^1(d) \leq \|A'\| - 1$.

\item  For all  $c \in C$,
$\score_{{E'}}^1(c) =
\score_{E}^1(c) + \|{A'}\|$.

\item  For all  $c,c' \in C, c \neq c'$,
$\score_{E}^1(c) - \score_{E}^1(c') = \score_{{E'}}^1(c) -
\score_{{E'}}^1(c')$. 

\item
\label{item:origllull}
$p$ is a unique $\copeland^1$ winner of $E$ if and only if
$p$ is a unique $\copeland^1$ winner of $E'$.
\end{enumerate}

We are now ready to define the reduction.
Name the voters such that
$\names(V \cup V') \supseteq C$ and
$\names(V \cup V' \cup W) = C \cup A$.
Then map $(C, V, W, p, k)$ to 
$(C, A, V \cup V', W, p, \|A\|, k)$.
We claim that $p$ can be made a unique $\copeland^1$ winner
of $(C,V)$ by adding
at most $k$ voters from $W$ if and only if
$p$ can be made a unique $\origllull$ winner of 
$(C, V \cup V')$ by adding (an unlimited number of) candidates from $A$ and at most
$k$ voters from  $W$.

First suppose that $W'$ is a subcollection of $W$ of size at most $k$
such that $p$ is the unique $\copeland^1$ winner of
$(C, V \cup W')$.  Let $A' \subseteq A$ be
the set of candidates such that $C \cup A' 
= \names(V \cup V' \cup W')$. By item~\ref{item:origllull} above,
$p$ is the unique $\copeland^1$ winner of 
$(C \cup A' , V \cup V' \cup W')$, and thus
$p$ is the unique $\origllull$ winner of 
$(C \cup A', V \cup V' \cup W')$.

For the converse, suppose that there exist
$A' \subseteq A$ and
$W' \subseteq W$ such that $\|W'\| \leq k$, 
and $p$ is the unique $\origllull$ winner of 
$(C \cup A' , V \cup V' \cup W')$.
Then $p$ is the unique $\copeland^1$ winner of 
$(C \cup A' , V \cup V' \cup W')$, and,
by item~\ref{item:origllull}, $p$ is the unique $\copeland^1$ winner of
$(C, V \cup W')$.

Thus our reduction is correct and, since
it can be computed in polynomial time, the proof is complete.
\end{proof}

$\origllull$ is neutral (permuting the names of the candidates
does not affect the outcome of an election) but not anonymous
(renaming the voters can change the outcome of an election).
By sneakily building the preference orders of the voters into the
names of the candidates, we can make the
system anonymous as well as neutral (at the price of losing
naturalness).

\begin{theorem}\label{thm:origllullanonymous}
There exists a neutral and anonymous election system
$\electionsystem$ such that $\electionsystem$ is vulnerable to both
constructive AC control 
and constructive AV control but is resistant to constructive AC+AV control.
\end{theorem}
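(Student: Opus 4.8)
The plan is to leave the reduction of Theorem~\ref{thm:origllull} essentially untouched and merely repair the one feature of $\origllull$ that destroys anonymity, namely the test ``$\names(V)=C$.'' That test makes the winner set depend on the voters' names, which is precisely what anonymity forbids. I would replace it by a test that reads $V$ only as a \emph{multiset of preference orders}, by smuggling each voter's vote into the candidate names, exactly as the paragraph preceding the theorem suggests.

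Concretely, I would let candidate names be structured as a pair consisting of a plain label together with a string that encodes a preference order over such plain labels, so that from each candidate $c$ in an election $(C,V)$ one can read off an encoded preference order $\pi_c$. The system $\electionsystem$ would first check whether the multiset $\{\pi_c \mid c\in C\}$ equals the multiset of votes actually cast in $V$; if not, it declares no winners, and if so, it outputs the $\copeland^1$ winners of $(C,V)$. Since this matching test consults $V$ only as a multiset and never looks at voter names or the order of the votes, and since the underlying $\copeland^1$ rule is anonymous, $\electionsystem$ is anonymous. For neutrality, a renaming of the candidates must be understood as a consistent relabeling of the plain labels everywhere at once---in the candidate set, inside every encoded $\pi_c$, and in every vote---under which both the matching test and $\copeland^1$ are preserved; hence the winners transform equivariantly and $\electionsystem$ is neutral.

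With $\electionsystem$ defined, the three assertions transfer from Theorem~\ref{thm:origllull}. The two single-prong vulnerabilities are the same easy in-$\p$ arguments: to decide constructive AC (respectively AV) control one checks in polynomial time whether candidates (respectively voters) can be added so that the matching condition holds and $p$ becomes a unique $\copeland^1$ winner, rejecting otherwise. For resistance of AC+AV I would again reduce from constructive $\copeland^1$-AV, which is $\np$-hard by~\cite{fal-hem-hem-rot:j:llull}. I would perform the identical padding: add dummy voters casting $C$ and $\revnot{C}$ so as not to disturb relative $\copeland^1$ scores, and introduce a pool $A$ of spoiler candidates sized so that the total number of candidates equals the total number of voters. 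But where Theorem~\ref{thm:origllull} renames voters so that $\names$ lines up, I would instead assign to each candidate in $C\cup A$ the encoded preference order of its paired voter under that same bijection. The four score identities---in particular the equivalence that $p$ is a unique $\copeland^1$ winner of $E$ iff of $E'$---then hold verbatim, so that the multiset-matching condition forces the added candidates to realize exactly the votes of the added voters, and a successful $\electionsystem$-AC+AV attack exists iff a successful $\copeland^1$-AV attack does.

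The step I expect to be delicate is pinning down the encoding and the notion of renaming so that neutrality and anonymity are both literally true while the multiset test remains a faithful anonymous surrogate for ``$\names(V)=C$.'' In particular I must handle the bookkeeping that made the dimensions match in Theorem~\ref{thm:origllull}: because the matching condition forces the number of candidates to equal the number of votes, the base candidates, the padding voters $V'$, and the spoiler pool $A$ must be balanced so that both directions of the reduction go through, and I must verify that passing the matching test after control cannot be achieved ``accidentally'' by unintended candidates---i.e.,\ that the spoilers remain guaranteed losers and contribute only the permitted score shift. Once the encoding is made robust enough that candidate renamings act equivariantly on the embedded preferences, the rest of the argument is routine.
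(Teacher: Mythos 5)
Your proposal is correct and takes essentially the same route as the paper: the paper's own construction likewise interprets each candidate's name as a pair (index, preference order) and declares the $\copeland^1$ winners only when there is a bijection matching each candidate's embedded order to an actual cast vote---precisely your multiset test---and its resistance proof is the same renaming of the padded reduction from Theorem~\ref{thm:origllull}, assigning to candidate $i$ the vote of voter $i$. The delicacies you flag (restricting embedded orders to the candidates currently present, and reading ``renaming'' as acting consistently on labels inside names and votes) are present in, and handled just as tersely by, the paper's own proof.
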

\begin{proof}
We first describe $\electionsystem$.  On input $(C,V)$, an election, if
there exists a set $I \subseteq \naturals^+$ and bijections
$c$ from $I$ to $C$ and
$v$ from $I$ to $V$ such that for all $i \in I$,
$c(i) = (i,>_i)$ where $>_i$ is a preference order on $I$
(i.e., we interpret candidate names as pairs consisting of
a positive integer and a preference order on $I$) and
voter $v(i)$ corresponds to candidate $c(i)$ in the sense that
for all $j,k \in I$,
$j >_i k$ if and only if $c(j) > c(k)$ in voter $v(i)$'s preference order,
then the winners are exactly the $\copeland^1$ winners.  Otherwise,
there are no winners.

Note that $\electionsystem$ is neutral and anonymous and basically
the same as $\origllull$.
The same argument as used for $\origllull$ in the proof
of Theorem~\ref{thm:origllull} shows that $\electionsystem$ is vulnerable
to constructive AC and AV control and susceptible to AC+AV control.
To show that constructive $\electionsystem$-AC+AV control is $\np$-hard,
we adapt the reduction from from constructive
$\copeland^1$-AV control to
constructive $\origllull$-AC+AV control from the proof
of Theorem~\ref{thm:origllull}.
Let $C$ be a set of candidates, $V$ and $W$ be two disjoint collections
of voters with preference lists over $C$, $p \in C$ the preferred candidate,
and $k \in \naturals$. 
Without loss of generality, we assume that $V$ is not empty.
Let $V'$ and $A$ be as in the proof of Theorem~\ref{thm:origllull}.
Recall that $\|V \cup V'\| \geq \|C\|$ and
$\|V \cup V' \cup W\|  = \|C \cup A\|$.
From the proof of Theorem~\ref{thm:origllull} we have the following.
\begin{claim}
\label{cl:winner}
Let $W' \subseteq W$, $A' \subseteq A$,
$E = (C,V \cup W')$, 
$E' = (C \cup A', V \cup V' \cup W')$.
$p$ is a unique $\copeland^1$ winner of $E$ if and only if
$p$ is a unique $\copeland^1$ winner of $E'$.
\end{claim}

We are now ready to define the reduction.
We will first rename the candidates.  Note that renaming
candidates does not change the outcome of a $\copeland^1$ election.
Number the candidates in $C \cup A$ from $1$ to $\|C \cup A\|$ such that the
candidates in $C$ are numbered from $1$ to $\|C\|$.
Number the voters in $V \cup V' \cup W$ from $1$ to $\|V \cup V' \cup W\|$
(= $\|C \cup A\|$)
such that the voters in $\|V \cup V'\|$ are numbered
from $1$ to $\|V \cup V'\|$.
Now rename candidate $i$ to
$c_i = (i,>_i)$ where $>_i$ is the preference order on
$\{1, \ldots, \|C\| + \|A\|\}$ such that
for all $j, k \in \{1, \ldots, \|C\| + \|A\|\}$,
$j >_i k$ if and only if $j > k$ in voter $i$.
Rename all candidates occurring in $C$, $A$, $V$, $V'$, and $W$ in this way.
We claim that $p$ can be made a unique $\copeland^1$ winner
of $(C,V)$ by adding
at most $k$ voters from $W$ if and only if
$p$ can be made a unique $\electionsystem$ winner of 
$(C, V \cup V')$ by adding candidates from $A$ and at most
$k$ voters from  $W$.

First suppose that $W'$ is a subcollection of $W$ of size at most $k$
such that $p$ is the unique $\copeland^1$ winner of
$(C, V \cup W')$.  Let $A' \subseteq A$ be
the set of candidates such that 
$C \cup A' = \{c_i \ | \ \mbox{voter $i$ is in $V \cup V' \cup W'$}\}$.
By Claim~\ref{cl:winner},
$p$ is the unique $\copeland^1$ winner of 
$(C \cup A' , V \cup V' \cup W')$, and thus
$p$ is the unique $\electionsystem$ winner of 
$(C \cup A', V \cup V' \cup W')$.

For the converse, suppose that there exist
$A' \subseteq A$ and
$W' \subseteq W$ such that $\|W'\| \leq k$, 
and $p$ is the unique $\electionsystem$ winner of 
$(C \cup A' , V \cup V' \cup W')$.
Then
$p$ is the unique $\copeland^1$ winner of 
$(C \cup A' , V \cup V' \cup W')$, and
by Claim~\ref{cl:winner},
$p$ is the unique $\copeland^1$ winner of $(C, V \cup W')$.

Thus our reduction is correct and, since
it can be computed in polynomial time, the proof is complete.
\end{proof}

\section{Control in Maximin}
\label{sec:maximin}

In this section we initiate the study of control in the maximin
election system. Maximin is 
loosely related to $\copeland^\alpha$  voting in the sense that
both are defined in terms of the pairwise head-to-head
contests.  In addition, the unweighted coalitional
manipulation problem for maximin and $\copeland^\alpha$ ($\alpha \neq 0.5$)
exhibits the same unusual behavior:
It is in P for one manipulator and
NP-complete for two or more
manipulators~\cite{con-pro-ros-xia:c:unweighted-manipulation,fal-hem-sch:c:copeland-ties-matter,fal-hem-sch:c:copeland01}.
Thus one might wonder whether both systems will be
similar with regard to 
their resistances to control.  In fact, there are very interesting
differences.

It is easy to see that maximin is susceptible to all basic types of
constructive and destructive control.
And so, by Theorem~\ref{thm:propagate}, to show vulnerability to constructive
(destructive) $\calC$ control it suffices to 
give a polynomial-time algorithm that decides the constructive
(destructive) $\electionsystem$-$\calC$ problem,
and to show
resistance to constructive
(destructive) $\calC$ control it suffices to show
that the constructive (destructive) $\electionsystem$-$\calC$ problem is
$\np$-hard.

\subsection{Candidate Control in Maximin}
Let us now focus on candidate control in maximin, that is, on AC,
AC$_\mathrm{u}$, and DC control types, both in the constructive and in
the destructive setting.
As is the case for Copeland$^\alpha$, $0
\leq \alpha \leq 1$, maximin is resistant to control by adding
candidates.
\begin{theorem}
  Maximin is resistant to constructive AC control.
\end{theorem}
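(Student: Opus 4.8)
The plan is to establish both halves of resistance. Susceptibility to constructive AC control has already been noted for maximin just above the theorem, so by Theorem~\ref{thm:propagate} the entire task reduces to proving that the constructive maximin-AC control problem is $\np$-hard. I would give a reduction from $\xthreec$: given an instance $(B,\calS)$ with $B=\{b_1,\dots,b_{3k}\}$ and $\calS=\{S_1,\dots,S_n\}$, I build a maximin election with a preferred candidate $p$, one \emph{element candidate} $b_i$ for each $b_i\in B$ placed in the initial candidate set $C$, and one \emph{spoiler} $s_j$ for each set $S_j$ placed in the addable set $A$, with addition budget $k_\ac=k$.

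The construction is driven by the single structural fact that makes maximin amenable here: since a candidate's maximin score is a \emph{minimum} over its head-to-head contests, introducing a new candidate can only lower (never raise) the scores of the candidates already present. Thus to make $p$ the unique winner I must add spoilers that drive every element candidate's score strictly below $p$'s, while leaving $p$'s score high. I would calibrate the voters (using Convention~A to assemble balanced blocks of preference orders) so that: (i)~$p$'s minimum head-to-head value is some fixed target $T$ independent of which spoilers are added; (ii)~before any control each $b_i$ has maximin score strictly greater than $T$, with its score-defining contest against $p$ or a padding candidate; (iii)~whenever $s_j$ is added and $b_i\in S_j$, the new contest value $N(b_i,s_j)$ falls strictly below $T$; and (iv)~$p$ beats every spoiler so soundly that $N(p,s_j)>T$ for all $j$, while each spoiler's own minimum stays well below $T$, so no spoiler can win.

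Given this gadget, correctness is a counting argument. \textbf{If} $\calS$ has an exact cover $S'$ of size $k$, then adding the corresponding $k$ spoilers pushes every $b_i$ below $T$ while $p$ remains at $T$ and every spoiler stays below $T$, so $p$ is the unique winner. \textbf{Only if:} if some $A'$ with $\|A'\|\le k$ makes $p$ the unique winner, then every element candidate must have been driven below $T$, which by the gadget can happen only if each $b_i$ lies in some added set; covering all $3k$ elements with at most $k$ triples forces the chosen triples to be pairwise disjoint and exactly $k$ in number, i.e., an exact cover of $B$.

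I expect the main obstacle to be the calibration of margins so that maximin's $\min$ is realized by the intended contest at every stage. In particular, I must guarantee that $p$'s score is pinned at the fixed value $T$ regardless of which and how many spoilers are added---so that adding spoilers never accidentally drops $p$ below some element candidate---and that each element candidate's pre-control minimum sits \emph{just above} $T$ and is genuinely overtaken by its spoiler contest rather than by some pre-existing low contest. Verifying that adding extra, irrelevant spoilers can neither rescue an uncovered element nor inadvertently crown a spoiler is the delicate bookkeeping, but it follows from the monotone ``adding candidates only lowers existing scores'' observation together with the chosen margins.
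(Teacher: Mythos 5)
Your proposal is correct and takes essentially the same route as the paper: a reduction from $\xthreec$ with one addable spoiler candidate per $3$-set, exploiting the fact that adding candidates can only lower existing maximin scores, and closing with the same counting argument that covering $3k$ elements with at most $k$ triples forces an exact cover. The calibration you defer as ``delicate bookkeeping'' is exactly what the paper's explicit vote profile supplies (all candidates in $B \cup \{p\}$ tie at $n+1$, $N(p,a_j)=n+2$, spoilers score $n$, and a covered $b_i$ drops to $n$); the only cosmetic difference is that the paper pins the element candidates' initial scores \emph{equal} to $p$'s rather than strictly above it, which is equally sufficient to block unique winnerhood.
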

\begin{proof}
We give a reduction from $\xthreec$. Let $(B,\calS)$,
  where $B = \{b_1, \ldots, b_{3k}\}$ is a set of $3k$ elements and
  $\calS = \{S_1, \ldots, S_n\}$ is a set of $3$-subsets of $B$, be our
  input $\xthreec$ instance. We form an election $E = (C \cup A,V)$,
  where $C = B \cup \{p\}$, $A = \{a_1, \ldots, a_n\}$, and $V = (v_1,
  \ldots, v_{2n+2})$. (Candidates in $A$ are the spoiler candidates,
  which the attacker has the ability to add to election $(C,V)$.)

  Voters in $V$ have the following preferences. For each $S_i \in
  \calS$, voter $v_i$ reports preference list $p > B - S_i > a_i > S_i
  > A - \{a_i\}$ and voter $v_{n+i}$ reports preference list
  $\revnot{A - \{a_i\}} > a_i > \revnot{S_i} > \revnot{B-S_i} >
  p$. Voter $v_{2n+1}$ reports $p > A > B$ and voter $v_{2n+2}$
  reports $\revnot{B} > p > \revnot{A}$.

  We claim that there is a set $A' \subseteq A$ such that $\|A'\| \leq
  k$ and $p$ is a unique winner of $(C \cup A',V)$ if and only if
  $(B,\calS)$ is a ``yes''-instance of $\xthreec$.

  To show the claim, let $E' = (C,V)$. For each pair of distinct
  elements $b_i,b_j \in B$, we have that $N_{E'}(b_i,b_j) = n + 1$,
  $N_{E'}(p,b_i) = n+1$, and $N_{E'}(b_i,p) = n+1$. That is, all
  candidates in $E'$ tie. Now consider some set $A'' \subseteq A$,
  $\|A''\| \leq k$, and an election $E'' = (C \cup A'',V)$. Values of
  $N_{E''}$ and $N_{E'}$ are the same for each pair of candidates in
  $\{p\} \cup B$. For each pair of distinct elements $a_i, a_j \in
  A''$, we have $N_{E''}(p,a_i) = n+2$, $N_{E''}(a_i,p) = n$, and
  $N_{E''}(a_i,a_j) = n+1$.  For each $b_i \in B$ and each $a_j \in
  A''$ we have that
  \[ N_{E''}(b_i,a_j) = \left\{ \begin{array}{ll}
      n   & \mbox{ if $b_i \in S_j$,} \\
      n+1 & \mbox{ if $b_i \notin S_j$,}
  \end{array}\right. 
  \]
  and, of course, $N_{E''}(a_j,b_i) = 2n+2 - N_{E''}(b_i,a_j)$. Thus,
  by definition of maximin, we have the following scores in $E''$: (a)
  $\score_{E''}(p) = n+1$, (b) for each $a_j \in A''$,
  $\score_{E''}(a_j) = n$, and (c) for each $b_i \in B$,
  \[
  \score_{E''}(b_i) = \left\{ \begin{array}{ll}
      n    & \mbox{ if $(\exists a_j \in A'')[b_i \in S_j]$,} \\
      n+1  & \mbox{ otherwise. } \\
    \end{array}\right.
  \]

  $A''$ corresponds to a family $S''$ of 3-sets from $\calS$ such that
  for each $j$, $1 \leq j \leq n$, $S''$ contains set $S_j$ if and
  only if $A''$ contains $a_j$.  Since $\|A''\| \leq k$, it is easy to
  see that $p$ is a unique winner of $E''$ if and only if $S''$ is
  an exact cover of $B$.
\end{proof}

Copeland$^\alpha$, $0 \leq \alpha \leq 1$, is resistant to
constructive AC control, but for $\alpha \in \{0,1\}$,
Copeland$^\alpha$ is vulnerable to constructive control by adding an
unlimited number of candidates. It turns out that so is
maximin. However, interestingly, in contrast to
Copeland,  maximin is also vulnerable to DC
control, and in fact even to AC$_\mathrm{u}$+DC control.  Intuitively,
in constructive AC$_\mathrm{u}$+DC control we should add as many
candidates as possible (because adding a candidate generally decreases
other candidates' scores, making our preferred candidate's way to
victory easier) and then delete those candidates who stand in our
candidate's way (i.e., those whose existence blocks the preferred candidate's score
from increasing). Studying constructive AC$_\mathrm{u}$+DC control for
maximin jointly leads to a compact, coherent algorithm. If we were to
consider both control types separately, we would have to give two
fairly similar algorithms while obtaining a weaker result.

\begin{theorem}\label{thm:maximin:acu+dc}
  Maximin is vulnerable to constructive AC$_\mathrm{u}$+DC control.
\end{theorem}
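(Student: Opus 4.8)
The plan is to exhibit a polynomial-time algorithm; since maximin is susceptible to all basic control types (noted just above this subsection), that suffices for vulnerability. The key structural fact is that this problem involves no voter control, so the pairwise tally $N_V(c,c')$ (the number of voters in $V$ preferring $c$ to $c'$) depends only on the two candidates compared, not on which candidate set is present. Hence adding a candidate can only weakly \emph{decrease} the maximin score $\score(x)=\min_{y\neq x}N_V(x,y)$ of every surviving candidate $x$ (the minimum runs over a larger set), while deleting a candidate can only weakly \emph{increase} surviving scores. In particular $p$'s final score $\min_{y\in C''-\{p\}}N_V(p,y)$ over the final candidate set $C''$ is at least $s$ exactly when every surviving $y\neq p$ has $N_V(p,y)\ge s$.

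First I would iterate over the intended value $s\in\{0,1,\dots,\|V\|+1\}$ of $p$'s final score and, for each $s$, test whether $p$ can be made a unique winner with $\score(p)\ge s$ while every other surviving candidate scores at most $s-1$. Ranging over $s$ loses nothing, since in any successful action with resulting $p$-score $s^{*}$ every opponent scores at most $s^{*}-1<s^{*}\le\score(p)$. Fix $s$. To force $\score(p)\ge s$ we must remove every $y\neq p$ with $N_V(p,y)<s$: the original such candidates form a forced deletion set $D_s$ (discard this $s$ if $\|D_s\|>k_{\dc}$), and spoilers with $N_V(p,y)<s$ are simply not added. Within the remaining ``safe'' pool $S=\{x\in(C\cup A)-\{p\}:N_V(p,x)\ge s\}$, a surviving subset $T\subseteq S$ makes every member score at most $s-1$ exactly when each $c\in T$ has a \emph{dominator}: some $c'\in(T\cup\{p\})-\{c\}$ with $N_V(c,c')\le s-1$. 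Call such a $T$ \emph{valid}.

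The crucial point is that validity is closed under union: if $T_1,T_2$ are valid, then each $c\in T_1\cup T_2$ retains its dominator inside $(T_1\cup T_2)\cup\{p\}$, so $T_1\cup T_2$ is valid. Hence there is a unique maximal valid set $T^{*}$, which I would compute by iterated removal---start from all of $S$ and repeatedly delete any candidate lacking a dominator among the current survivors together with $p$---stabilizing in polynomially many rounds at the greatest fixpoint; a standard monotone-operator argument shows every removed candidate lies in no valid set, so the fixpoint is exactly $T^{*}$. Because additions are free (this is AC$_{\mathrm{u}}$ control) and we wish to minimize deletions, the cheapest solution for this $s$ keeps every safe original in $T^{*}$, adds every safe spoiler in $T^{*}$, and deletes exactly $D_s$ together with the safe originals outside $T^{*}$, with surviving set $\{p\}\cup T^{*}$; its deletion cost is $\|D_s\|+\|((C-\{p\})\cap S)\setminus T^{*}\|$. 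I accept iff this cost is at most $k_{\dc}$ for some $s$. The ``only if'' direction is immediate: any successful action with $p$-score $s^{*}$ leaves a valid surviving opponent set, which is contained in $T^{*}$, so its deletion cost is at least the cost computed at $s=s^{*}$, hence within budget forces acceptance.

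The main obstacle, deserving the most care, is proving that collapsing to the single maximal valid configuration $T^{*}$ loses no solutions. This requires verifying both the monotonicity of scores under addition and deletion (so that enlarging the surviving safe set never un-dominates an already-dominated candidate and never pushes $\score(p)$ below $s$) and the union-closure of validity (so that simultaneously keeping all of $T^{*}$ is itself feasible and thus minimizes deletions). The remaining ingredients---invariance of the tallies $N_V$, the forced-deletion bookkeeping, and the polynomial running time of the iterated removal over the polynomially many choices of $s$---are routine.
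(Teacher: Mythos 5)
Your proposal is correct and follows essentially the same route as the paper's proof: iterate over candidate score thresholds for $p$, force-delete every candidate $y$ with $N_V(p,y)$ below the threshold, then iteratively remove any surviving candidate whose maximin score is at least the threshold (your ``lacks a dominator'' condition is exactly this), and finally check the deletion budget against the original candidates removed. Your union-closure/greatest-fixpoint justification is just a slightly more formal packaging of the paper's monotonicity argument (adding candidates never raises, and deleting never lowers, other candidates' scores), so the two proofs coincide in substance.
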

\begin{proof}
  We give a polynomial-time algorithm for constructive
  maximin-AC$_\mathrm{u}$+DC control. The input contains an election
  $E = (C,V)$, a set of spoiler candidates $A$, a preferred candidate
  $p \in C$, and a nonnegative integer $k_\dc$. Voters in $V$ have
  preference lists over the candidates in $C \cup A$.  We ask whether
  there exist sets $A' \subseteq A$ and $C' \subseteq C$ such that (a)
  $\|C'\| \leq k_\dc$ and (b) $p$ is a unique winner of election
  $((C-C')\cup A',V)$. If $k_\dc \geq \|C\|-1$, we accept immediately
  because we can delete all candidates but $p$. Otherwise, we use the
  following algorithm.

  \begin{description}

  \item[Preparation.]  We rename the candidates in $C$ and $A$ so that
    $C = \{p, c_1, \ldots, c_m\}$ and $A = \{c_{m+1}, \ldots,
    c_{m+m'}\}$.  Let $E' = (C \cup A,V)$ and let $P = \{ N_{E'}(p,
    c_i) \mid c_i \in C \cup A\}$. That is, $P$ contains all the
    values that candidate $p$ may obtain as scores upon deleting some
    candidates from $E'$. For each $k \in P$, let $Q(k) = \{c_i \mid
    c_i \in C \cup A - \{p\} \land N_{E'}(p,c_i) < k\}$. Intuitively, $Q(k)$
    is the set of candidates in $E'$ that prevent $p$ from having at
    least $k$ points.

  \item[Main loop.] For each $k \in P$, our algorithm tests
    whether by deleting at most $k_\dc$ candidates from $C$ and any
    number of candidates from $A$ it is possible to ensure that $p$
    obtains exactly $k$ points and becomes a unique winner of
    $E'$. Let us fix some value $k \in P$. We build a set $D$ of
    candidates to delete. Initially, we set $D = Q(k)$. It is easy to
    see that deleting candidates in $Q(k)$ is a necessary and
    sufficient condition for $p$ to have score $k$. However, deleting
    candidates in $Q(k)$ is not necessarily sufficient to ensure that
    $p$ is a unique winner because candidates with scores greater or equal to
    $k$ may exist. We execute the following loop (which we will call
    the \emph{fixing loop}): 
    \begin{enumerate}
    \item\label{step:acdc:1} Set $E'' = ((C\cup A)-D,V)$.
    \item\label{step:acdc:2} Pick a candidate $d \in (C \cup A)-D$
      such that $\score_{E''}(d) \geq k$ (break from the loop if no
      such candidate exists).
    \item\label{step:acdc:3} Add $d$ to $D$ and jump back to
      Step~\ref{step:acdc:1}.
    \end{enumerate}
    \noindent
    We accept if $C \cap D \leq k_\dc$ and we proceed to the next
    value of $k$ otherwise.\footnote{If we accept, $D$ implicitly
      describes the control action that ensures $p$'s victory: We
      should delete from $C$ the candidates in $C \cap D$ and add from
      $A$ the candidates in $A - D$.}  If none of the values $k \in P$
    leads to acceptance then we reject.
  \end{description}

  Let us now briefly explain why the above algorithm is correct. It is
  easy to see that in maximin adding some candidate $c$ to an election
  does not increase other candidates' scores, and deleting some
  candidate $d$ from an election does not decrease other candidates'
  scores. Thus, if after deleting candidates in $Q(k)$ there still are
  candidates other than $p$ with $k$ points or more, the only way to
  ensure $p$'s victory---without explicitly trying to increase $p$'s
  score---is by deleting those candidates. Also, clearly, the only way
  to ensure that $p$ has exactly $k$ points is by deleting candidates
  $Q(k)$.

  Note that during the execution of the fixing loop, the score of $p$
  might increase to some value $k' > k$. If that happens, it means
  that it is impossible to ensure $p$'s victory while keeping his or
  her score equal to $k$. However, we do not need to change $k$ to
  $k'$ in that iteration of the main loop as we will consider $k'$ in
  a different iteration.
\end{proof}

Maximin is also vulnerable to destructive AC+DC control.  The proof
relies on the fact that (a) if there is a way to prevent a despised
candidate from winning a maximin election via adding some spoiler
candidates then there is a way to do so by adding at most $2$ 
candidates, (b) adding a candidate cannot increase the score of any
candidate other than the added one, and (c) deleting a candidate cannot
decrease the score of any candidate other than the deleted one. 
In essence, the algorithm performs a brute-force search for the
candidates to add and then uses the constructive maximin-DC control
algorithm from Theorem~\ref{thm:maximin:acu+dc}.

\begin{theorem}
  Maximin is vulnerable to destructive AC+DC control.
\end{theorem}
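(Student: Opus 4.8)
The plan is to establish that destructive AC+DC control in maximin is vulnerable by exploiting three structural facts already highlighted in the theorem's lead-in: that adding candidates never raises any other candidate's score, that deleting candidates never lowers any other candidate's score, and---crucially---that whenever it is possible to dethrone the despised candidate $d$ by adding spoilers, it can be done by adding at most two candidates. I would first carefully prove this last ``two-candidate suffices'' fact, since it is the linchpin of the whole argument and the step I expect to be the main obstacle. The idea is that to prevent $d$ from being a \emph{unique} winner it suffices to make some single candidate $c$ tie or beat $d$; $d$'s maximin score is the minimum over head-to-head contests, and adding candidates only drives $d$'s score down (by introducing a new opponent against whom $d$ may fare poorly) while adding at most one new rival $c$ that already sits in the election. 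So one would argue that we need to add at most one spoiler to \emph{create} a strong competitor $c$ (if that competitor is not already present) and at most one further spoiler to \emph{lower} $d$'s own score, giving the bound of two.

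With that combinatorial fact in hand, the algorithm is a brute-force enumeration layered on top of the constructive deletion machinery from Theorem~\ref{thm:maximin:acu+dc}. Concretely, I would iterate over all choices of at most two candidates from the spoiler set $A$ to add---there are only $O(\|A\|^2)$ such choices, so this is polynomial. For each such choice, I add those candidates to the election and then ask whether, by deleting at most $k_\dc$ candidates from $C$, I can ensure that $d$ is \emph{not} a unique winner. Because adding is now fixed and only deletion remains as a variable, and because destructively preventing $d$'s unique victory reduces to guaranteeing that some other surviving candidate has score at least that of $d$, this residual deletion problem is exactly the kind of monotone optimization handled by the score-difference-maximizing greedy loop of Theorem~\ref{thm:maximin:acu+dc}: for each candidate $c$ we wish to promote past $d$, we greedily delete the candidates that most depress $c$'s score relative to $d$'s, exploiting monotonicity to argue the greedy order is irrelevant and optimal.

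The key steps, in order, are: (1)~prove the two-candidate lemma, handling the subcase where the decisive competitor $c$ already lies in $C$ versus the subcase where $c$ must itself be added, and confirming that lowering $d$'s score costs at most one additional added candidate; (2)~set up the outer loop over the $O(\|A\|^2)$ subsets of spoilers to add; (3)~for each fixed addition, invoke (a destructive adaptation of) the maximin deletion algorithm, which greedily deletes up to $k_\dc$ candidates from $C$ to push some competitor's score to meet or exceed $d$'s; and (4)~accept if any branch succeeds and reject otherwise, noting that correctness follows from the monotonicity facts (adding weakly lowers others' scores, deleting weakly raises them) together with the two-candidate bound guaranteeing the search is exhaustive. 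I expect step~(1) to be where the real work lies: one must rule out that three or more spoilers could ever be strictly necessary, and the argument hinges on the observation that the \emph{only} ways added candidates help are by depressing $d$'s minimum and by supplying a single new challenger, neither of which benefits from more than one candidate apiece. The remaining steps are then routine bookkeeping and a direct appeal to the already-proved deletion algorithm.
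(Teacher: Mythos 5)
Your overall architecture matches the paper's: both arguments rest on a two-candidate lemma (one added candidate $c'$ to serve as challenger, one added candidate $d'$ whose only job is to pin $d$'s score down), followed by a polynomial enumeration and a deletion subroutine. Your statement and sketch of the two-candidate lemma are essentially the paper's, and that part is fine.

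The genuine gap is in your step~(3), the deletion subroutine. First, you misdescribe Theorem~\ref{thm:maximin:acu+dc}: its algorithm is not a ``score-difference-maximizing greedy loop''---that description fits the Copeland$^\alpha$ destructive AC+DC algorithm of Theorem~\ref{thm:combine}(c), which is correct only because Copeland scores decompose \emph{additively} over head-to-head contests. The maximin algorithm instead enumerates a target score $k$ and deletes an entire threshold set of blocking candidates at once. Second, and more importantly, your deletion phase never identifies or protects the pin. The paper enumerates pairs $\{c',d'\} \subseteq (C \cup A)-\{d\}$ (so the challenger and the pin may be \emph{original} candidates, not only spoilers---your outer loop over $A$ alone misses this), forbids deleting $d$, $c'$, and $d'$, and then maximizes $c'$'s score by deleting exactly the candidates $c_i$ with $N_E(c',c_i)$ below the best achievable threshold; since $d'$ survives, $\score(d)$ stays capped at $N_E(d,d')$, which is what makes ``maximize the challenger's score'' provably sufficient. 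Your rule ``greedily delete the candidates that most depress $c$'s score relative to $d$'s,'' taken one candidate at a time, fails on maximin for two reasons: (i)~a deletion that raises $c$'s score can simultaneously unpin $d$ and raise $d$'s score as much or more; and (ii)~because a maximin score is a minimum, no \emph{single} deletion may improve the difference even when deleting a set does---e.g., if $c$ is held down by both $x$ and $y$ with $N_E(c,x)=N_E(c,y)=2$ while $N_E(c,d)=N_E(d,c)=5$ and $N_E(d,x)=N_E(d,y)=6$, then deleting $x$ or $y$ alone changes neither score, but deleting both lifts $c$ into a tie with $d$. A locally guided greedy therefore stalls on a plateau or commits to dead-end deletions. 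Repairing your proof requires exactly the ingredient you dropped: enumerate the pin together with the challenger and replace the one-at-a-time greedy with the threshold-set deletion rule.
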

\begin{proof}
  We will first give an algorithm for destructive maximin-AC and then
  argue how it can be combined with the algorithm from
  Theorem~\ref{thm:maximin:acu+dc} to solve destructive maximin-AC+DC
  in polynomial time.

  Let us first focus on the destructive AC problem.  Our input is an
  election $E = (C,V)$, where $C = \{d, c_1, \ldots, c_m\}$ and $V =
  (v_1, \ldots, v_n)$, a spoiler candidate set $A = \{c_{m+1}, \ldots,
  c_{m'}\}$, and a nonnegative integer $k_\ac$. The voters have
  preference orders over $C \cup A$. The goal is to ensure that $d$ is
  not a unique winner of $E$ via adding at most $k_\ac$ candidates
  from $A$.

  Let us assume that there exists a set $A' \subseteq A$ such that $d$
  is not a unique winner of election $E' = (C \cup A',V)$. Since $d$
  is not a unique winner of $E'$, there exists some candidate $c' \in
  C \cup A'$ such that $\score_{E'}(c') \geq \score_{E'}(d)$. Also, by
  definition of maximin, there is some candidate $d' \in C \cup A'$
  such that $\score_{E'}(d) = N_{E'}(d,d')$. As a consequence, $d$ is
  not a unique winner of election $E'' = (C \cup \{c',d'\},V)$. The
  reason is that $\score_{E''}(d) = \score_{E'}(d)$ (because both $E'$
  and $E''$ contain $d'$) and $\score_{E''}(c') \geq \score_{E'}(c')$
  (because adding the remaining $A' - \{c',d'\}$ candidates to $E''$
  does not increase $c'$'s score). Thus, to test whether it is
  possible to ensure that $d$ is not a unique winner of $E$, it
  suffices to test whether there is a set $A'' \subseteq A$ such that
  $\|A''\| \leq \min(2,k_\ac)$ and $d$ is not a unique winner of
  $(C\cup A'',V)$. Clearly, this test can be carried out in polynomial
  time.

  Let us now consider the AC+DC case. The input and the goal are the
  same as before, except that now we are also given a nonnegative
  integer $k_\dc$ and we are allowed to delete up to $k_\dc$
  candidates.  We now describe our algorithm.  For each set $\{c',d'\}$
  of up to $2$ candidates, $\{c',d'\} \subseteq (C \cup A)-\{d\}$ we
  execute the following steps.  
  \begin{enumerate}
  \item We check if $\|A \cap \{c',d'\}\| \leq k_\ac$ (and we proceed
    to the next $\{c', d'\}$ if this is not the case).
  \item\label{alg:ac+dc:step2} We compute a set $D \subseteq C -
    \{d,c',d'\}$, $\|D\| \leq k_\dc$, that maximizes
    $\score_{E'}(c')$, where $E' = ((C \cup \{c',d'\})-D,V)$.
  \item If $d$ is not a unique winner of $E' = ((C \cup
    \{c',d'\})-D,V)$, we accept.
  \end{enumerate}
  We reject if we do not accept for any $\{c',d'\} \subseteq (C \cup
  A)-\{d\}$.

  The intended role of $d'$ is to lower the score of $d$ and keep it
  at a fixed level, while, of course, the intended role of $c'$ is to
  defeat $d$. By reasoning analogous to that for the AC case, we can
  see that there is no need to add more than two candidates.  Thus,
  given $\{c',d'\}$, it remains to compute the appropriate set $D$. In
  essence, we can do so in the same manner as in the constructive
  AC+DC case.

  Let $k$ be some positive integer. We set $D(k) = \{c_i \in C -
  \{c',d',d\} \mid N_E(c',c_i) < k \}$ and we pick $D = D(i)$, where
  $i$ is as large as possible (but no larger than $\|V\|$) and $\|D\|
  \leq k_\dc$. Deleting candidates in $D$ maximizes the score of $c'$,
  given that we cannot delete $d$ and $d'$.  It is easy to see that
  this $D$ can be computed in polynomial time. 
\end{proof}

\subsection{Control by Adding and Deleting Voters in Maximin}
In this section we consider the complexity of
constructive and destructive AV and DV control types.
(We will consider bribery, BV, in the next section;  recall that
in this paper, bribery is a basic control type, though 
it is usually treated separately in the literature.)
In the previous section we have seen that maximin is
vulnerable to all basic types of constructive and destructive candidate
control except for constructive control by adding candidates
(constructive AC control).  The situation regarding voter control is
quite different: As shown in the next three theorems, maximin is
resistant to all basic types of constructive and destructive voter
control. 

\begin{theorem}
  Maximin is resistant to constructive and destructive AV control.
\end{theorem}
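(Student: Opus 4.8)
The plan is to prove resistance by establishing $\np$-hardness of both problems; since maximin has already been observed to be susceptible to all basic types of constructive and destructive control (in particular to AV), hardness is all that remains in each case. Both reductions will start from \xthreec{} instances $(B,\calS)$ with $B=\{b_1,\dots,b_{3k}\}$ and $\calS=\{S_1,\dots,S_n\}$, and in both I would encode the element structure by one addable voter per set, arranged so that adding the voter for $S_j$ improves the relevant candidate's head-to-head counts against exactly the three elements of $S_j$.

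For the constructive case I would build an election whose candidates are the preferred candidate $p$, one candidate $b_i$ per element of $B$, and one or two auxiliary candidates. The addable collection $W$ contains, for each $S_j$, a voter whose preference order has the form $[B\setminus S_j] > p > [S_j] > \cdots$ (via Convention~A), so that adding it raises $N(p,b_i)$ for $b_i\in S_j$ and raises $N(b_i,p)$ for $b_i\notin S_j$. I would choose the margins produced by the registered voters $V$ so that $p$'s maximin score is governed by its contests against the $b_i$, and so that after the allowed additions $p$ defeats $b_i$ if and only if $b_i$ has been ``covered'' by some added voter. Because each set covers three elements and the budget is $k$, covering all $3k$ elements within budget forces the chosen sets to form an exact cover; conversely an exact cover yields a successful control action. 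Correctness in both directions then rests on the monotonicity fact that, since each covering voter ranks $p$ above only the elements it covers, the value $N(p,b_i)$ increases by exactly the number of added voters covering $b_i$.

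The one genuine subtlety---the step I expect to be the main obstacle---is that in maximin the score is a \emph{raw} pairwise count, so adding fewer voters lowers the effective majority threshold and could in principle let $p$ (or, below, a rival) succeed without an exact cover. I would neutralize this in two complementary ways: (i) pin the quantity that the winner must overtake as an \emph{immune} raw count, i.e., a pairwise margin set by $V$ that the addable voters are constructed never to change (for instance a contest $b_i$ versus a fixed auxiliary candidate), so the comparison target does not drift as voters are added; and (ii) where pinning the exact number of additions is needed, include a benchmark candidate $w$ that $p$ can overcome only after at least $k$ additions, which together with $\|W'\|\le k$ forces exactly $k$ voters to be added. A small amount of parity padding in $V$ makes the relevant threshold inequalities strict in the right direction. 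All required initial margins are realizable by an explicit polynomial-size set of registered voters using cancelling pairs (a McGarvey-style construction), and one verifies that no auxiliary candidate's score ever reaches $p$'s winning level.

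For the destructive case I would reuse the same covering gadget with a despised candidate $d$ and a single challenger $e$: the addable voter for $S_j$ raises $N(e,b_i)$ for $b_i\in S_j$ while ranking $d$ so that $d$'s binding contest, and hence $d$'s maximin score, is left unchanged, and while keeping every $b_i$ and every other candidate permanently below $d$. The margins are set so that $e$'s score---a minimum over its contests against the $b_i$---rises to \emph{tie} $d$'s fixed score exactly when every element has been covered; since $\|W'\|\le k$, this happens if and only if $(B,\calS)$ has an exact cover, at which point $d$ ceases to be a unique winner. The advantage here is that $d$'s target score can be kept immune to additions, so the comparison is against a fixed value and the threshold-shift issue is confined to checking that $d$'s protected contest, and the capped scores of the remaining candidates, behave as intended. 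I would close by confirming both reductions are polynomial-time computable, which gives resistance in each case.
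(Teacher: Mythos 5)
Your proposal is correct and takes essentially the same route as the paper: both reduce from $\xthreec$ using addable voters of the form $B - S_j > p > S_j > d$, pin the score that must be overtaken by means of an auxiliary candidate ($d$ in the paper) whom every addable voter ranks last and below $p$, so that its maximin score is immune to additions, and force an exact cover via the budget of $k$ additions having to cover all $3k$ elements; the destructive case is the same gadget with a challenger (the paper simply reuses $p$, with one registered voter removed) needing only to tie the pinned score. The subtlety you single out---that raw pairwise counts drift as voters are added---and your fix via immune counts is exactly the mechanism the paper's construction relies on.
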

\begin{proof}
We will first give an $\np$-hardness proof
  for the constructive case and then we will describe how to modify it
  for the destructive case.

  We now give a reduction of the $\xthreec$ problem to the
  constructive maximin-AV problem.  Our input $\xthreec$ instance is
  $(B,\calS)$, where $B = \{b_1, \ldots, b_{3k}\}$ is a set of $3k$
  distinct elements and $\calS = \{S_1, \ldots, S_n\}$ is a family of
  $n$ $3$-element subsets of $B$. Without loss of generality, we assume
  $k \geq 1$. Our reduction outputs the following instance. We have an
  election $E = (C,V)$, where $C = B \cup \{p,d\}$ and $V = (v_1,
  \ldots, v_{4k})$.  There are $2k$ voters with preference order $d >
  B > p$, $k$ voters with preference order $p > B > d$, and $k$ voters
  with preference order $p > d > B$. In addition, we have a collection $W =
  (w_1, \ldots, w_n)$ of unregistered voters, where the $i$'th
  voter, $1 \leq i \leq n$, has preference order
  \[ B - S_i > p > S_i > d.\] We claim that there is a subcollection $W'
  \subseteq W$ such that $\|W'\| \leq k$ and $p$ is a unique winner of
  election $(C,V \cup W')$ if and only if $(B,\calS)$ is a
  "yes"-instance of $\xthreec$.

  It is easy to verify that for each $b_i \in B$ it holds that
  $N_E(p,b_i) = 2k$, and that $N_E(p,d) = 2k$. Thus, $\score_E(p) =
  2k$. Similarly, it is easy to verify that $\score_E(d) = 2k$, and
  that for each $b_i \in B$, $\score_E(b_i) \leq k$.  Let $W''$ be a
subcollection of $W$ such that $\|W''\| \leq k$ and let $E'' = (C,V
  \cup W'')$. For each $b_i \in B$ it holds that $\score_{E''}(b_i)
  \leq 2k$.  Since each voter in $W$ ranks $d$ as the least desirable
  candidate, $\score_{E''}(d) = 2k$. What is $p$'s score in election
  $E''$? If there exists a candidate $b_i \in B$ such that there is no
  voter $w_j$ in $W''$ that prefers $p$ to $b_i$, then
  $\score_{E''}(p) = 2k$ (because $N_{E''}(p,b_i) = 2k$). Otherwise,
  $\score_{E''}(p) \geq 2k+1$.  Thus, $p$ is a unique winner of
  $E''$ if and only if $W''$ corresponds to an exact cover of
  $B$. This proves our claim and, as the reduction is clearly
  computable in polynomial time, concludes the proof for the
  constructive maximin-AC case.

  To show that destructive maximin-AC is $\np$-hard, we use the same
  reduction, except that we remove from $V$ a single voter with
  preference list $p > B > d$, and we set the task to preventing $d$
  from being a unique winner. Removing a $p > B > d$ voter from $V$
  ensures that before we start adding candidates, $d$ has score $2k$
  (and this score cannot be changed), $p$ has score $2k-1$ (and $p$
  needs to get one point extra over each other candidate to increase
  his or her score and prevent $d$ from being a unique winner), and
  each $b_i \in B$ has score $k-1$ (thus, no candidate in $B$ can
  obtain score higher than $2k-1$ via adding no more than $k$
  candidates from $W$). The same reasoning as for the constructive
  case proves that the reduction correctly reduces $\xthreec$ to destructive
  maximin-AV.
\end{proof}

\begin{theorem}
  Maximin is resistant to constructive and destructive DV control.
\end{theorem}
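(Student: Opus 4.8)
The plan is to prove $\np$-hardness of both the constructive and the destructive maximin-DV problems by reductions from $\xthreec$, following the template of the preceding AV theorem. The conceptual starting point is an observation dual to the one used there: deleting a voter can never \emph{increase} any candidate's maximin score, since for every pair $(c,c')$ the count $N_E(c,c')$ only weakly decreases when a voter is removed. Hence in the constructive case we cannot hope to push $p$'s score up; instead the reduction must force the attacker to push every rival strictly below $p$'s (essentially fixed) score, and this is exactly where the covering structure of $\xthreec$ can be made to appear.

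For the constructive case I would, given $(B,\calS)$ with $B=\{b_1,\dots,b_{3k}\}$, build an election with candidate set $B\cup\{p,d\}$. Using a block of base voters together with reversed pairs (Convention~A, writing $\revnot{\cdot}$ for a reversed listing), I would pin the pairwise margins so that $\score_E(p)=T$ with $p$'s bottleneck occurring against $d$ (i.e.\ $N_E(p,d)=T$), while $N_E(p,b_i)=T+k$ for every $i$; each $\score_E(b_i)=T$ with its bottleneck against $p$ (i.e.\ $N_E(b_i,p)=T$); and $\score_E(d)<T$. Thus initially $p$ and all of $B$ are tied at $T$ and $p$ is not a unique winner. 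For each set $S_j\in\calS$ I would add one deletable ``set-voter'' $u_j$ with preference order $S_j > p > (B-S_j) > d$, and set the deletion budget to $k_{\dv}=k$. Deleting $u_j$ lowers $N_E(b_i,p)$ by one for each $b_i\in S_j$ (dropping those $\score_E(b_i)$ below $T$), leaves $N_E(b_i,p)$ unchanged for $b_i\notin S_j$, and---because every $u_j$ ranks $d$ below $p$---never touches $N_E(p,d)$, so $\score_E(p)$ stays at $T$; the slack $+k$ in $N_E(p,b_i)$ guarantees that the at most $k$ deletions cannot drop $p$'s score either.

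The correctness argument is then the usual budget-forces-exactness dichotomy: to make $p$ a unique winner every $b_i$ must be pushed below $T$, so every element must be covered by the three-set of some deleted $u_j$; since we may delete at most $k$ voters and three-sets cover at most $3k$ elements, this is possible iff the deleted sets are $k$ pairwise-disjoint sets covering $B$, i.e.\ iff $(B,\calS)$ has an exact cover. The main obstacle---and the part requiring genuine care rather than a formula---is engineering the base electorate so that (i) all the stated margins hold simultaneously (the reversed-pair padding of Convention~A is the tool here), and, crucially, (ii) no \emph{other} deletions help: I must verify that deleting base voters cannot lower the $\score_E(b_i)$'s below $T$ more cheaply than the set-voters do, so that an optimal attack really is forced to delete a cover. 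Ensuring this ``no-cheating'' property, while keeping $\score_E(d)<T$ throughout, is where the bookkeeping concentrates.

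For the destructive case the goal is instead to prevent a despised candidate from being a unique winner, and I would give an analogous $\xthreec$ reduction. Here the delicate point is that a single deletion can already lower a candidate's score by breaking one bottleneck, so the construction must be arranged so that the despised candidate's lead over the entire field can be eroded only by a collection of deletions whose corresponding three-sets cover $B$---again with budget $k$ forcing an exact cover. As with the AV theorem, this can be obtained by a modification of the constructive gadget (shifting the roles of the distinguished candidates and trimming the base electorate by one voter), after which the same margin-pinning and no-cheating analysis carries over.
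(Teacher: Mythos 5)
Your high-level framing is correct (deleting voters can only weakly decrease maximin scores, so the attacker must drive every rival below $p$'s essentially fixed score), but the central mechanism of your constructive gadget contains a sign error that kills the reduction. You pin $p$'s bottleneck against $d$ at $N_E(p,d)=T$ and claim that deleting a set-voter $u_j$ with order $S_j > p > (B-S_j) > d$ ``never touches $N_E(p,d)$'' because $u_j$ ranks $d$ below $p$. That is backwards: $N_E(p,d)$ counts the voters who prefer $p$ to $d$, and $u_j$ is exactly such a voter, so deleting $u_j$ \emph{decreases} $N_E(p,d)$ by one. Consequently, deleting the $k$ set-voters of an exact cover drops $p$'s score to $T-k$, while every other candidate $b_i$ starts with all pairwise counts at least $T$ (its score being $T$), and any $k$ deletions lower each count by at most $k$; hence $\score(b_i) \geq T-k \geq \score(p)$ after the deletions, and $p$ is at best tied, never a unique winner. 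So the ``if'' direction of your reduction fails on every instance with $k \geq 1$. The natural repair is to rank $d$ \emph{above} $p$ in the deletable voters (the paper ranks $d$ first in all of them), but note that this is precisely where your ``no-cheating'' condition (ii) stops being bookkeeping and becomes the crux: the base voters needed to pad each $N_E(b_i,p)$ up to $T$ themselves rank $b_i$ above $p$, so deleting \emph{them} also ``covers'' $b_i$, and you give no construction or argument ruling out such cheaper attacks. As written, the proposal is a broken gadget plus an unspecified electorate, not a proof.

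For contrast, the paper sidesteps your difficulty with a different mechanism: it makes $d$ the \emph{initial unique winner} ($\score_E(d)=2n$, $\score_E(p)=2n-k+2$, every $b_i$ at most $2n-k$), ranks $d$ first in every deletable voter of $V'$ (with orders $d > B-S_i > p > S_i$ and $d > \revnot{S_i} > p > \revnot{B-S_i}$), so each deletion lowers $d$'s score by exactly one, and the covering structure guarantees $p$'s score falls by exactly one precisely when the $k$ deleted voters correspond to pairwise-disjoint sets; the budget and the initial gap of $k-2$ then force $\|W\|=k$, force $W \subseteq V'$ (deleting from $V''$ is shown to be self-defeating), and force exactness. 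The destructive case in the paper is also not the role-swap you gesture at: it is the same election with two voters removed from $V''$, followed by a separate counting argument that any successful deletion set must consist of exactly $k$ voters from $V'$ corresponding to an exact cover. Your one-sentence sketch supplies neither the modification nor that argument.
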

\begin{proof}
  We will first show $\np$-hardness for constructive maximin-DV
  control and then we will argue how to modify the construction to
  obtain the result for the destructive case.

  Our reduction is from $\xthreec$.  Let $(B,\calS)$ be our input
  $\xthreec$ instance, where $B = \{b_1, \ldots, b_{3k}\}$, $\calS =
  \{S_1, \ldots, S_n\}$, and for each $i$, $1 \leq i \leq n$,
  $\|S_i\|=3$.  Without loss of generality, we assume that 
  $n \geq k \geq 3$
(if $n < k$ then $\cal S$ does not contain
  a cover of $B$, and if $k \leq 2$ we can solve the problem by brute force).
We form an election $E = (C,V)$, where $C = B \cup
  \{p,d\}$ and where $V = V' \cup V''$, $V' = (v'_1, \ldots,
  v'_{2n})$, $V'' = (v''_1, \ldots, v''_{2n-k+2})$. For each $i$, $1
  \leq i \leq n$, voter $v'_i$ has preference order
  \[ d > B - S_i > p > S_i\]
  and voter $v'_{n+i}$ has preference order
  \[ d > \revnot{S_i} > p > \revnot{B-S_i}. \] Among the voters in
  $V''$ we have:
  $2$ voters with preference order $p > d > B$,
  $n-k$ voters with preference order $p > B > d$, and
  $n$ voters with preference order $B > p > d$.
  We claim that it is possible to ensure that $p$ is a unique winner of
  election $E$ via deleting at most $k$ voters if and only if
  $(B,\calS)$ is a ``yes''-instance of $\xthreec$.

  Via routine calculation we see that candidates in election $E$ have
  the following scores:
  \begin{enumerate}
  \item $\score_E(d) = 2n$ (because $N_E(d,p) = 2n$ and for each $b_i
    \in B$, $N_E(d,b_i) = 2n+2$),
  \item $\score_E(p) = 2n-k+2$ (because $N_E(p,d) = 2n-k+2$ and for
    each $b_i \in B$, $N_E(p,b_i) = 2n-k+2$), and
  \item for each $b_i \in B$, $\score_E(b_i) \leq 2n-k$ (because
    $N_E(b_i,d) = 2n-k$).
  \end{enumerate}
  Before any voters are deleted, $d$ is the unique winner with $k-2$
  more points than $p$. Via deleting at most $k$ voters it is possible
  to decrease $d$'s score at most by $k$ points. Let $W$ be a
  collection of voters such that $p$ is the unique winner of
  $E'=(C,V-W)$. We partition $W$ into $W' \cup W''$, where $W'$
  contains those members of $W$ that belong to $V'$ and $W''$ contains
  those members of $W$ that belong to $V''$.  We claim that $W''$ is
  empty. For the sake of contradiction let us assume that $W'' \neq
  \emptyset$.  Let $E'' = (C,V-W'')$. Since every voter in $V''$
  prefers $p$ to $d$, we have that $N_{E''}(p,d) = N_E(p,d)-\|W''\|$
  and, as a result, $\score_{E''}(p) \leq \score_E(p)-\|W''\|$. In
  addition, assuming $W''$ is not empty, it is easy to observe that
  $\score_{E''}(d) \geq \score_E(d)-\|W''\|+1$ (the reason for this is
  that deleting any \emph{single} member of $V''$ does not decrease
  $d$'s score). That is, we have that:
  \begin{eqnarray*}
    \score_{E''}(p) &\leq& 2n-k+2 - \|W''\|, \\
    \score_{E''}(d) &\geq& 2n+1 - \|W''\|.
  \end{eqnarray*}
  So in $E''$, $d$ has at least $k-1$ more points than $p$.
  Since $\|W''\| \geq 1$, we can delete at most $k-1$ voters $W'$ from
election $E''$.  But then $p$ will not be a unique winner of $E'$,
which is a contradiction.

  Thus, $W$ contains members of $V'$ only.  Since $d$ is ranked first
  in every vote in $V'$, deleting voters from $W$ decreases $d$'s
  score by exactly $\|W\|$.  Further, deleting voters $W$ certainly
  decreases $p$'s score by at least one point. Thus, after deleting
  voters $W$ we have:
  \begin{enumerate}
  \item $\score_{E'}(d) = 2n - \|W\|$,
  \item\label{item:maximin:dv} $\score_{E'}(p) \leq 2n - k + 2 -1 =
    2n-k+1$.
  \end{enumerate}
  In consequence, the only possibility that $p$ is a unique winner
  after deleting voters $W$ is that $\|W\| = k$ and we have equality
  in item~\ref{item:maximin:dv} above.  It is easy to verify that this
  equality holds if and only if $W$ contains $k$ voters among $v'_1,
  \ldots, v'_n$ that correspond to an exact cover of $B$ via sets from
  $\calS$ (recall that $k \geq 3$). This proves that our reduction is correct, and since the
  reduction is clearly computable in polynomial time, completes the
  proof of $\np$-hardness of constructive maximin-DV control.

  Let us now consider the destructive case. Let $(B,\calS)$ be our
  input $\xthreec$ instance (with $B$ and $\calS$ as in the
  constructive case). We form election $E = (C,V)$ which is identical
  to the one created in the constructive case, except that $V'' =
  (v''_1, \ldots, v''_{2n-k})$ and we set these voters' preference
  orders as follows: There is one voter with preference order $p > d >
  B$, $n-k$ voters with preference order $p > B > d$, and $n-1$ voters
  with preference order $B > p > d$. (That is, compared to the
  constructive case, we remove one voter with preference order $p > d
  > B$ and one with preference order $B > p > d$.) It is easy to see
  that $d$ is the unique winner of election $E$ and we claim that he
  or she can be prevented from being a unique winner via deleting at
  most $k$ voters if and only if there is an exact cover of $B$ by $k$
  sets from $\calS$.

  Via routine calculation, it is easy to verify that $\score_E(d) =
  2n$, and that $\score_E(p) = 2n-k$. The former holds because
  $N_E(d,p) = 2n$ and $N_E(d,b_i) = 2n+1$ and the latter holds because $N_E(p,d) = 2n-k$ and
  for each candidate $b_i \in B$ we have $N_E(p,b_i) = 2n-k+1$. In
  addition, each candidate $b_i \in B$ has score at most
  $2n-k-1$. Thus, it is possible to ensure that $d$ is not a unique
  winner via deleting at most $k$ voters if and only if there are
  exactly $k$ voters deleting whom would decrease the score of $d$ by
  $k$ points and would not decrease $p$'s score. Let us assume that
  such a collection of voters exists and let $W$ be such a collection.  Since every
  voter in $V''$ prefers $p$ to $d$, clearly $W$ does not contain any
  voter in $V''$. Thus, $W$ contains exactly $k$ voters from
  $V'$. Since for each $b_i \in B$ we have $N_E(p,b_i) = 2n-k+1$, for
  each $b_i \in B$ $W$ contains at most one voter who prefers $p$ to
  $b_i$. Since $\|B\| = 3k$ and $k \geq 3$, this implies that $W$ contains exactly a
  collection of voters corresponding to some exact cover of $B$ by sets in
  $\calS$.  This completes the proof for the destructive case.
\end{proof}

\subsection{Bribery in Maximin}

We now move on to bribery in maximin. Given the previous results, it
is not surprising that maximin is resistant both to constructive
bribery and to destructive bribery.  
Our proof is an
application of 
the ``UV technique'' 
of Faliszewski et al.~\cite{fal-hem-hem-rot:j:llull}.
Very informally, the idea is to build an
election in a way that ensures that the briber is limited to bribing
only those voters who rank two special candidates ahead of the
preferred one.
\begin{theorem}
  Maximin is resistant to constructive and destructive BV control.
\end{theorem}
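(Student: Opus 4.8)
The plan is to dispatch susceptibility immediately (maximin is susceptible to every basic control type, as already noted), and to establish $\np$-hardness of constructive maximin-BV by a reduction from $\xthreec$, saving the destructive case for a small modification at the end. Given an instance $(B,\calS)$ with $B=\{b_1,\ldots,b_{3k}\}$ and $\calS=\{S_1,\ldots,S_n\}$, I would build a maximin election over candidate set $B\cup\{p,u,v\}$ (plus a few fixed padding candidates if convenient), where $u$ and $v$ are the two special candidates of the UV technique. The votes would be tuned so that, before any bribery, $p$'s maximin score equals some threshold $s$ and this minimum is attained \emph{simultaneously} at the head-to-head contests $N_E(p,u)$, $N_E(p,v)$, and $N_E(p,b_i)$ for every $b_i\in B$, while every other candidate ($u$, $v$, each $b_i$, and any padding) has score at most $s$. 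The crucial structural feature is that the only voters ranking both $u$ and $v$ ahead of $p$ are $n$ ``set-voters,'' one per $S_i$, with set-voter $i$ arranged so that exactly $u$, $v$, and the three members of $S_i$ sit above $p$ (e.g.\ along the lines of $S_i > u > v > p > \revnot{B-S_i}$, with margins set to realize the scores above).

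With this setup, bribing set-voter $i$ to move $p$ to the top raises $N_E(p,u)$, $N_E(p,v)$, and $N_E(p,b)$ for each $b\in S_i$ by one, and nothing else that matters. Since $N_E(p,u)$ and $N_E(p,v)$ start out tight and the only voters with both $u$ and $v$ ahead of $p$ are set-voters, any bribery that lifts $p$'s score above $s$ is forced onto set-voters; and since $p$'s score is simultaneously bottlenecked at all $3k$ contests $N_E(p,b_i)$, raising it to $s+1$ requires every $b_i$ to be lifted, i.e.\ covered by some bribed set. Each set-voter covers only its three elements, so at most $k$ bribes can lift all $3k$ elements precisely when the chosen sets form an exact cover. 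I would then verify that in that case $p$ reaches score $s+1$ while all rivals stay at $\le s$, making $p$ the unique winner, and conversely that success with $\le k$ bribes forces an exact cover.

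The main obstacle is ruling out \emph{unintended} bribery strategies: the attacker must be unable to make $p$ win by raising $p$'s score more cheaply (say, through a single voter that ranks many $b_i$ above $p$) or by depressing a rival's score, without bribing $k$ set-voters that cover $B$. This is exactly the role of the two special candidates. Because $u$ and $v$ are both permanently tied with $p$ at the minimum and can only be overcome via voters ranking both of them above $p$, every score-increasing bribe is funneled onto a set-voter, so no padding voter that happens to rank a large chunk of $B$ above $p$ can relieve the $u$/$v$ bottleneck and hence none can substitute for an honest cover. The margins are chosen so that a set-voter bribe helps $p$ by at most the three ``covering'' points it is meant to contribute and cannot, as a side effect, push any competitor to a winning score; carrying this out reduces to a careful but routine computation showing each rival's maximin score stays $\le s$ under any admissible bribery, which is the technical heart of the argument. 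Finally, for the destructive case I would reuse the same gadget, shifting one or two padding voters so that before bribery the despised candidate $d$ is the unique winner by a single point, and argue analogously that $d$ can be dislodged within $k$ bribes exactly when an exact cover of $B$ exists.
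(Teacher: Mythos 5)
Your overall strategy (a reduction from $\xthreec$ via the UV technique, with two special candidates and set-voters that rank exactly $u$, $v$, and $S_i$ above $p$) matches the paper's, but your quantitative setup breaks the one step that makes the UV technique work: the funneling of \emph{every} bribe onto set-voters. You start $p$ tied at the top with score $s$, so a successful attack only needs to raise each of $p$'s bottleneck contests by one point. Your claim that ``any bribery that lifts $p$'s score above $s$ is forced onto set-voters'' is then false: to lift $N_E(p,u)$ and $N_E(p,v)$ by one each, the attacker may bribe one voter who ranks $u$ (but not $v$) above $p$ and a \emph{different} voter who ranks $v$ (but not $u$) above $p$; neither needs to be a set-voter. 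Once the $u$/$v$ bottleneck can be relieved by such a pair, the covering constraints on the $N_E(p,b_i)$ can be met by bribing ``wide'' voters that rank many elements of $B$ above $p$ --- and such voters are essentially unavoidable in your construction, because realizing the margins $N_E(p,b_i)=s$ for all $3k$ elements forces many non-set-voters to rank elements of $B$ above $p$ (if only set-voters did, then $N_E(p,b_i)$ would equal the total number of voters minus the number of sets containing $b_i$, a quantity that varies with $i$ and so cannot equal a common value $s$ on general $\xthreec$ instances). A second escape route you do not block: since $p$ starts tied with some rival(s) at $s$, the attacker can also win by \emph{lowering} the tied rivals' scores rather than raising $p$'s, which typically costs only a couple of bribes.

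The paper closes both holes with a budget-tightness argument that your sketch is missing. In the paper's construction the despised candidate $d$ itself serves as one of the two special candidates (together with a candidate $s$), and $d$ starts as the unique winner with score $n+5k+1$ while $p$ has score $n+3k+2$. Since $t$ bribes can change any candidate's score by at most $t$ points, success with budget $k$ forces $p$ to gain \emph{exactly} $k$ points and $d$ to lose \emph{exactly} $k$ points; hence every single bribed voter must simultaneously raise $N_E(p,d)$ and $N_E(p,s)$ and lower $N_E(d,p)$, i.e., must rank both special candidates above $p$. Only after this forcing does the structural property ``the only voters with both special candidates and some element of $B$ above $p$ are the set-voters'' pin the bribes to an exact cover; the paper's election can then freely contain wide voters (e.g., those voting $\revnot{B} > d > p > s$) because they are useless to the attacker. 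To repair your proof you would need to re-create this tightness --- placing the top rival roughly $2k$ points above $p$ so that both special contests must absorb the full budget --- rather than starting $p$ one point shy of victory.
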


\begin{proof}
  Our proofs follow via reductions from $\xthreec$.
  The reduction for the constructive case is almost identical the one
  for the constructive case and thus we will consider both cases in
  parallel.

  Our reductions work as follows. Let $(B,\calS)$ be an instance of
  $\xthreec$, where $B = \{b_1, \ldots, b_{3k}\}$ is a set of $3k$
  distinct elements, and $\calS = \{S_1, \ldots, S_n\}$ is a family of
  $3$-element subsets of $B$. (Without loss of generality, we assume that
  $n > k > 1$.  If this is not the case, 
  it is trivial to verify if $(B,\calS)$ is a ``yes''
  instance of $\xthreec$.)  We construct a set of candidates $C =
  \{p,d,s\} \cup B$, where $p$ is our preferred
  candidate (the goal in the constructive setting is to ensure $p$ is
  a unique winner) and $d$ is our despised candidate (the goal in the
  destructive setting is to prevent $d$ from being a unique winner).
  We construct six collections of voters, $V^1, V^2, V^3, V^4, V^5,
  V^6$, as follows:
  \begin{enumerate}
  \item $V^1$ contains $2n$ voters, $v^1_1, \ldots, v^1_{2n}$. For
    each $i$, $1 \leq i \leq n$, voters $v^1_i$ and $v^1_{i+n}$ have the
    following preference orders:
    \begin{eqnarray*}
      v^1_i    &:&  d > s > S_i > p > B-S_i \\
      v^1_{n+i} &:&  \revnot{B-S_i} > p > \revnot{S_i} > d > s.
    \end{eqnarray*}
  
  \item $V^2$ contains $2k$ voters, $v^2_1, \ldots, v^2_{2k}$. For
    each $i$, $1 \leq i \leq k$, voters $v^2_i$ and $v^2_{i+k}$ have the
    following preference orders:
    \begin{eqnarray*}
      v^2_i    &:&  s > d > p > B \\
      v^2_{k+i} &:&  \revnot{B} > d > p > s.
    \end{eqnarray*}

  \item $V^3$ contains $2k$ voters, $v^3_1, \ldots, v^3_{2k}$. For
    each $i$, $1 \leq i \leq k$, voters $v^3_i$ and $v^3_{i+k}$ have the
    following preference orders:
    \begin{eqnarray*}
      v^3_i    &:&  d > s > p > B \\
      v^3_{k+i} &:&  \revnot{B} > s > p > d. 
    \end{eqnarray*}

  \item $V^4$ contains $4k$ voters, $v^4_1, \ldots, v^4_{4k}$. For
    each $i$, $1 \leq i \leq 2k$, voters $v^4_i$ and $v^4_{i+2k}$ have the
    following preference orders:
    \begin{eqnarray*}
      v^4_i    &:&  d > B > p > s \\
      v^4_{2k+i} &:& s > p > d > \revnot{B}.
    \end{eqnarray*}

  \item $V^5$ contains $2$ voters, $v^5_1, v^5_{2}$ with the following
    preference orders
    \begin{eqnarray*}
      v^5_1    &:&  s > B > p > d \\
      v^5_{2}  &:&  d > \revnot{B} > p > s.
    \end{eqnarray*}

  \item $V^6$ contains a single voter, $v^6_1$, with preference order
    $p > d > s > B$.
  \end{enumerate}

  We form two elections, $E_c$ and $E_d$, where $E_c =
  (C,V^1\cup\cdots\cup V^6)$ and $E_d = (C,V^1\cup\cdots\cup V^5)$; that is, $E_c$
  and $E_d$ are identical except $E_d$ does not contain the single
  voter from $V^6$.  $E_c$ contains $2n+8k+3$ voters and $E_d$
  contains $2n+8k+2$ voters. Values of $N_{E_c}$ and $N_{E_d}$ for
  each pair of candidates are given in Table~\ref{tab:bribery}.

\begin{table}[!tbp]
\begin{center}
\subfigure[Values of $N_{E_c}(\cdot,\cdot)$.]{%
  \begin{tabular}{c|c|c|c|c}
   \hline
        & $p$ & $d$ & $s$ & $B$ \\
   \hline
   $p$  & --       & $n+3k+2$ & $n+3k+2$  & $n+4k+1$  \\
   $d$  & $n+5k+1$ & --       & $2n+4k+2$ & $n+6k+2$  \\
   $s$  & $n+5k+1$ & $4k+1$   & --        & $n+4k+2$  \\
   $B$  & $n+4k+2$ & $n+2k+1$ & $n+4k+1$  & $\leq n+4k+2$  \\
   \hline
  \end{tabular}
}
\subfigure[Values of $N_{E_d}(\cdot,\cdot)$.]{%
  \begin{tabular}{c|c|c|c|c}
   \hline
        & $p$ & $d$ & $s$ & $B$ \\
   \hline
   $p$  & --       & $n+3k+1$ & $n+3k+1$  & $n+4k$  \\
   $d$  & $n+5k+1$ & --       & $2n+4k+1$ & $n+6k+1$  \\
   $s$  & $n+5k+1$ & $4k+1$   & --        & $n+4k+1$  \\
   $B$  & $n+4k+2$ & $n+2k+1$ & $n+4k+1$  & $n+4k+1$  \\
   \hline
  \end{tabular}
}
\end{center}
\caption{\label{tab:bribery}Values of $N_{E_c}(\cdot,\cdot)$ and $N_{E_d}(\cdot,\cdot)$ 
  for each pair of candidates. Let $E$ be one of $E_c, E_d$. An entry
  in row $c' \in \{p,d,s\}$ and column $c'' \in \{p,d,s\}$, $c' \neq c''$, of the appropriate table
  above gives value $N_E(c',c'')$. For row $B$ and for column $B$ we adopt
  the following convention.
  For each $c \in \{p,d,s\}$ and for each $b_i \in B$, an entry in row $B$ and column $c$ is
  equal to $N_E(b_i,c)$.
  For each $c \in \{p,d,s\}$ and for each $b_i \in B$, an entry in row $c$ and column $B$ is
  equal to $N_E(c,b_i)$.
  For each two distinct $b_i,b_j \in B$, the entry in row $B$ and column $B$ is the upper
  bound on $N_E(b_i,b_j)$. (For $E_d$ this entry is, in fact, exact.)}
\end{table}

For the constructive case, we claim that it is possible to ensure that
$p$ is a unique winner of election $E_c$ by bribing at most $k$ voters if and
only if $(B,\calS)$ is a ``yes'' instance of $\xthreec$. Let us now
prove this claim.  By inspecting Table~\ref{tab:bribery},
and recalling that $n > k > 1$, we see that
$\score_{E_c}(p) = n+3k+2$, $\score_{E_c}(d) = n+5k+1$,
$score_{E_c}(s) = 4k+1$, and for each $b_i \in B$,
$\score_{E_c}(b_i) \leq n+2k+1$. That is, prior to any bribing, $d$ is
the unique winner and $p$ has the second highest score.

It is easy to see that by bribing $t \leq k$ voters, the briber
can change each candidate's score by at most $t$ points. Thus, for
the bribery to be successful, the briber has to bribe exactly $k$
voters in such a way that $d$'s score decreases to $n+4k+1$ and $p$'s
score increases to $n+4k+2$. To achieve this, the briber has to
find a collection $V'$ of voters such that $\|V'\| = k$, and
\begin{enumerate}
  \item each voter in $V'$ ranks $p$ below both $d$ and $s$, and
  \item for each $b_i \in B$, there is a voter in $V'$ who ranks $p$
    below $b_i$.
\end{enumerate}
The only voters that satisfy the first condition are $v^1_1,
\ldots, v^1_n, v^2_1,\ldots, v^2_k, v^3_1,\ldots, v^3_k$.  Further,
among these voters only $v^1_1, \ldots, v^1_n$ rank $p$ below some
member of $B$ and, in fact, for each $i$, $1 \leq i \leq n$, $v^1_i$
ranks $p$ below exactly three members of $B$. Thus, it is easy to see
that each $k$ voters from $v^1_1, \ldots, v^1_n, v^2_1,\ldots, v^2_k,
v^3_1,\ldots, v^3_k$ that satisfy the second condition correspond
naturally to a cover of $B$ by sets from $\calS$.  (Note that it
suffices that the briber bribes voters in $V'$ to rank $p$ first
without changing the votes in any other way, and that changing the
votes in any other way than ranking $p$ first is not necessary.)  As a
result, if it is possible to ensure that $p$ is a winner of $E_c$ by
bribing at most $k$ voters then $(B,\calS)$ is a ``yes'' instance of
$\xthreec$. 
For the other direction, it is easy to verify that if $(B,\calS)$ is a
``yes'' instance of $\xthreec$ then bribing $k$ voters from $v^1_1,
\ldots, v^1_n$ that correspond to a cover of $B$ to rank $p$ first
suffices to ensure that $p$ is a unique winner.  This completes the
proof for the constructive case.

For the destructive case, we claim that it is possible to ensure that
$d$ is not a unique winner of $E_d$ if and only if $(B,\calS)$ is a
``yes'' instance of $\xthreec$. The proof is analogous to the
constructive case: It suffices to note that $p$ is the only candidate
that can possibly tie for victory with $d$.
The rest of the proof
proceeds as for the constructive case.
\end{proof}

\subsection{Connection to Dodgson Voting}
\label{sec:dodgson}
We conclude our discussion of (control in) maximin voting with a small
detour, showing a connection between maximin and the famous voting
rule (i.e., election system) of Dodgson.

Dodgson voting,
proposed in the 19th century by
Charles Lutwidge 
Dodgson,\footnote{Dodgson is better known as Lewis Carroll, the renowned
author
  of ``Alice's Adventures in
  Wonderland.''}
works as
follows~\cite{dod:unpubMAYBE:dodgson-voting-system}. 
Let $E = (C,V)$ be an election, where $C = \{c_1, \ldots,
c_m\}$ and $V = (v_1, \ldots, v_n)$. For a candidate $c_i \in C$, the
Dodgson score of $c_i$, denoted $\score_E^D(c_i)$, is the smallest
number of sequential swaps of adjacent candidates on the preference lists of
voters in $V$ needed to make $c_i$ become the Condorcet winner. The
candidates with the lowest score are the Dodgson
election's winners.  That is, Dodgson defined his system to 
elect those candidates that are closest to being Condorcet winners in
the sense of adjacent-swaps distance.
Although Dodgson's eighteenth-century election system was 
directly defined in terms of distance, there remains ongoing interest
in understanding the classes of voting rules that can be captured in
various distance-based frameworks (see,
e.g.,~\cite{mes-nur:b:distance-realizability,elk-fal-sli:c:distance-rational}).

Unfortunately, it is known that deciding whether a given candidate is
a winner according to Dodgson's rule is quite complex.  In fact, 
Hemaspaandra, Hemaspaandra, and Rothe~\cite{hem-hem-rot:j:dodgson},
strengthening an NP-hardness result of Bartholdi, Tovey,
and Trick~\cite{bar-tov-tri:j:who-won}, showed 
that this problem is 
complete for parallelized
access to NP\@.  That is, it is 
complete for
the $\thetatwo$ level of the polynomial hierarchy.
Nonetheless,
many researchers have sought efficient ways of computing
Dodgson winners, for example by using 
frequently
correct
heuristics~\cite{hem-hom:j:dodgson-greedy,mcc-pri-sli:j:dodgson},
fixed-parameter
tractability (see~\cite{bar-tov-tri:j:who-won,fal-hem-hem:j:bribery,bet-guo-nie:j:dodgson-parametrized} 
and the discussion in 
Footnote~17 of~\cite{fal-hem-hem-rot:j:llull}),
and approximation algorithms for Dodgson
scores~\cite{car-cov-fel-hom-kak-kar-pro-ros:c:dodgson}.

In addition to its high computational cost in determining winners,
Dodgson's rule is often criticized for not having basic
properties one would expect a good voting rule to have.
For example, 
Dodgson's rule 
is not ``weakCondorcet consistent''
(equivalently, it does not 
satisfy Fishburn's ``strict 
Condorcet principle'')~\cite{bra-bri-hem-hem:c:sp2}
and doesn't satisfy homogeneity and monotonicity
(see~\cite{bra:j:dodgson-remarks}, which surveys a number of 
defects of Dodgson's rule).  We provide definitions for the 
latter two notions, as they will be relevant to this section.

\begin{description}
\item[Homogeneity.] We say that a voting rule $\calR$ is homogeneous if
  for each election $E = (C,V)$, where $C = \{c_1, \ldots, c_m\}$ and
  $V = (v_1, \ldots, v_n)$, it holds that $\calR$ has the same winner
  set on $E$ as on $E' = (C,V')$, where $V' = (v_1, v_1, v_2, v_2,
  \ldots, v_n,v_n)$. 

\item[Monotonicity.] We say that a voting rule $\calR$ is monotone if
  for each election $E = (C,V)$, where $C = \{c_1, \ldots, c_m\}$ and
  $V = (v_1, \ldots, v_n)$, it holds that if some candidate $c_i \in C$
  is a winner of $E$ then $c_i$ is also a winner of an election $E'$ that
  is identical to $E$ except that some voters rank $c_i$ higher (without
  changing the relative order of all the remaining candidates).
\end{description}

Continuing the 
Caragiannis et al.~\cite{car-cov-fel-hom-kak-kar-pro-ros:c:dodgson}
line of research on approximately computing Dodgson
scores, 
Caragiannis et
al.~\cite{car-kak-kar-pro:c:dodgson-acceptable} devised an
approximation algorithm for computing Dodgson scores that, given an
election $E = (C,V)$, where $C = \{c_1, \ldots, c_m\}$ and $V = (v_1,
\ldots, v_n)$ and a candidate $c_i$ in $C$, computes in polynomial
time a nonnegative integer $\scd_E(c_i)$ such that $\score_E^D(c_i)
\leq \scd_E(c_i)$ and 
$\scd_E(c_i) =  O(m\log m)\cdot \score_E^D(c_i)$.  That is, the
algorithm given by Caragiannis et
al.~\cite{car-kak-kar-pro:c:dodgson-acceptable} is, in 
a natural sense, an $O(m\log
m)$-approximation of the 
Dodgson score.\footnote{Throughout
this section, we use the notion ``$f(m)$-approximation of
$g$'' in the sense it is typically
used when dealing with minimization problems.
That is, we mean that the approximation outputs a value that is
at least $g$ and at most $f(m)\cdot g$.  
We slightly abuse the 
interaction between this notation and Big-Oh notation, in 
the quite standard and intuitive way.
And we assume that the argument domain that $g$ and 
the approximation share is clear from context---in this 
paper, their arguments are an election $E$ and 
a candidate $c_i$.}
This algorithm has
additional properties: If one defines a voting rule to elect those
candidates that have lowest scores according to the algorithm, then
that voting rule is Condorcet consistent (i.e., 
when 
a Condorcet winner exists, he or she 
is the one and only winner under the voting rule), 
homogeneous, and monotone.

The result of Caragiannis et
al.~\cite{car-kak-kar-pro:c:dodgson-acceptable} is very interesting,
but unfortunately the voting rule defined by their approximation
algorithm is somewhat complicated and arguably might seem not to be
very natural.  We now show that the maximin rule---which like
the Caragiannis et al.\ rule is Condorcet-consistent, homogeneous, and
monotone, but which in addition is a long-existing and natural
rule---also elects candidates that are, in a certain
different yet precise sense, ``close'' to being Dodgson
winners.   
Our proof is inspired by that of
Caragiannis et al.~\cite{car-kak-kar-pro:c:dodgson-acceptable}.

\begin{theorem}
  \label{thm:maximin-as-dodgson}
  Let $E = (C,V)$ be an election and let $W \subseteq C$ be
  a set of candidates that win in $E$ according to the maximin rule.
  Let $m = \|C\|$ and let $s = \min_{c_i \in C}\score_E^D(c_i)$.
  For each $c_i \in W$ it holds that $s \leq \score_E^D(c_i) \leq m^2s$.
\end{theorem}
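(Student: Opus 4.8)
The plan is to route everything through the standard notion of pairwise \emph{deficits}. Writing $n = \|V\|$ and $T = \lfloor n/2\rfloor + 1$ for the number of head-to-head votes a candidate needs in order to strictly defeat an opponent, I define for candidates $c \neq c'$ the deficit $\mathrm{def}_E(c,c') = \max(0,\, T - N_E(c,c'))$, the number of extra voters that must be made to prefer $c$ to $c'$. The single identity I will lean on is that the largest deficit of a candidate is governed exactly by its maximin score: since $\mathrm{def}_E(c,c') = \max(0,\, T - N_E(c,c'))$ and $\score_E(c) = \min_{c'\neq c} N_E(c,c')$, we get $\max_{c' \neq c}\mathrm{def}_E(c,c') = \max(0,\, T - \score_E(c))$. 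This is the bridge between the maximin and Dodgson worlds, and it is where the hypothesis ``$c_i$ is a maximin winner'' will enter.

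Next I would sandwich the Dodgson score between deficits. The lower bound $\score_E^D(c) \geq \sum_{c' \neq c}\mathrm{def}_E(c,c') \geq \max_{c'}\mathrm{def}_E(c,c')$ is the easy Bartholdi--Tovey--Trick observation: each adjacent swap moves $c$ past exactly one candidate, so raising $N_E(c,c')$ to the threshold for a fixed $c'$ costs at least $\mathrm{def}_E(c,c')$ swaps, and swaps past distinct opponents are distinct. For the upper bound I would exhibit an explicit swap sequence: choose, for each opponent $c'$, some $\mathrm{def}_E(c,c')$ voters who currently rank $c'$ above $c$ (an easy count shows $N_E(c',c) = n - N_E(c,c') \geq \mathrm{def}_E(c,c')$, so enough exist), form the union $U$ of all these voters, and promote $c$ to the top of every voter in $U$. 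Promoting $c$ to the top in one voter costs at most $m-1$ swaps and simultaneously raises $N_E(c,\cdot)$ by one against \emph{every} opponent ranked above $c$ there; hence afterward each opponent $c'$ is defeated, and the total cost is at most $(m-1)\|U\| \leq (m-1)\sum_{c'}\mathrm{def}_E(c,c') \leq (m-1)^2 \max_{c'}\mathrm{def}_E(c,c')$, using that there are at most $m-1$ opponents.

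Finally I would chain these pieces. The lower bound $s \leq \score_E^D(c_i)$ is immediate from the definition of $s$. For the upper bound, let $c^*$ attain $\score_E^D(c^*) = s$. Because $c_i$ is a maximin winner, $\score_E(c_i) \geq \score_E(c^*)$, so by the bridge identity $\max_{c'}\mathrm{def}_E(c_i,c') \leq \max_{c'}\mathrm{def}_E(c^*,c')$, and the latter is at most $\score_E^D(c^*) = s$ by the Dodgson lower bound applied to $c^*$. Plugging into the Dodgson upper bound applied to $c_i$ gives $\score_E^D(c_i) \leq (m-1)^2 \max_{c'}\mathrm{def}_E(c_i,c') \leq (m-1)^2 s \leq m^2 s$. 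The degenerate case $s=0$ is consistent: then $c^*$ is a Condorcet winner, which forces $c_i = c^*$, and both sides vanish.

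The step I expect to be the real work is the Dodgson upper bound: one must verify that enough ``wrong-way'' voters exist to supply each deficit (the count $N_E(c',c) \geq \mathrm{def}_E(c,c')$) and that promoting $c$ fully to the top is a legitimate move that simultaneously helps against all opponents in that voter, so that a union of at most $\sum_{c'}\mathrm{def}_E(c,c')$ voters really suffices. Everything else---the deficit lower bound and the maximin/deficit identity---is routine.
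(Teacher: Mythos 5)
Your proof is correct and takes essentially the same route as the paper's: your deficit $\mathrm{def}_E$ is exactly the paper's $\df_E$, your sandwich $\max_{c'\neq c}\mathrm{def}_E(c,c') \leq \score^D_E(c) \leq (m-1)^2\max_{c'\neq c}\mathrm{def}_E(c,c')$ is the paper's Lemma~\ref{thm:m2-sc} (stated there with constant $m^2$ via the proxy score $\scd'_E$), and your bridge identity plus chaining through a Dodgson winner is precisely the paper's closing argument that maximin winners minimize $\scd'_E$. The only difference is cosmetic but welcome: your explicit union-of-wrong-way-voters construction for the upper bound is a bit more careful than the paper's top-shifting argument and yields the slightly sharper factor $(m-1)^2$ in place of $m^2$.
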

\begin{proof}
Let us fix an election $E = (C,V)$ with $C = \{c_1, \ldots c_m\}$ and
$V = (v_1, \ldots, v_n)$. For each two candidates $c_i,c_j \in C$ we
define $\df_E(c_i,c_j)$
to be the smallest number $k$ such that if $k$
voters in $V$ changed their preference order to rank $c_i$ ahead of
$c_j$, then $c_i$ would be preferred to $c_j$ by more than half of the
voters. Note that if for some $c_i,c_j \in C$ we have $\df_E(c_i,c_j)
> 0$ then
\[
N_E(c_i,c_j) + \df_E(c_i,c_j) = \left\lfloor \frac{n}{2} \right\rfloor + 1.
\]
For each candidate $c_i \in C$ we define $\scd'_E(c_i)$ to be
\[
  \scd'_E(c_i) = m^2\max\{ \df_E(c_i,c_j) \mid c_j \in C - \{c_i\}\}.
\]

We now prove that 
$\scd'$ is an $m^2$-approximation of the Dodgson score.

\begin{lemma}\label{thm:m2-sc}
For each $c_i \in C$ it holds that $ \score^D_E(c_i)
\leq \scd'_E(c_i) \leq m^2\score^D_E(c_i)$.
\end{lemma}
\begin{proof}
  Let us fix some $c_i \in C$.  To see that the second inequality in
  the lemma statement holds, note that $\max\{ \df_E(c_i,c_j) \mid c_j
  \in C - \{c_i\}\} \leq \sum _{c_j \in C-\{c_i\}}\df_E(c_i,c_j) \leq
  \score_E^D(c_i)$ because for each candidate $c_k$ we, at least, have
  to perform $\df_E(c_i,c_k)$ swaps to ensure that $c_i$ defeats $c_k$
  in their majority head-to-head contest. Thus, after multiplying by
  $m^2$, we have
  \[
  m^2 \max\{ \df_E(c_i,c_j) \mid c_j \in C - \{c_i\}\} \leq
  m^2\score_E^D(c_i).
\]
  Let us now consider the first inequality. Let $c_k$ be some candidate
  in $C - \{c_i\}$. To make sure that $c_i$ is ranked higher than $c_k$
  by more than half of the voters, we can shift $c_i$ to the first
  position in the preference lists of $\max\{ \df_E(c_i,c_j) \mid c_j
  \in C - \{c_i\}\} \geq \df_E(c_i,c_k)$ voters (or, all the remaining
  voters if less than $\max\{ \df_E(c_i,c_j) \mid c_j \in C - \{c_i\}\}$
  voters do not rank $c_i$ as their top choice). This requires at most
  $m$ adjacent swaps per voter.  Since there are $m-1$ candidates in $C
  - \{c_i\}$, $m^2\max\{ \df_E(c_i,c_j) \mid c_j \in C - \{c_i\}\}$
  adjacent swaps are certainly sufficient to make $c_i$ a Condorcet
  winner. 
  \renewcommand{\qed}{\hspace{\stretch{1}}(Lemma~\ref{thm:m2-sc})~$\qedsymbol$}
\end{proof}

It remains to show that if some candidate $c_i$ is a maximin winner
in $E$ then $\scd'_E(c_i)$ is minimal.
Fortunately, this is easy to
see. If some candidate $c_i$ is a Condorcet winner of $E$ then he or
she is the unique maximin winner and he or she is the unique candidate
$c_i$ with $\scd'_E(c_i) = 0$. Let us assume that there is no
Condorcet winner of $E$. Let us fix some candidate $c_i \in C$ and let
$c_k \in C - \{c_i\}$ be a candidate such that $\scd'_E(c_i) = m^2
\df_E(c_i,c_k)$. That is, $\df_E(c_i,c_k) = \max\{ \df_E(c_i,c_j) \mid
c_j \in C - \{c_i\}\}$ and $\df_E(c_i,c_k) > 0$. Due to this last fact
and our choice of $c_k$, we have $\df_E(c_i,c_k) = \left\lfloor
  \frac{n}{2}\right\rfloor+1 - N_E(c_i,c_k)$ and so
\[ 
   N_E(c_i,c_k) = \left\lfloor\frac{n}{2}\right\rfloor +1 - \df_E(c_i,c_k) =  \min_{c_j \in C-\{c_i\}}N_E(c_i,c_j) = \score_E(c_i),
\]
where $\score_E(c_i)$ is the maximin score of $c_i$ in $E$.
Thus each candidate $c_i$ with the lowest value $\scd'_E(c_i)$ also
has the highest maximin score. 
\end{proof}

Theorem~\ref{thm:maximin-as-dodgson} says that every maximin winner's
Dodgson score is no less than the Dodgson score of the Dodgson winner(s)
(that fact of course holds trivially), and is no more than $m^2$ times
the Dodgson score of the Dodgson winner(s).  That is, we have proven
that \emph{no candidate whose Dodgson score is more than $m^2$ times
  that of the Dodgson winner(s) can be a maximin winner}.  

Since maximin is Condorcet consistent, homogeneous, and monotone, our
result interestingly relates to the approximation of Caragiannis et
al.~\cite{car-kak-kar-pro:c:dodgson-acceptable}, who achieved an $O(m
\log m)$-approximation of Dodgson score while maintaining Condorcet
consistency, homogeneity and monotonicity (recall the discussion
before Theorem~\ref{thm:maximin-as-dodgson}).  Admittedly, our
``closeness'' factor is $m^2$, which is worse than achieving 
$O(m \log
m)$.  And our closeness is in a different sense, since our theorem is
applying its bound just between Dodgson scores, and just on the winner
set.  In contrast, Caragiannis et
al.~\cite{car-kak-kar-pro:c:dodgson-acceptable} and even our own
Lemma~\ref{thm:m2-sc} 
relate the Dodgson score to the Caragiannis et
al.\ score and the $\scd'$ score, and those
approximations hold for
all candidates.  However, we achieve our $m^2$ closeness factor for
a voting rule, maximin, that is well known and natural.

\section{Fixed-Parameter Tractability}
\label{sec:fpt}

In this section we consider the parameterized complexity of multipronged
control, in particular, the case where we can assume that the number
of candidates is a small constant.  Elections with few candidates are
very natural: For example, in many countries presidential elections
involve only a handful of candidates. The reader can easily 
imagine many other
examples.

The main result of this section is that for many natural election
systems $\electionsystem$ (formally, for all election systems whose
winner determination problem can be expressed via an integer linear
program of a certain form), it holds that the
$\electionsystem$-AC+DC+AV+BV+BV control problem is fixed-parameter
tractable (is in the complexity class FPT) 
for the parameter ``number of candidates,'' both in the
constructive setting
and in the destructive setting. This result combines and
significantly enhances FPT results from the literature, in particular,
from the 
papers~\cite{fal-hem-hem:j:bribery,fal-hem-hem-rot:j:llull}, which 
are the model for and inspiration of this section.  We also
make explicit an ``automatic'' path to such results 
that is 
implicit 
in~\cite{fal-hem-hem:j:bribery,fal-hem-hem-rot:j:llull}.  This 
path should be helpful in letting many future analyses 
be done as tool-application exercises, rather than being case-by-case 
challenges.

In this section we focus exclusively on the number of
candidates as our parameter. 
That is, our parameter is the number of candidates
initially in the election
plus the number 
of candidates (if any) in the set of potential additional candidates.
That is, in terms of the variables we have been using to 
describe multiprong control
the parameter is $\|C\| + \|A\|$.

We mention that researchers sometimes
analyze other parameterizations.
For example, Liu et al.~\cite{fen-liu-lua-zhu:j:parameterized-control},
Liu and Zhu~\cite{liu-zhu:j:maximin},
and Betzler and
Uhlmann~\cite{bet-uhl:j:parameterized-complecity-candidate-control}
consider as the parameter the amount of change that one is allowed to use
(e.g., the number of candidates one can add), 
Bartholdi, Tovey,
and Trick~\cite{bar-tov-tri:j:who-won}, 
Betzler and
Uhlmann~\cite{bet-uhl:j:parameterized-complecity-candidate-control},
and
Faliszewski et al.~\cite{fal-hem-hem-rot:j:llull}
study as the parameter 
the number of
voters (and also sometimes the number of candidates).  
And other parameters
are sometimes used when considering the
so-called possible winner problem, see,
e.g.,~\cite{bet-dor:c:possible-winner-dichotomy,bet-hem-nie:c:parameterized-possible-winner}.
However, we view the parameter ``number of candidates'' as the most
essential and the most natural one. We now proceed with our
discussion of fixed-parameter tractability, with the number of
candidates as the parameter.

Let us consider an election system $\electionsystem$ and a set $C =
\{c_1, \ldots, c_m\}$ of candidates.  There are exactly $m!$
preference orders over the candidates in $C$ and we will refer to them
as $o_1, \ldots, o_{m!}$.  Let us assume that $\electionsystem$ is
anonymous (i.e., the winners of each $\electionsystem$ election do not
depend on the order of votes or the names of the voters, but
only---for each preference order $o_i$---on the number of votes with
that preference order).  We define predicate
$\win_\electionsystem(c_j, n_1, \ldots, n_{m!})$ to be true if and
only if $c_i$ is a unique winner of $\electionsystem$ elections
with $C = \{c_1, \ldots, c_m\}$, where for each $i$, $1 \leq i \leq
m!$, there are exactly $n_i$ voters with preference order $o_i$.
For the rest of this section,
our inequalities always use
one of the four 
operators 
``$> $,''
``$\geq $,''
``$< $,'' and
``$\leq $.''\footnote
{We allow both strict and nonstrict
    inequalities. Since we allow only integer solutions, it is easy to
    simulate strict inequalities with nonstrict ones and to 
    simulate nonstrict inequalities with strict ones, in both 
    cases simply by adding a~``$1$'' to the appropriate site of 
    the inequality.  So we could equally well have allowed just 
    strict, or just nonstrict, inequalities.}

\begin{definition}\label{def:ilp-imp}
  We say that an anonymous election system $\electionsystem$ is
  unique-winner (nonunique-winner) integer-linear-program
  implementable if for each set of candidates $C = \{c_1, \ldots,
  c_m\}$ and each candidate $c_j \in C$ there exists a set $S$ of
  linear inequalities
with
  variables $n_1, \ldots, n_{m!}$ such that:
  \begin{enumerate}
  \item If the integer assignment $n_1 = \hat{n}_1$, $\ldots$, $n_{m!}
    = \hat{n}_{m!}$ satisfies $S$, then each $\hat{n}_i$ belongs to
    $\naturals$,\footnote{It is easy to to put $m!$ inequalities into
      $S$ enforcing this condition.  And this condition will help us
      make the electoral part of our definition meaningful, i.e., it
      will avoid having 
      problems from the restriction in the final
      part of this definition that lets us avoid discussing
      negative numbers of voters.}

    \item $S$ can be computed 
(i.e., obtained)
in time polynomial in 
$m!$,\footnote{We mention 
in passing that if the $m!$ in this part of the definition
were changed to any other computable function of $m$, e.g.,
$m^{m^{m^m}}$, we would still obtain FPT results, and still 
would have them 
hold even in the strengthened version of FPT in which the $f$ of 
``$f(\mbox{\rm{}parameter})\cdot\mbox{\rm{}Inputsize}^{O(1)}$''
is required to be computable.  However, due to $m!$ being the number 
of preference orders over $m$ candidates, having $S$ be obtainable
in time polynomial in $m!$ will in practice be a particularly common 
case.  

We also mention in passing 
that the FPT-establishing framework in this section
and the results it yields,
similarly to the case in our 
work
mentioned 
earlier~\cite{fal-hem-hem:j:bribery,fal-hem-hem-rot:j:llull}, 
not only will apply in the model where votes are input as 
a list of ballots, one per person, but 
also will hold
in the so-called ``succinct'' 
model (see~\cite{fal-hem-hem:j:bribery,fal-hem-hem-rot:j:llull}),
in which we are given the votes not as individual ballots but 
as binary numbers providing the number of voters having each 
preference order 
(or having each occurring preference order).
}
and

    \item for each $(\hat{n}_1,\ldots,\hat{n}_{m!})\in \naturals^{m!}$,
      we have that~(a) holds if and only if~(b) holds, where~(a)
      and~(b) are as follows:
\begin{enumerate}
\item[(a)] $S$ is satisfied by the assignment
$n_1 = \hat{n}_1$, $\ldots$, 
$n_{m!} = \hat{n}_{m!}$.
\item[(b)] 
$c_j$ is
      a unique winner (is a winner) of an $\electionsystem$ election
      in which for each $i$, $1 \leq i \leq m!$, there are exactly $\hat{n}_i$
      voters with preference order $o_i$, where $o_i$ is the $i$'th
      preference order over the set $C$.
\end{enumerate}
  \end{enumerate}
\end{definition}

In a slight abuse of notation, for integer-linear-program
implementable election systems $\electionsystem$ we will simply refer
to the set $S$ of linear inequalities from
Definition~\ref{def:ilp-imp} as $\win_\electionsystem(c_j,n_1, \ldots,
n_{m!})$.  The particular set of candidates will always be clear from
context.  Naturally, it is easy to adapt Definition~\ref{def:ilp-imp}
to apply to approval voting, but for the sake of brevity we will not
do so.   

We are not aware of any natural systems that are
integer-linear-program unique-winner implementable yet not
integer-linear-program nonunique-winner implementable, or vice
versa.  
In this paper we focus on the unique winner model so the reader may
wonder why we defined the nonunique winner variant of
integer-linear-program implementability.  The answer is that, as we
will see later in this section, it is a useful notion when dealing
with destructive control.

The class of election systems that are integer-linear-program
implementable is remarkably broad.  For example, it is 
variously implicit in or 
a consequence of the
results of~\cite{fal-hem-hem:j:bribery} that plurality, veto, Borda,
Dodgson, and each polynomial-time computable (in the number of candidates)
family of  scoring protocols
are integer-linear-program
implementable.\footnote{Let $m$ be the number of candidates.
  A scoring protocol is a vector of $m$ nonnegative
  integers satisfying $\alpha_1 \geq \alpha_2 \geq \cdots \geq \alpha_m$.  
Each candidate receives
  $\alpha_i$ points for each vote that ranks him or her in the $i$'th
  position, and the candidate(s) with most points win. Many election
  systems can be viewed as families of scoring protocols. For example,
  plurality is defined by scoring protocols of the form $(1,0, \ldots,
  0)$, veto is defined by scoring protocols of the form $(1, \ldots,
  1,0)$, and Borda is defined by scoring protocols of the form $(m-1,
  m-2, \ldots, 0)$, where $m$ is the number of candidates.}
For many other election systems (e.g.,
Kemeny~\cite{kem:j:no-numbers,lev-you:j:condorcet} and Copeland) it is
not clear whether they are integer-linear-program implementable, but
there are similar approaches that will be as useful for us.  We will
return to this issue at the end of this section.

\begin{theorem}\label{thm:fpt}
  Let $\electionsystem$ be an integer-linear-program unique-winner
  implementable election system. For 
number of candidates 
as the parameter, 
  constructive $\electionsystem$-AC+DC+AV+DV+BV is in FPT.
\end{theorem}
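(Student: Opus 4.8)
\section*{Proof proposal}

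The plan is to handle the candidate-side actions (AC and DC) by brute-force enumeration and the voter-side actions (AV, DV, and BV) by a single integer linear program, then to decide feasibility of that program via Lenstra's theorem on integer programming in a fixed number of dimensions. The crucial point is that the parameter is $m = \|C\| + \|A\|$, so the number of candidates is bounded by the parameter, and hence the number of preference orders over any subset of the candidates is at most $m!$, a function of the parameter alone.

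First I would enumerate all candidate-side choices: every pair $(A', C')$ with $A' \subseteq A$, $C' \subseteq C$, $p \notin C'$, $\|A'\| \leq k_\ac$, and $\|C'\| \leq k_\dc$. There are at most $2^{\|C\| + \|A\|}$ of these, which is a function of the parameter, so we may process each separately and accept if any one succeeds. Fixing such a pair determines the final candidate set $\hat{C} = (C - C') \cup A'$ (with $p \in \hat{C}$); let $\hat{m} = \|\hat{C}\|$. I then project every vote in $V$ and in $W$ onto $\hat{C}$ and bucket the voters by induced preference order, writing $a_j$ (resp.\ $b_j$) for the number of voters in $V$ (resp.\ $W$) whose projected order is $o_j$, where $o_1, \ldots, o_{\hat{m}!}$ enumerate the $\hat{m}! \leq m!$ orders over $\hat{C}$.

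It remains to decide whether deleting at most $k_\dv$ voters from $V$, adding at most $k_\av$ voters from $W$, and bribing at most $k_\bv$ of the surviving voters can make $p$ a unique winner; I would cast this as one integer program. Introduce a variable $u_{j,k}$ counting the $V$-voters of order $o_j$ that are kept with final order $o_k$ (a bribe precisely when $j \neq k$), a variable $w_{j,k}$ counting the $W$-voters of order $o_j$ that are added with final order $o_k$, and a variable $d_j$ counting the $V$-voters of order $o_j$ that are deleted. The conservation constraints $\sum_k u_{j,k} + d_j = a_j$ and $\sum_k w_{j,k} \leq b_j$, together with the budget constraints $\sum_j d_j \leq k_\dv$, $\sum_{j,k} w_{j,k} \leq k_\av$, and $\sum_{j \neq k}(u_{j,k} + w_{j,k}) \leq k_\bv$, encode exactly the legal voter-side actions. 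Writing $n_\ell = \sum_j u_{j,\ell} + \sum_j w_{j,\ell}$ for the final number of voters with order $o_\ell$, I append the inequalities $\win_\electionsystem(p, n_1, \ldots, n_{\hat{m}!})$ guaranteed by Definition~\ref{def:ilp-imp}; by hypothesis these are computable in time polynomial in $\hat{m}!$ and are satisfiable by the $n_\ell$ exactly when $p$ is a unique winner of the induced profile over $\hat{C}$.

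The dimension of this program is $O((\hat{m}!)^2) = O((m!)^2)$, which depends only on the parameter, whereas the coefficients (voter counts and budgets) come from the input and may be large; Lenstra's theorem then decides feasibility in time $f(m) \cdot (\text{input size})^{O(1)}$. Running this once for each of the at most $2^{\|C\| + \|A\|}$ candidate-side choices gives an overall bound of the FPT form, and since each $f$ involved is computable we obtain strongly uniform fixed-parameter tractability. The step that I expect to require the most care is the bribery encoding: one must verify that the bribe budget charges exactly the voters whose final order differs from their original order, and that a $W$-voter that is added and then bribed (captured by $w_{j,k}$ with $j \neq k$) is counted against both the add budget and the bribe budget, so that integer feasible points correspond to genuine legal AC+DC+AV+DV+BV actions achieving the goal, and conversely.
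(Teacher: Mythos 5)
Your proposal is correct and takes essentially the same approach as the paper's proof: brute-force enumeration of the candidate-side (AC+DC) choices, a single integer linear program over voters bucketed by preference order to encode AV+DV+BV capped by the $\win_\electionsystem(p,\cdot)$ inequalities, and Lenstra's algorithm to decide feasibility in a parameter-bounded number of dimensions. The only difference is cosmetic bookkeeping: the paper pools surviving $V$-voters and added $W$-voters of each order into one family of bribery variables $bv_{i,j}$ (via the constraint $\sum_j bv_{i,j} = n^V_i + av_i - dv_i$), whereas you keep them separate as $u_{j,k}$ and $w_{j,k}$; the two encodings are equivalent.
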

\begin{proof}
  Let $(C,A,V,W,p, k_\ac,k_\dc, k_\av, k_\dv, k_\bv)$ be our input
  instance of the constructive $\electionsystem$-AC+DC+AV+DV+BV control
  problem, as described in Definition~\ref{def:multiprong}. Let $C =
  \{p,c_1, \ldots, c_{m'}\}$ and $A = \{a_1, \ldots, a_{m''}\}$.  Our
  parameter, the total number of candidates, is $m = m' + m'' + 1$.
  For each subset $K$ of $C \cup A$ we let $o^K_1, \ldots,
  o^K_{\|K\|!}$ mean the $\|K\|!$ preference orders
  over $K$.

  The idea of our algorithm is to perform an exhaustive search through
  all the subsets of candidates $K$, $K \subseteq C \cup A$, and for
  each $K$ check whether (a) it is possible to obtain $K$ from $C$ by
  deleting at most $k_\dc$ candidates and adding at most $k_\ac$
  candidates from $A$, and (b) it is possible to ensure that $p$ is a
  unique winner of election $(K,V)$ by deleting at most $k_\dv$ voters,
  adding at most $k_\av$ voters from $W$, and bribing at most $k_\bv$
  voters. Given $K$, step (a) can easily be implemented in polynomial
  time. To implement step (b), we introduce a linear integer program
  $P(K)$, which is satisfiable if and only if step (b) holds.  Let us
  now fix $K \subseteq C \cup A$ and describe the integer linear
  program $P(K)$. 

  We assume that $p \in K$ as it is not legal to delete $p$ (and it
  would be pointless, given that we want to ensure his or her
  victory). We interpret preference orders of voters in $V$ and
  $W$ as limited to the candidate set $K$.  We use the following
  \emph{constants} in our program. For each $i$, $1 \leq i \leq
  \|K\|!$, we let $n^V_i$ be the number of voters in $V$ with
  preference order $o^K_i$, and we let $n^W_i$ be the number of voters
  in $W$ with preference order $o^K_i$. $P(K)$ contains the following
  variables (described together with their intended interpretation):
  \begin{description}
  \item[Variables $\boldsymbol{av_1, \ldots, av_{\|K\|!}}$.] For each $i$,
    $1 \leq i \leq \|K\|!$, we interpret $av_i$ as the number of
    voters with preference $o^K_i$ that we add from $W$.
  \item[Variables $\boldsymbol{dv_1, \ldots, dv_{\|K\|!}}$.] For each $i$,
    $1 \leq i \leq \|K\|!$, we interpret $dv_i$ as the number of
    voters with preference $o^K_i$ that we delete from $V$.
  \item[Variables $\boldsymbol{bv_{1,1}, bv_{1,2},\ldots, bv_{1,\|K\|!},
      bv_{2,1}, \ldots, bv_{\|K\|!,\|K\|!}}$.] For each $i,j$, $1 \leq
    i,j \leq \|K\|!$, we interpret $bv_{i,j}$ as the number of voters
    with preference $o^K_i$ that, in case $i \neq j$, we bribe to
    switch to preference order $o^K_j$, or, in case $i = j$, we leave
    unbribed.
  \end{description}
  $P(K)$ contains the following constraints.
  \begin{enumerate}

  \item All the variables have nonnegative values.

  \item For each variable $av_i$, $1 \leq i \leq \|K\|!$, there are
    enough voters in $W$ with preference order $o^K_i$ to be
    added. That is, for each $i$, $1 \leq i \leq \|K\|!$, we have a
    constraint $av_i \leq n^W_i$.  Altogether, we can add at
    most $k_\av$ voters so we have a constraint
    $\sum_{i=1}^{\|K\|!}av_i \leq k_\av$.

  \item For each variable $dv_i$, $1 \leq i \leq \|K\|!$, there are
    enough voters in $V$ with preference order $o^K_i$ to be
    deleted. That is, for each $i$, $1 \leq i \leq \|K\|!$, we have a
    constraint $dv_i \leq n^V_i$.  Altogether, we can delete at most
    $k_\dv$ voters so we have a constraint $\sum_{i=1}^{\|K\|!}dv_i
    \leq k_\dv$.

  \item For each variable $bv_{i,j}$, $1 \leq i,j \leq \|K\|!$, there
    are enough voters with preference $o^K_i$ to be bribed. That is,
    for each $i$, $1 \leq i \leq \|K\|!$, we have a constraint
    $\sum_{j=1}^{\|K\|!}bv_{i,j} = n^V_i+av_i-dv_i$ (the equality
    comes from the fact that for each $i$, $1 \leq i \leq \|K\|!$,
    $bv_{i,i}$ is the number of voters with preference $o^K_i$ that we
    do not bribe). Altogether, we can bribe at most $k_\bv$ voters so
    we also have a constraint \[
    \left( \sum_{i=1}^{\|K\|!}\sum_{j=1}^{\|K\|!}bv_{i,j} \right) - \sum_{i=1}^{\|K\|!}bv_{i,i} \leq k_\bv.\]

  \item Candidate $p$ is the unique winner of the election after we
    have executed all the adding, deleting, and bribing of voters. Using
    the fact that $\electionsystem$ is integer-linear-program
    unique-winner implementable, we can express this as
    $\win_\electionsystem(p, \ell_1, \ldots, \ell_{\|K\|!})$, where we
    substitute each $\ell_j$, $1 \leq j \leq \|K\|!$, by
    $\sum_{i=1}^{\|K\|!}bv_{i,j}$ (note that, by previous
    constraints, variables describing bribery already take into
    account adding and deleting voters). This is a legal
    integer-linear-program constraint as $\win_\electionsystem(p,
    \ell_1, \ldots, \ell_{\|K\|!})$ is simply a conjunction of linear
    inequalities over $\ell_1, \ldots, \ell_{\|K\|!}$.
  \end{enumerate}

The number of variables and the number of
  inequalities in $P(K)$ are each polynomially bounded in $m!$.
Keeping in mind
Definitions~\ref{def:multiprong} and~\ref{def:ilp-imp}, it is
  easy to see that program $P(K)$ does exactly what we expect it
  to. 
And
testing whether $P(K)$ is satisfiable (i.e., has an 
integer solution, as we are in the framework of an 
integer linear program) is in FPT,
with respect to the number of candidates being our 
parametrization,
by using Lenstra's algorithm~\cite{len:j:integer-fixed}.
 Thus
  our complete FPT algorithm for the $\electionsystem$-AC+DC+AV+DV+BV
  problem works as follows. For each subset $K$ of $C \cup A$ that
  includes $p$ we execute the following two steps:
  \begin{enumerate}
  \item Check whether it is possible to obtain $K$ from $C$ by deleting at
    most $k_\dc$ candidates and by adding at most $k_\ac$ candidates from
    $A$.
  \item Form linear program $P(K)$ and check whether it has any integral
    solutions using the algorithm of
    Lenstra~\cite{len:j:integer-fixed}. Accept if so.
  \end{enumerate}
  If after trying all sets $K$ we have not accepted, then reject.

  {}From the previous discussion, this algorithm is correct. Also, since (a)
  there are exactly $2^{m-1}$ sets $K$ to try, (b) executing the first
  step above can be done in time polynomial in $m$, and (c) the second
  step is in FPT (given that $m$ is the parameter), constructive
  $\electionsystem$-AC+DC+AV+DV+BV is in FPT for parameter $m$.
\end{proof}

The above theorem deals with constructive control only. However, using
its proof, it is easy to prove a destructive variant of the result.
We say that an election system is 
strongly voiced~\cite{hem-hem-rot:j:destructive-control}
if it holds that
whenever there is at least one candidate, there is at least one
winner.

\begin{cor}\label{cor:fpt-destructive}
  Let $\electionsystem$ be a strongly voiced, integer-linear-program
  nonunique-winner implementable election system. Destructive
  $\electionsystem$-AC+DC+AV+DV+BV is in FPT for the parameter number
  of candidates.
\end{cor}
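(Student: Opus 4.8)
The plan is to piggyback on the machinery of Theorem~\ref{thm:fpt} after re-expressing the destructive goal in a form that the \emph{nonunique}-winner ILP predicate can capture. The key observation is that, in the destructive variant, the despised candidate $d$ may not be deleted (item~(f) of Definition~\ref{def:multiprong}, now read with $p$ renamed $d$), so the controlled election always contains $d$ and hence at least one candidate; since $\electionsystem$ is strongly voiced, its winner set is therefore never empty. Consequently, $d$ fails to be a unique winner of the controlled election \emph{if and only if} some candidate $c \neq d$ is a (nonunique) winner of that same election. First I would record this equivalence, as it is the entire conceptual content of the reduction and is exactly what nonunique-winner integer-linear-program implementability is built to test.

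With that in hand, I would run the same exhaustive search over candidate subsets $K \subseteq C \cup A$ as in the proof of Theorem~\ref{thm:fpt}, but now insisting $d \in K$ (since $d$ cannot be deleted) and additionally looping over a ``target'' $c \in K - \{d\}$ whom we attempt to make a winner. For each pair $(K,c)$ the candidate-level feasibility test---whether $K$ is obtainable from $C$ by deleting at most $k_\dc$ candidates (none of them $d$) and adding at most $k_\ac$ candidates from $A$---is identical to the constructive case and runs in time polynomial in $m$, where $m = \|C\| + \|A\|$ is the parameter.

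Next I would build the integer linear program $P(K,c)$ exactly as $P(K)$ was built in Theorem~\ref{thm:fpt}: the same constants $n^V_i, n^W_i$, the same variables $av_i$, $dv_i$, $bv_{i,j}$, and the same constraints (1)--(4) governing the adding/deleting/bribing of voters and their budgets $k_\av, k_\dv, k_\bv$. The only change is to constraint~(5): in place of the unique-winner predicate $\win_\electionsystem(p, \ell_1, \ldots, \ell_{\|K\|!})$ I would substitute the nonunique-winner implementable system $\win_\electionsystem(c, \ell_1, \ldots, \ell_{\|K\|!})$, again substituting each $\ell_j$ by $\sum_{i} bv_{i,j}$. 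By Definition~\ref{def:ilp-imp}, $P(K,c)$ then has an integral solution exactly when there is a legal control action yielding an election in which $c$ is a winner. The algorithm accepts iff some $P(K,c)$ is feasible, and otherwise rejects.

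For the runtime, there are at most $2^{m-1}$ subsets $K$ containing $d$ and at most $m$ choices of $c$, hence at most $m\,2^{m-1}$ integer programs, each having a number of variables and inequalities polynomial in $m!$; feasibility of each is decidable in FPT by Lenstra's algorithm~\cite{len:j:integer-fixed}, so the whole procedure is in FPT for parameter $m$. The main point to get right---and essentially the only place the hypotheses bite---is the correctness of the ``some $c \neq d$ wins'' reformulation: strong voicedness is precisely what rules out the degenerate empty-winner-set outcome (which would let $d$ be a non-winner with no rival actually winning and would make the algorithm wrongly reject), and the prohibition on deleting $d$ is what forces $d \in K$, so that the election in which we test $c$ is the very election whose unique-winnerhood of $d$ we care about.
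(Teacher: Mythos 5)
Your proposal is correct and follows essentially the same route as the paper: both reduce the destructive problem, via strong voicedness and the fact that the despised candidate cannot be deleted, to the question of whether some other candidate can be made a (possibly nonunique) winner, and then reuse the exhaustive-search-plus-ILP machinery of Theorem~\ref{thm:fpt} with the nonunique-winner predicate and Lenstra's algorithm. The paper states this more tersely (citing the equivalence and saying the adaptation is trivial), while you spell out the $P(K,c)$ construction explicitly, but there is no substantive difference.
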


To see that the corollary holds, it is enough to note that for
strongly voiced election systems a candidate can be prevented from
being a unique winner if and only if some other candidate can be
made a (possibly nonunique) winner (see e.g., Footnote~5
of~\cite{hem-hem-rot:j:destructive-control} for a relevant
discussion). Thus to prove Corollary~\ref{cor:fpt-destructive}, we
can simply use an algorithm that for each candidate other than the
despised one sees whether that candidate can be made a 
(perhaps nonunique)
winner, and if any can be made a (perhaps nonunique) winner,
declares destructive control achievable.
(And the precise integer linear programming feasibility problem 
solution given by Lenstra's algorithm will 
reveal
what action achieves the control.)
This can be done in FPT using the algorithm from the proof of
Theorem~\ref{thm:fpt}, adapted to work for the nonunique-winner
problem (this is trivial given that
Corollary~\ref{cor:fpt-destructive} assumes that $\electionsystem$ is
integer-linear-program nonunique-winner implementable).

Let us now go back to the issue that some election systems may not be
integer-linear-program implementable.  As an
example, let us consider maximin.  Let $E = (C,V)$ be an election,
where $C = \{c_1, \ldots, c_m\}$ and $V = (v_1, \ldots, v_n)$. As
before, by $o_1, \ldots, o_{m!}$ we mean the $m!$ possible preference
orders over $C$, and for each $i$, $1 \leq i \leq m!$, by $n_i$ we
mean the number of voters in $V$ that report preference order
$o_i$. For each $c_i$ and $c_j$ in $C$, $c_i \neq c_j$, we let $O(c_i,c_j)$ be
the set of preference orders over $C$ where $c_i$ is preferred to
$c_j$.  Let $k = (k_1, \ldots, k_m)$ be a vector of nonnegative
integers such that for each $i$, $1 \leq i \leq m$, it holds that $1
\leq k_i \leq m$. For such a vector $k$ and a candidate $c_\ell \in C$
we define $M(c_\ell, k_1, \ldots, k_m)$ to be the following set of
linear integer inequalities:
\begin{enumerate}
\item For each candidate $c_i$, his or her maximin score is equal to
  $N_E(c_i,c_{k_i})$. That is, for each $i,j$, $1 \leq i,j \leq m$, $i
  \neq j$, we have constraint
  $\sum_{o_k \in O(c_i,c_{k_i})} n_k \leq \sum_{o_k \in O(c_i,c_j)}n_k$
\item $c_\ell$ has the highest maximin score in election $E$ and thus
  is the unique winner of $E$. That is, for each $i$, $1 \leq i \leq m$,
  $i \neq \ell$, we have constraint
  $\sum_{o_k \in O(c_\ell,c_{k_\ell})} n_k > \sum_{o_k \in O(c_i,c_{k_i})} n_k$.
\end{enumerate} 
It is easy to see that $c_\ell$ is a unique maximin winner of $E$ if
and only if there is a vector $k = (k_1, \ldots, k_m)$ such that all
inequalities of $M(c_\ell,k_1, \ldots, k_m)$ are satisfied. It is also
clear how to modify the above construction to handle the nonunique
winner case.  Since there are only $O(m^m)$ vectors $k$ to try and
each $M(c_\ell, k_1, \ldots, k_m)$ contains $O(m^2)$ inequalities, it
is easy to modify the proof of Theorem~\ref{thm:fpt} to work for
maximin: Assuming that one is interested in ensuring candidate
$c_\ell$'s victory, one simply has to replace program $P(K)$ in the
proof of Theorem~\ref{thm:fpt} with a family of programs that each
include a different $M(c_\ell, k_1, \ldots, k_m)$ for testing if
$c_\ell$ had won. And one would accept if any of these were
satisfiable. Thus we have the following result.

\begin{cor}
  Constructive AC+DC+AV+DV+BV control and destructive AC+DC+AV+DV+BV
  control are both in FPT for maximin for the parameter number of
  candidates.
\end{cor}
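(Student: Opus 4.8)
The plan is to reuse the entire algorithm from the proof of Theorem~\ref{thm:fpt}, changing only the single place where integer-linear-program implementability was invoked. Maximin is not (known to be) integer-linear-program implementable directly, since a candidate's score is a \emph{minimum} over head-to-head contests rather than a conjunction of linear inequalities. However, the discussion preceding this corollary removes the minimum by guessing, for each candidate $c_i$, which opponent $c_{k_i}$ witnesses its maximin score: for a fixed vector $k = (k_1, \ldots, k_m)$ the set $M(c_\ell, k_1, \ldots, k_m)$ is a genuine system of linear integer inequalities, and $c_\ell$ is a unique maximin winner if and only if $M(c_\ell, k_1, \ldots, k_m)$ is satisfiable for \emph{some} such $k$. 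In other words, maximin is a parameter-bounded disjunction of integer-linear-program-implementable predicates, the number of disjuncts being $O(m^m)$.

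Concretely, for the constructive case I would run the algorithm of Theorem~\ref{thm:fpt} essentially verbatim: enumerate every candidate subset $K \subseteq C \cup A$ with $p \in K$, check in polynomial time that $K$ is reachable from $C$ by deleting at most $k_\dc$ and adding at most $k_\ac$ candidates, and then set up the voter-level program with the add/delete/bribe bookkeeping variables $av_i$, $dv_i$, and $bv_{i,j}$ and their constraints (items~1--4 of that proof) exactly as before. The only change is in the winner block: instead of the single constraint $\win_\electionsystem(p, \ell_1, \ldots, \ell_{\|K\|!})$, I would attach, one at a time, each of the $O(\|K\|^{\|K\|})$ systems $M(p, k_1, \ldots, k_{\|K\|})$, using the same substitution $\ell_j = \sum_i bv_{i,j}$ that folds adding, deleting, and bribing into the final vote counts. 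For each pair $(K,k)$ this yields a bona fide integer linear program whose numbers of variables and inequalities are bounded by a function of $m$, so its feasibility is decidable in FPT by Lenstra's algorithm~\cite{len:j:integer-fixed}; we accept exactly when some $(K,k)$ gives a feasible program.

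The FPT bound is then immediate: there are $2^{m-1}$ choices of $K$ and $O(m^m)$ choices of $k$, and for each we invoke Lenstra's algorithm once, so the total running time is a parameter-bounded multiple of an FPT computation and is itself FPT. For the destructive case I would follow the route of Corollary~\ref{cor:fpt-destructive}: maximin is strongly voiced (whenever a candidate remains, there is a winner), so preventing the despised $d$ from being a unique winner is equivalent to making some other candidate a (possibly nonunique) winner. The nonunique-winner analog of $M$ is obtained by relaxing the strict winner inequality in its second group of constraints to a nonstrict one, again giving a disjunction of $O(m^m)$ integer linear programs; running the above algorithm for each candidate $c \neq d$ in the role of the one to be made a winner, and accepting if any succeeds, settles the destructive case in FPT as well.

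The only genuine content is the observation that maximin is a parameter-bounded disjunction of integer-linear-program-implementable predicates; once that is in hand, the passage from ``one ILP'' to ``an $f(m)$-sized family of ILPs'' is the main point, and it is routine, since FPT is closed under a parameter-bounded disjunction of Lenstra-based feasibility tests. I expect no real obstacle beyond checking that the substitution $\ell_j = \sum_i bv_{i,j}$ interacts correctly with the inequalities of $M$, which it does because $M$ depends on the vote profile only through the sums $\sum_{o_k \in O(c_i,c_j)} n_k$, and these remain linear in the $bv_{i,j}$ after the substitution.
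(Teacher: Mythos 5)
Your proposal is correct and follows essentially the same route as the paper: guess, for each candidate, the opponent witnessing its maximin score, thereby replacing the single winner constraint $\win_\electionsystem$ in the proof of Theorem~\ref{thm:fpt} with a disjunction over the $O(m^m)$ systems $M(p,k_1,\ldots,k_m)$, test each resulting integer linear program with Lenstra's algorithm, and handle the destructive case via strong voicedness and the nonunique-winner relaxation as in Corollary~\ref{cor:fpt-destructive}. This matches the paper's argument in both the key observation (maximin as a parameter-bounded disjunction of ILP-implementable predicates) and the mechanics.
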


The above construction for the winner problem in maximin can be viewed
as, in effect, a disjunction of a set of integer linear programs. Such
constructions for the winner problem have already been obtained for
Kemeny in~\cite{fal-hem-hem:j:bribery} and for
Copeland in~\cite{fal-hem-hem-rot:j:llull}. Thus
we have the following theorem.

\begin{cor}
With number of candidates as the parameter,
 constructive AC+DC+AV+DV+BV
  control and destructive AC+DC+AV+DV+BV control are in FPT for
  Kemeny and, 
for each 
 each rational $\alpha$, 
$0 \leq \alpha \leq 1$, 
 for Copeland$^\alpha$.
\end{cor}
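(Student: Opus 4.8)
The plan is to reduce everything to the machinery already built for Theorem~\ref{thm:fpt} and Corollary~\ref{cor:fpt-destructive}, following exactly the maximin corollary above: even though Kemeny and Copeland$^\alpha$ need not be integer-linear-program implementable in the single-conjunction sense of Definition~\ref{def:ilp-imp}, their (unique- and nonunique-)winner predicates can be written as a \emph{disjunction} of integer linear programs, where the number of disjuncts is bounded by a function of the number of candidates alone and each disjunct is a conjunction of at most polynomially-many-in-$m!$ linear inequalities in the variables $n_1, \ldots, n_{m!}$. Given such a disjunctive formulation, I would replace the single program $P(K)$ in the proof of Theorem~\ref{thm:fpt} by the family of programs obtained by conjoining the add/delete/bribe constraints (constraints~1--4 of that proof, which are untouched) with each disjunct in turn, and accept if any program in the family, over all subsets $K \subseteq C \cup A$ containing $p$ and all disjuncts, is feasible.

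The only real work is to exhibit these disjunctive formulations; for both systems this has essentially been done in prior work, so I would cite it and sketch the idea. For Copeland$^\alpha$ the winner set is determined entirely by the pairwise head-to-head outcomes, so I would \emph{guess} the outcome pattern: for each unordered pair $\{c_i,c_j\}$ guess whether $c_i$ beats $c_j$, they tie, or $c_j$ beats $c_i$. There are at most $3^{\binom{m}{2}}$ such patterns, a function of $m$ alone. For a fixed pattern the Copeland$^\alpha$ score of every candidate is a fixed rational number, so ``$c_j$ is the unique winner'' becomes a finite conjunction of strict linear inequalities among these constants, while consistency of the guessed pattern with the vote counts is captured by linear inequalities of the form $\sum_{o_k \in O(c_i,c_j)} n_k \,\{>,=,<\}\, \sum_{o_k \in O(c_j,c_i)} n_k$, exactly as in the maximin construction above. (For rational $\alpha = a/b$ one clears denominators so that all coefficients are integral.) This disjunction over patterns is implicit in Faliszewski et al.~\cite{fal-hem-hem-rot:j:llull}. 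For Kemeny, the winner is the top candidate of a ranking minimizing total Kendall-tau disagreement; here I would guess the optimal consensus ranking (one of $m!$ orders) and write linear inequalities asserting that its total disagreement---a linear function of the $n_i$---is no larger than that of every other ranking and that $c_j$ is its top element, giving a disjunction over $m!$ cases; this is the construction of~\cite{fal-hem-hem:j:bribery}.

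Finally, for the destructive direction I would invoke Corollary~\ref{cor:fpt-destructive}'s reasoning: Kemeny and Copeland$^\alpha$ are strongly voiced (a nonempty candidate set always has at least one winner), so preventing the despised candidate from being a unique winner is equivalent to making some other candidate a possibly-nonunique winner; the nonunique-winner variants of the above formulations (obtained by the obvious relaxation of the strict inequalities, exactly as noted for maximin) then feed into the same family-of-programs algorithm. For the running-time bound, the number of feasibility tests is at most $2^{m-1}$ (subsets $K$) times the number of disjuncts (either $m!$ for Kemeny or $3^{\binom{m}{2}}$ for Copeland$^\alpha$), a function of $m$ alone, and each individual test is solved in FPT by Lenstra's algorithm~\cite{len:j:integer-fixed} just as in Theorem~\ref{thm:fpt}; hence the whole procedure runs in FPT for the parameter number of candidates. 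The main obstacle, and the step deserving the most care, is verifying that these disjunctive integer-linear-program formulations of the Kemeny and Copeland$^\alpha$ winner predicates are correct and have disjunct-count bounded purely by $m$---but since this amounts to re-deriving the winner-problem encodings already present in~\cite{fal-hem-hem:j:bribery,fal-hem-hem-rot:j:llull}, I expect it to be routine rather than genuinely hard.
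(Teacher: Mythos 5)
Your proposal is correct and follows essentially the same route as the paper: the paper likewise observes that the maximin construction is ``in effect, a disjunction of a set of integer linear programs,'' cites the pre-existing disjunctive winner-problem encodings for Kemeny~\cite{fal-hem-hem:j:bribery} and Copeland~\cite{fal-hem-hem-rot:j:llull}, and plugs them into the machinery of Theorem~\ref{thm:fpt} and Corollary~\ref{cor:fpt-destructive} exactly as you describe, with the same $2^{m-1}\times(\mbox{number of disjuncts})$ accounting and Lenstra feasibility tests. The one spot in your sketch needing the care you already flag is the Kemeny \emph{unique}-winner encoding: the guessed consensus ranking topped by $c_j$ must have disagreement non-strictly at most that of rankings also topped by $c_j$ but \emph{strictly} less than that of rankings topped by any other candidate (otherwise you decide only the nonunique-winner problem), a routine adjustment consistent with your strict/nonstrict remarks.
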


We conclude with an important caveat. The FPT algorithms of this section
are very broad in their coverage, but in practice they 
would be difficult to use as their
running time depends on 
(the fixed-value parameter) $m$ in a very fast-growing way and as Lenstra's
algorithm has a large multiplicative constant in its 
polynomial
running
time. Thus the results of this section should best
be interpreted
as indicating that, for multipronged control in our setting,
it is impossible to prove 
non-FPT-ness 
(and so it clearly is impossible to prove fixed-parameter 
\emph{hardness} in terms of the levels of the 
so-called ``W'' hierarchy of fixed-parameter complexity,
unless that hierarchy collapses to FPT).
If one is interested in
truly practically 
implementing a multipronged control attack, one should probably
devise a
problem-specific
algorithm rather than using our very generally applicable FPT construction.

\section{Conclusions}
\label{sec:conclusion}

We have shown that combining various types of control into multiprong
control attacks is a useful technique. It allows us to study more
realistic control models, to express control vulnerability results and
proofs in a compact way, and to obtain vulnerability results that are
stronger than would be obtained for single prongs alone.

The main finding of our paper is that, to the extent to which we can
draw conclusions from the set of election systems that we have
studied, vulnerabilities to basic control types can often be combined to
form a vulnerability to their multipronged control combination.
(Table~\ref{tab:results} summarizes our results regarding the five
election systems we have focused on in this paper.)
However, we have also seen that there exists a natural election
system that is
vulnerable to both constructive AC control and constructive AV control
but that is resistant to constructive AC+AV control.
We have also
shown that as far as fixed-parameter tractability goes, at least with
respect to the parameter number of candidates, a very broad class of
election systems is vulnerable to the full AC+DC+AV+DV+BV control
attack.   And we have taken a small detour and proven that 
no candidate whose Dodgson score is more than $\|C\|^2$ times the 
Dodgson winner's score can be a maximin winner.

\begin{table}[!tbp]
\begin{center}
\small
\begin{tabular}{|c|c|c|c|c|c|c|c|c|c|c|}
\hline
Control type & \multicolumn{2}{|c|}{plurality} & 
\multicolumn{2}{|c|}{Condorcet} &
\multicolumn{2}{|c|}{Copeland$^{0.5}$} &
\multicolumn{2}{|c|}{approval} &
\multicolumn{2}{|c|}{maximin} \\
\cline{2-11}
& Con. & Des. & Con. & Des.& Con. & Des.& Con. & Des.& Con. & Des.\\
\hline
AC           & R & R & I & V & R & V & I & V & R & V \\
AC$_{\rm u}$ & R & R & I & V & R & V & I & V & V & V \\
DC           & R & R & V & I & R & V & V & I & V & V \\
\hline
AV           & V & V & R & V & R & R & R & V & R & R \\
DV           & V & V & R & V & R & R & R & V & R & R \\
\hline
BV           & V & V & R & V & R & R & R & V & R & R \\
\hline
\end{tabular}
\caption{\label{tab:results} Resistance to basic control types for the
  five main election systems studied in this paper. In the table, I
  means the system is immune to the given control type, R means
  resistance, and V means vulnerability.  As shown in this paper, for
  each of the five election systems, all listed constructive
  vulnerabilities combine and all listed destructive vulnerabilities
  combine. All remaining prongs combine as described by
  Corollary~\ref{combining-prongs}.  Constructive results for AC,
  AC$_{\rm u}$, DC, AV, and DV for plurality and Condorcet are due
  to~\cite{bar-tov-tri:j:control} and their corresponding destructive
  results are due to~\cite{hem-hem-rot:j:destructive-control}.  All
  results for AC ,AC$_{\rm u}$, DC, AV, and DV for approval are due
  to~\cite{hem-hem-rot:j:destructive-control}. All results regarding
  Copeland, are due to~\cite{fal-hem-hem-rot:j:llull}. Constructive
  bribery results for plurality and approval are due
  to~\cite{fal-hem-hem:j:bribery}, and the constructive bribery result
  for Condorcet is implicit in~\cite{fal-hem-hem-rot:j:llull}. All the
  remaining results (i.e., all results regarding maximin, and
  destructive bribery results for plurality, approval, and Condorcet)
  are due to this paper.
}
\end{center}
\end{table}

This paper studies multipronged control where the prongs may include
various standard types of control or bribery. However, it is easy to
see that our framework can be naturally extended to include
manipulation.  To do so, one would have to allow some of the
voters---the manipulators---to have blank preference orders and, if
such voters were to be included in the election, the controlling agent
would have to decide on how to fill them in. It is interesting that in
this model the controlling agent might be able to add manipulative
voters (if there were manipulators among the voters that can be added)
or even choose to delete them (it may seem that deleting manipulators
is never useful but Zuckerman, Procaccia, and
Rosenschein~\cite{pro-ros-zuc:j:borda} give an example where deleting
a manipulator is necessary to make one's favorite candidate a winner
of a Copeland election).

We mention as a natural but 
involved
open direction the study of multipronged control in the setting
where there are multiple controlling agents, each with a different
goal, each controlling a different prong. In such a setting, it is
interesting to consider game-theoretic scenarios as well as
situations in which, for example, one of the controlling agents is
seeking an action that will succeed regardless of the action of the
other attacker.

\section*{Acknowledgments}
Supported in part by NSF grants CCF-0426761, IIS-0713061, and
CCF-0915792, Polish
Ministry of Science and Higher Education grant N-N206-378637,
the Foundation for Polish Science's Homing/Powroty program, AGH University
of Science and Technology grant 11.11.120.865, the ESF's EUROCORES
program LogICCC, and Friedrich Wilhelm Bessel Research Awards to Edith
Hemaspaandra and Lane A. Hemaspaandra.
A preliminary version of this paper appeared in the proceedings of
the 21st International Joint Conference on Artificial Intelligence,
July 2009~\cite{fal-hem-hem:c:multimode}.
We thank Edith Elkind and the anonymous IJCAI referees for helpful comments.

\bibliography{grypiotr2006,gry-multimode-additions}

\end{document}